\documentclass[10pt,usenames,dvipsnames]{article} 
\usepackage[letterpaper, left=1.2truein, right=1.2truein, top = 1truein, bottom = 1truein]{geometry}

\usepackage{enumerate}
\usepackage{amsmath,amssymb}
\usepackage{natbib}
\usepackage{caption}
\usepackage{enumitem}
\usepackage{placeins}
\usepackage[usenames]{color}
\usepackage{bm}
\usepackage{ying}
\usepackage{multirow}
\usepackage{rotating}
\usepackage{fullpage}
\usepackage{authblk}
\usepackage{parskip}
\usepackage{float}
\usepackage{subcaption}
\usepackage{tikz,tabularx}
\usetikzlibrary{patterns}
\usetikzlibrary{arrows}
\usetikzlibrary{tikzmark, positioning, fit, shapes.misc}

\usetikzlibrary{decorations.pathreplacing, calc}
\tikzset{brace/.style={decorate, decoration={brace}},
  brace mirrored/.style={decorate, decoration={brace,mirror}},
}

\newcolumntype{g}{>{\columncolor{red}}c}

\usepackage{graphicx,amssymb}
\usepackage{xcolor}

\usepackage[colorlinks,
            linkcolor=red,
            anchorcolor=blue,
            citecolor=blue
            ]{hyperref}
\usepackage{algorithm}
\usepackage{algorithmic}

\let\widehat\hat

\def \calib {\textrm{calib}}

\def \test {\textrm{test}}

\theoremstyle{plain}

\allowdisplaybreaks
\usepackage{mathtools}
\mathtoolsset{showonlyrefs}

\usepackage{hyperref}
\def\@#1\@{\begin{align}#1\end{align}}
\def\$#1\${\begin{align*}#1\end{align*}}

\usepackage{color-edits}
\addauthor[Ying]{ying}{magenta}

\usepackage{xcolor}

\title{Online Selective Conformal Prediction with Asymmetric Rules: \\A Permutation Test Approach}
\author{Mingyi Zheng ~and~ Ying Jin\thanks{Department of Statistics and Data Science, University of Pennsylvania. Email: \url{yjinstat@wharton.upenn.edu}. 
Reproduction code for experimental results in the paper can be found in \url{https://github.com/mingyi811/PEMI}.}}
\date{}

\begin{document}

\maketitle

\begin{abstract}     
Selective conformal prediction aims to construct prediction sets with valid coverage for a test unit conditional on it being selected by a data-driven mechanism. While existing methods in the offline setting handle any selection mechanism that is permutation invariant to the labeled data, their extension to the online setting---where data arrives sequentially and later decisions depend on earlier ones---is challenged by the fact that the selection mechanism is naturally asymmetric. As such, existing methods only address a limited collection of selection mechanisms.

In this paper, we propose PErmutation-based Mondrian Conformal Inference (PEMI), a general permutation-based framework for selective conformal prediction with arbitrary asymmetric selection rules. Motivated by full and Mondrian conformal prediction, PEMI identifies all permutations of the observed data (or a Monte-Carlo subset thereof) that lead to the same selection event, and calibrates a prediction set using conformity scores over this selection-preserving reference set. Under standard exchangeability conditions, our prediction sets achieve finite-sample exact selection-conditional coverage for any asymmetric selection mechanism and any prediction model. 
PEMI naturally incorporates additional offline labeled data, extends to selection mechanisms with multiple test samples, and achieves FCR control with fine-grained selection taxonomies. 
We further work out several efficient instantiations for commonly-used online selection rules, including covariate-based rules, conformal p/e-values-based procedures, and selection based on earlier outcomes. 
Finally, we demonstrate the efficacy of our methods across various selection rules on a real drug discovery dataset and investigate their performance via  simulations. 
\end{abstract}
% \keywords{}
% \fontsize{10}{12}\selectfont

%!tex root = main.tex

\section{Introduction}
\label{sec:intro}

Conformal prediction is a general framework for quantifying the uncertainty of black-box prediction models. 
Given a set of labeled (calibration) data, it builds a prediction set  that covers the unknown label of a new (test) sample with a prescribed probability, assuming the test sample is exchangeable with the labeled data~\citep{vovk2005algorithmic}. In many scenarios, however, practitioners may be interested in only a subset of test points, and the decision  to issue a prediction set is itself data-driven. 
For example, a scientist may examine a prediction set for the unknown activity of a drug candidate when its predicted activity is high~\citep{svensson2018conformal}; 
a clinician may seek uncertainty quantification for a medical diagnosis when multiple deep learning models disagree~\citep{loftus2022uncertainty}. 
Such selective issuance breaks the validity of standard conformal prediction, which no longer guarantees coverage for the selected instances~\citep{jin2024confidence}. 

This motivates the problem of \emph{selective conformal prediction}: given any data-driven process that decides whether to query a prediction set, how to construct a prediction set  with valid coverage for the queried sample?  
Formally, assume access to labeled (calibration) data $\{Z_i\}_{i=1}^{t-1}$, where $Z_i=(X_i,Y_i)$ with features $X_i\in \cX$ and labels $Y_i\in \cY$, and a new test unit with observed features $X_t\in \cX$ and  unknown label $Y_t\in \cY$.  
A selection mechanism is a fixed mapping $\cS_t\colon (\cX\times\cY)^{t-1}\times \cX  \to \{0,1\}$ using all the available data to determine whether to issue a prediction set  via the indicator $  {S}_t:=\cS_t(Z_1,\dots,Z_{t-1} ,X_t)$. 
Given a confidence level $\alpha\in (0,1)$, the goal is to construct a prediction set $\hat{\cC}_{\alpha,t}\subseteq \cY$ obeying
\@\label{eq:scc}
\PP ( Y_{t}\in \hat{\cC}_{\alpha,t} \given {S}_t= 1 \,) \geq 1-\alpha, 
\@
where the (conditional) probability is over the joint distribution of all the data points. Following the literature~\citep{jin2024confidence}, we refer to~\eqref{eq:scc} as selection-conditional coverage (SCC).
The only assumption throughout this paper is exchangeability:
\begin{assumption}
    The sequence $(Z_1,\dots,Z_t)$ is exchangeable for every $t\in \NN^+$. 
\end{assumption}

For clarity, we first focus on the online setting with a single test point arriving at time $t$, which has attracted growing interest in selective conformal prediction~\citep{bao2024cap,sale2025online,humbert2025online}. 
We later show that our framework extends to (i) incorporating additional offline data (Section~\ref{subsec:offline_data}) (ii) addressing asymmetric selection rules in the standard offline setting with multiple test samples (Section~\ref{subsec:generalize_multiple_test}), and (iii) achieving control of the false coverage rate (FCR) for certain classes of selection rules (Section~\ref{subsec:generalize_FCR}), all strictly generalizing existing results in the literature.

The selective conformal prediction problem was first studied in the offline setting where the selection mechanism picks a subset of multiple test points~\citep{bao2024selective,jin2024confidence,gazin2025selecting}. 
Prior work recognizes the key challenge that, given selective issuance (i.e., conditional on ${S}_t=1$ as in~\eqref{eq:scc}), the exchangeability between labeled and unlabeled data no longer holds, thus breaking the validity of standard conformal prediction. To address this, existing methods typically construct a ``reference set'' of labeled data that remain exchangeable with the test point conditional on selection, and calibrate the prediction set with this subset only. 
This often relies on certain ``swapping'' technique, which is formally developed in the Joint MOndrian Conformal Inference (JOMI) method of~\cite{jin2024confidence}. Specifically, the reference set is defined as the subset of labeled data which, when \emph{swapped} with a selected test point, would lead to the same selection event. JOMI achieves~\eqref{eq:scc} for any selection rule that is permutation invariant to $\{Z_i\}_{i=1}^{t-1}$.

In the online setting, a fundamental complication is that the selection rule $\cS_t$ is inherently \emph{asymmetric}: the order of the (labeled) calibration data affects the selection trajectory and thus the decision to select at time $t$. 
This breaks the only symmetry assumption needed in the earlier offline methods. To address this challenge, existing solutions   that adapt such a ``swapping'' strategy to online, asymmetric selection rules need to place strong constraints on both the selection rule and how the subset of ``exchangeable'' calibration data is formed~\citep{bao2024cap,sale2025online}. 
As a result, they (i) cover only a narrow class of ``decision-driven'' rules, namely, the decision at time $t$ depends only on past decisions $\{S_i\}_{i=1}^{t-1}$ and current $X_t$, and (ii) can produce vacuous or overly conservative prediction sets.

\subsection{Our approach: selective inference over permutations}

In this paper, we propose \underline{PE}rmutation-based \underline{M}ondrian Conformal \underline{I}nference (PEMI), a general framework for achieving SCC in selective conformal prediction with arbitrary asymmetric selection rules. 
Our key idea is to perform selective inference over \emph{permutations} of the data, rather than over individual calibration data points through pairwise swapping. 
Specifically, for 
a collection $\Pi$ of permutations of the observed data $(Z_1,\dots,Z_t)$, we identify the subset of permutations, called a reference set, that lead to the same selection event at time step $t$.  
We then calibrate the prediction set $\hat{C}_{\alpha,t}$ in the usual way of conformal prediction: a hypothesized label value $y\in \cY$ is included in $\hat{C}_{\alpha,t}$ if and only if the resulting conformity score stays within the normal range defined by the scores under the \emph{selected} permutations. See Figure~\ref{fig:intro} for a visualization.

\begin{figure}
    \centering
    \includegraphics[width=0.8\linewidth]{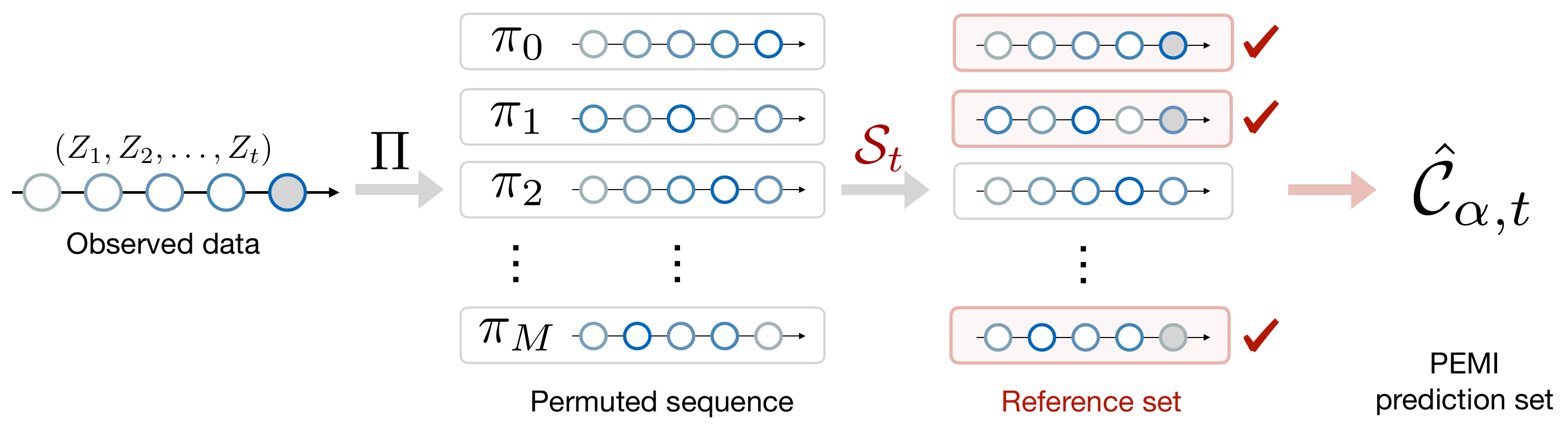}
    \caption{An illustration of the proposed PEMI method. Starting from a set of permutations $\Pi$ of the observed data, we identify the subset which leads to the same selection event after permutation. The final prediction set is then calibrated based on the subset of permutations.}
    \label{fig:intro}
\end{figure}

In Section~\ref{subsec:theory}, we show that two choices of $\Pi$ lead to finite-sample SCC  of PEMI prediction sets: one is when $\Pi$ is  the entire set of permutations over $\{1,\dots,t\}$, and the other is when $\Pi$ is a Monte-Carlo sample, i.e., a random subset from all permutations. 
In both cases, the underlying rationale of PEMI connects to the conditional inference approach in selective inference~\citep{Lee_2016,taylor2016postselectioninferencel1penalizedlikelihood,markovic2017unifying,reid2017post,tibshirani2018uniform}, where inference is based on the distribution of a test statistic conditional on a selection event.  
Here, the underlying random object is the \emph{random} permutation that leads to the observed data. 
Formally, let a ``bag'' of unordered data $[z_1,\dots,z_t]$ denote the unordered set of realized data values, where each $z_i\in \cX\times\cY$ is a fixed value. Under exchangeability, the permutation $\hat\pi$ such that $(Z_1,\dots,Z_t) = (z_{\hat\pi(1)},\dots,z_{\hat\pi(t)})$ follows a uniform distribution over the set of all permutations. 
When $\Pi$ is the set of all permutations, we leverage the fact that $\hat\pi$ follows a uniform distribution over the reference set conditional on the selection event. 
When $\Pi$ is a randomly sampled subset,  the coverage guarantee follows from the exchangeability of $\hat\pi$ with other randomly sampled permutations.

Our key technique can thus be viewed as generalizing the swapping idea in JOMI~\citep{jin2024confidence} to the level of permutations. 
In particular, the JOMI method seeks all the labeled data $Z_i=(X_i,Y_i)$ that, when posited as a test point (i.e., swapped with $Z_t$), would lead to the same selection event. 
Indeed, swapping $(Z_i,Z_t)$ is a special permutation of $(Z_1,\dots,Z_t)$, and the swapping is all that matters when the selection rule is symmetric (Remark~\ref{rm:connection_jomi}). 
However, when the order of labeled data matters, we must reason at the level of permutations to restore exchangeability.
Such an idea builds upon the perspective of full conformal prediction (FCP) as a permutation test; see Section~\ref{sec: warm up} for a detailed discussion.  

\subsection{Preview of results}

The PEMI framework is detailed in Section~\ref{sec:general_method}, where   we develop our main methods and theory for online selection rules concerning one single unlabeled data. Then, we discuss natural extensions to  (i) incorporate offline data, (ii) address asymmetric selection rules for multiple unlabeled data, and (iii) achieve false coverage rate control for certain selection rules. All of these results strictly generalize the existing literature.

Our general framework necessitates computing the reference set of permutations with every imputed label $y\in \cY$ for the $t$-th data point, which can be computationally intractable for continuously-valued responses. 
In Section~\ref{sec:instantiation}, we work out computationally efficient instantiations of PEMI for a variety of online or asymmetric selection rules, including:
\vspace{0.25em}
\begin{enumerate}[label=(\roman*)]
    \item Covariate-based rules, where $\cS_t$ only involves the covariates $\{X_i\}_{i=1}^t$. This covers decision-driven selection rules studied in earlier work~\citep{bao2024cap,sale2025online}, as well as selection based on weighted quantile/average of covariate-based scores, and any black-box optimization programs.
    \item Online multiple testing procedures based on conformal p/e-values  such as~\cite{xu2023onlinemultipletestingevalues}. 
    \item Selection based on weighted quantiles of earlier labels. 
\end{enumerate}
\vspace{0.25em}

In Section~\ref{sec:real}, we demonstrate the validity and efficiency of PEMI for a wide range of practical online selection rules via a drug discovery dataset, where PEMI yields finite-size prediction sets with a handful of labeled data, and achieves valid selection-conditional coverage across all settings. 
In Section~\ref{sec:simu}, we further investigate the behavior of PEMI through simulations and analyze the impact of several factors in the selection rule and conformity score on its performance when compared with vanilla conformal prediction.

%!tex root = main.tex

\subsection{Related work}

This work adds to the literature of selective conformal prediction~\citep{bao2024selective,jin2024confidence,gazin2025selecting,bao2024cap,sale2025online}. 
In the offline setting, the closest to our work is~\citet{jin2024confidence} whose method achieves selection-conditional coverage yet only works for symmetric selection rules.  
Our method strictly generalizes JOMI to asymmetric selection rules, extending the construction of reference sets from the data points to the permutations, and exploiting exchangeability  to achieve selection-conditional coverage. We also show that PEMI reduces to JOMI under symmetric selection rules (Remark~\ref{rm:connection_jomi}).

Many existing methods on selective conformal prediction in the online setting, including  \citet{bao2024cap} and \citet{sale2025online}, rely on swap-based mechanisms to find a ``reference set'' conceptually similar to~\cite{jin2024confidence}. 
Specifically, \citet{bao2024cap} proposes the CAP procedure to construct prediction sets based on an adaptively selected subset of labeled data, but their method only applies to the so-called decision-driven (i.e., $\cS_t$ only involves $\{S_i\}_{i<t}$ and $X_t$) and \emph{symmetric} covariate-dependent selection rules. 
% In contrast, our  framework accommodates arbitrary selection rules that can involve any information observed. 
The recent EXPRESS method~\citep{sale2025online} improves CAP to achieve exact SCC, yet it only applies to the same family of decision-driven selection rules. 
In addition, to ensure validity, EXPRESS imposes stringent requirements when selecting the reference set (of data points), which may result in vacuous prediction sets of infinite length. In contrast, we observe empirically that PEMI reduces the frequency of vacuous prediction sets, thereby improving practicality while expanding the scope of selection rules addressed. 
Finally, an alternative approach is to  extend the adaptive conformal inference (ACI) ideas~\citep{gibbs2021adaptive}  to selected samples, which updates the conformity score threshold based on past performance at selection time points to provide asymptotic SCC and FDR control, or a stronger instantaneous error rate~\citep{bao2024cap,humbert2025online}. While these methods provide asymptotic SCC and FCR control, we target finite-sample exact selection-conditional coverage and leverage the exchangeability structure in the data. 
Our experiments (Section~\ref{sec:real}) show the performance of these methods may be sensitive to hyperparamter choice and subjective to approximation errors, and our method stands out as a robust solution under exchangeability.

Besides SCC, another goal in selective conformal prediction is to control the false coverage rate (FCR)~\citep{weinstein2019onlinecontrolfalsecoverage,xu2023onlinemultipletestingevalues,humbert2025online}. 
The FCR control is addressed in the offline setting for symmetric selection rules in~\cite{jin2024confidence} through a more fine-grained selection taxonomy. 
In the online setting, \cite{bao2024cap} and \cite{humbert2025online} establishes certain asymptotic FCR control, while \cite{sale2025online} shows EXPRESS achieves FCR control for decision-driven selection rules. 
As we show later, our framework can be generalized to achieve FCR control under similar conditions.
Compared with FCR-controlling methods in classical selective inference~\citep{benjamini2005false,weinstein2019onlinecontrolfalsecoverage,xu2024post}, as we address the predictive inference problem, the setup and the methodology are significantly different.

Our work is also closely related to the literature on conditional  inference in classical post-selection inference (POSI)~\citep{Lee_2016,taylor2016postselectioninferencel1penalizedlikelihood,markovic2017unifying,reid2017post,tibshirani2018uniform}. In particular, many existing methods focus on constructing confidence intervals for selected parameters conditional on selection events. Our method is concerned with constructing prediction sets rather than confidence sets for model parameters, and achieves selection-conditional coverage by exploiting the exchangeability structure inherent in the data, instead of distribution of test statistics. Consequently, both the setting and the methodology in our work are fundamentally different from those in POSI.

%!tex root = main.tex
\vspace{-0.5em}
\section{General method}
\label{sec:general_method}

\subsection{Warm-up: full conformal prediction as permutation test}
\label{sec: warm up}

We begin by reviewing the full conformal prediction (FCP) procedure~\citep{vovk2005algorithmic} and interpreting it as a permutation test. This perspective is not entirely new, but helps to warm up our approach~\citep{angelopoulos2025theoreticalfoundationsconformalprediction}. It also offers an extension of FCP to asymmetric conformity scores.

To align with our online notations, we let $n=t-1$ and write $\mathcal D_n = \{(X_i, Y_i)\}_{i=1}^n$ as the labeled data at time $t$. The augmented data is denoted as $\mathcal D_{n+1}^y = (Z_1,\dots,Z_n, Z_{n+1}^y)$ where $Z_i=(X_i,Y_i)$ and $Z_{n+1}^y=(X_{n+1},y)$ for a hypothesized value $y\in \cY$. 
A conformity score is a function $\cV$ that maps a data point $(x,y)$ and a dataset $\cD$ to $\cV((x,y);\cD)\in \RR$. FCP assumes $\cV$ is \emph{symmetric}, that is,  $\cV((x,y);\cD) = \cV((x,y);\cD_\pi)$ for any permutation $\pi$ of the elements in $\cD$ (viewing $\cD$ as an ordered sequence).

For every $i \in [n]$, we define $V_i^y = \mathcal{V}\big((X_i, Y_i); \mathcal D_{n+1}^y\}\big)$, and  $V_{n+1}^y = \mathcal{V}\big((X_{n+1}, y); \mathcal D_{n+1}^y\big).$ Then, the FCP set is given by inverting a conformal p-value:
\begin{equation*}
  \hat{\mathcal C}_{\alpha, n+1}=\{y \in \mathcal{Y}:p_{\text{FCP}}^y > \alpha\},\quad \text{where}\quad   p_{\text{FCP}}^y:=\frac{1+\sum_{i=1}^n \ind\{V_i^y \geq V_{n+1}^y\}}{n+1}.
\end{equation*} 

Intuitively, FCP can be interpreted as imputing a hypothesized value $y$ for $Y_{n+1}$, and assessing whether $(X_{n+1}, y)$ conforms to the observed data $\{(X_i, Y_i)\}_{i=1}^n$. This perspective is closely aligned with the core idea behind the permutation test~\citep{angelopoulos2025theoreticalfoundationsconformalprediction}, which we state next.

Given a test statistic $T\colon \cZ^{n+1}\to \RR$ and (ordered) data $(Z_1,\dots,Z_{n+1})$, the permutation test p-value is  
\begin{equation*}
    p_{\text{perm}}=\frac{\sum_{\pi \in \Pi_{n+1}}\ind\{T(Z_{\pi(1)}, \dots, Z_{\pi(n+1)})\geq T(Z_1, \dots, Z_{n+1}) \}}{(n+1)!},
\end{equation*}
where $\Pi_{n+1}$ is the set of all permutations on $\{1, \dots, n+1\}$. 
Now, consider the test statistic $T(z_1,\dots,z_{n+1})=\cV(z_{n+1};(z_1,\dots,z_{n+1}))$ for any $(z_1,\dots,z_{n+1})\in \cZ^{n+1}$. Due to the symmetry  of $\cV$, for any $\pi\in\Pi$,
\begin{equation*}
    T(Z_{\pi(1)}, \dots, Z_{\pi(n+1)})=\mathcal{V}\big(Z_{\pi(n+1)}; (Z_{\pi(1)}, \dots, Z_{\pi(n+1)}) \big)=\mathcal{V}\big(Z_{\pi(n+1)}; \mathcal D_{n+1}\}\big)=V_{\pi(n+1)}^{Y_{n+1}}.
\end{equation*}
The permutation p-value using this specific statistic $T$ then reduces to 
\$
    p_{\text{perm}}=\frac{\sum_{\pi \in \Pi_{n+1}}\ind\{V_{\pi(n+1)}^{Y_{n+1}}\geq V_{n+1}^{Y_{n+1}} \}}{(n+1)!} &=\frac{\sum_{i=1}^{n+1}n! \cdot \ind\{V_i^{Y_{n+1}} \geq V_{n+1}^{Y_{n+1}}\}}{(n+1)!} \\ 
    &=\frac{1+\sum_{i=1}^n \ind\{V_i^{Y_{n+1}} \geq V_{n+1}^{Y_{n+1}}\}}{n+1}=p_{\text{FCP}}^{Y_{n+1}}.
\$
In other words, the permutation p-value under $T(\cdot)$ is equivalent to the FCP p-value with the ground-truth $Y_{n+1}$ imputed. Thus,   $p_{\text{FCP}}^y$ can be viewed as a permutation test for the exchangeability among $(Z_1,\dots,Z_n,(Z_{n+1},y))$ to determine whether each $y\in\cY$ should be included in the prediction set.

\vspace{-0.25em}
\paragraph{Permutation FCP with asymmetric score.} The same permutation‑test construction remains sensible for non-symmetric conformity scores.
Formally, consider any conformity score $\cV((x,y);\cD)$ that can be sensitive to data ordering in $\cD$.
For any hypothesized value $y\in \cY$, we denote $\cD_{n+1}^y = (Z_1,\dots,Z_n,Z_{n+1}^y)$ and its $i$-th element as $Z_i^y$, as well as $V_i^y = \cV(Z_i^y;\cD_{n+1}^y)$. 
For any permutation $\pi\in\Pi_{n+1}$, we denote $V_{\pi,n+1}^{y}:= \cV(Z_{\pi(n+1)}^y;\cD_{\pi}^y)$, where $\cD_\pi^y = (Z_{\pi(1)}^y,\dots,Z_{\pi(n+1)}^y)$. Then, the above arguments imply 
\@\label{eq:fcp_perm_asym}
\hat\cC_{\alpha,n+1}^{\text{perm}} := \{y\in \cY\colon p_{\text{perm}}^y >\alpha\},\quad \text{where} \quad 
p_{\text{perm}}^y = \frac{\sum_{\pi \in \Pi_{n+1}}\ind\{V_{\pi,n+1}^{y}\geq V_{n+1}^{y} \}}{(n+1)!}
\@
yields finite-sample coverage: $\PP(Y_{n+1}\in \hat\cC_{\alpha,n+1}^{\text{perm}})\geq 1-\alpha$ under exchangeability. Unlike the symmetric case, however, $p_{\text{perm}}^y$ does not necessarily simplify to a rank‑based expression, so this formulation is best viewed as a conceptual device.
This observation motivates PEMI: we ``lift up'' the unit of comparison from individual data to permutations, but now restricted to those permutations that preserve the selection decision.

\subsection{PEMI: General procedure}
\label{sec: procedure}

We are now ready to introduce the PEMI framework, which can be viewed as a permutation test with a restricted reference class. From now on, we return to the notations with time point $t\in \NN^+$.   At time 
$t$, we consider a collection $\Pi_t^\star $ of permutations over $\{1,\dots,t\}$. For a candidate label $y\in \cY$, we perform a permutation-test p-value over a subset of  permutations in $\Pi_t^\star$ that would lead to the same selection decision, and invert it to obtain the prediction set.
% We begin by the version using the set of all permutations over $\{1,\dots,t\}$ which resembles the permutation test in Section~\ref{subsubsec:full_pemi}, followed by a computationally feasible version with randomly sampled permutations in~\ref{sec: random perm}. 

\vspace{-0.5em}
\paragraph{Choice of $\Pi_t^\star$.} 
Following the permutation test perspective, a natural idea is to set $\Pi_t^\star = \Pi_t$, the set of all permutations over $\{1,\dots,t\}$. %We formalize this ``full reference set'' version in Appendix XX. 
For computational feasibility, we focus here on a randomized version in which $\Pi_t^\star$ is an i.i.d.~sample of $M\in \NN^+$ permutations from $\textnormal{Unif}(\Pi_t)$, denoted as 
\$
\Pi_t^{(M)}=\{\pi^{(1)}, \ldots, \pi^{(M)}\},
\$  
and $M\in \NN^+$ is any pre-fixed positive integer. 
Importantly, PEMI's guarantee holds for any fixed $M$.

Throughout Sections~\ref{sec: procedure} and~\ref{subsec:theory}, the selection rule is any mapping $\cS_t\colon \cZ^{t-1}\times \cX\to \{0,1\}$, and the prediction set relies on a user-specified conformity score $\cV_t\colon \cZ^{t}\to \RR$, both sensitive to the ordering of the data in the arguments. For example, one may simply set $\cV_t(Z_1,\dots,Z_t) = |Y_t-\hat\mu(X_t)|$ for a pre-trained prediction model $\hat\mu\colon \cX\to \cY$ and $\cY=\RR$. 
Let $Z_t^y = (X_t, y)$ denote the augmented data for any $y\in \cY$. 
% In the computationally efficient instantiations of PEMI in Section~\ref{sec:instantiation}, we shall focus on conformity scores that is a function of the $t$-th data point only. 

% \subsubsection{PEMI with full permutation reference set}
% \label{subsubsec:full_pemi}

% For generality, here we consider any conformity score $\cV_t \colon \cZ^t \to \RR$ as any function of $t$ (ordered) data points.
% 
% Following the permutation test perspective, we let $\Pi_t^\star$ be a set of permutations over $\{1,\dots,t\}$, which will be specified soon. 
% 
The first step of PEMI is to identify those $\pi \in \Pi_t^\star$ that preserve the selection decisions, {i.e.}, permuting the data leads to the same selection event under $\cS_t$. 
Formally, for any $\pi \in \Pi^\star$, we define the permuted data 
\@\label{eq:permuted_data_t}
    \mathcal{D}_{\pi,t}^y = \left( Z_{\pi(1)}^y, Z_{\pi(2)}^y, \ldots, Z_{\pi(t-1)}^y, X_{\pi(t)} \right),
\@
where for $j \in [t-1]$, we define the imputed data point after permutation 
\begin{equation*}
    Z_{\pi(j)}^y =
\begin{cases}
Z_t^y, &  \pi(j) = t, \\
Z_{\pi(j)}, & \pi(j) \neq t.
\end{cases}
\end{equation*}
Here we slightly override the earlier notation for the augmented dataset, so that now $\cD_{\pi,t}^y$ records the information used to decide the selective issuance at time point $t$ (after permutation).

For any permutation $\pi\in\Pi^\star_t$, we define the selection decision for the permuted data as $S_t^y(\pi)=\cS_t(\cD_{\pi,t}^y)$.  
The reference set of permutations is then set as
\@\label{eq:full_ref_set}
\widehat{R}_t(y; \Pi_t^\star) = \{\pi \in \Pi_t^\star :  S_t^y(\pi) = 1\} \cup \{\pi_0\},
\@
where  
$\pi_0$ denotes the identity map.   
In addition, recall the conformity score for permuted data $V_t^y(\pi)=\mathcal V_t(Z_{\pi(1)}^y, Z_{\pi(2)}^y, \ldots, Z_{\pi(t)}^y)$. 
Finally, the PEMI prediction set is constructed as 
\begin{equation}\label{eq: deter interval}
    \widehat{\mathcal{C}}_{\alpha,t}(\Pi_t^\star)=\big\{y \in \mathcal{Y}:p_t (y; \Pi_t^\star) > \alpha \big\},\quad \text{where}\quad p_t (y; \Pi_t^\star)=\frac{\sum_{\pi \in \widehat{R}_t(y; \Pi_t^\star)}\mathds{1}\{V_t^y(\pi_0) \leq V_t^y(\pi)\}}{\vert \widehat{R}_t(y; \Pi_t^\star) \vert}.
\end{equation}
The p-value in~\eqref{eq: deter interval} can be viewed as a permutation test~\eqref{eq:fcp_perm_asym} with the selected reference set of permutations. As we shall see in our theory in Section~\ref{subsec:theory}, with a proper choice of $\Pi^\star_t$, this adjustment restores the validity of the permutation test conditional on the selection event, thereby leading to selection-conditional validity: this holds  even though the conformity score $\cV_t$ and the selection rule $\cS_t$ are both sensitive to data ordering.

To achieve exact coverage, one may compute the randomized PEMI prediction set. Given i.i.d.~random variables $U_1, \dots, U_t\sim$ Unif$([0,1])$ to break the ties, we set  $\widehat{\mathcal{C}}_{\alpha,t}^{\text{rand}}(\Pi_t^\star)=\{y \in \mathcal{Y}:p_t^\text{rand}(y; \Pi_t^\star) > \alpha\}$, with 
\begin{equation}\label{eq: rand randomized pvalue}
     p_t^\text{rand}(y; \Pi_t^\star) =\frac{\sum_{\pi \in \hat{R}_t(y; \Pi_t^\star)}\mathds{1}\{V_t^y(\pi_0) < V_t^y(\pi)\}  +U_t \cdot \sum_{\pi \in \hat{R}_t(y; \Pi_t^\star)}\mathds{1}\{V_t^y(\pi_0)= V_t^y(\pi)\}  }{|\hat{R}_t(y; \Pi_t^\star)|}.
\end{equation}

\begin{remark}\label{rm:connection_jomi}
Our method, especially the ``full reference set'' version with $\Pi_t^\star = \Pi_t$, is a strict generalization of JOMI~\citep{jin2024confidence}. To see this, we fix a time point $t$, and let $\cS(\cdot)$ be any symmetric selection rule to decide the selection of the $t$-th data point, which  is permutation invariant to the first $t-1$ data points, and a conformity score $V\colon \cZ\to \RR$ that only involves the test point. We again denote the permuted data as in~\eqref{eq:permuted_data_t}.
The JOMI framework identifies for every $y\in \cY$ a subset $\hat{R}_{\text{JOMI}}(y)\subseteq\{1,\dots,t-1\}$ which consists of data points $i\in [t-1]$ such that, when swapping $Z_i$ with $(X_{n+1},y)$, would lead to the same selection event. The prediction set is then defined by $\hat{\cC}_{\alpha,t}^{\text{JOMI}}=\{y\colon V(X_{n+1},y)\leq \text{Quantile}(1-\alpha;\{V(X_i,Y_i)\}_{i\in \hat{R}_{\text{JOMI}}(y)}\cup\{+\infty\})\}$. Indeed, with the symmetric selection function $\cS_t=\cS$, one can show that the PEMI reference set in~\eqref{eq:full_ref_set} is  given by  $\hat{R}_t(y)=\cup_{i\in \hat{R}_{\text{JOMI}}(y) \cup\{t\}}\{\pi\in\Pi\colon \pi(i)=t\}$. With the conformity score $\cV_t(z_1,\dots,z_{t}):=V(z_t)$, the PEMI prediction set thus coincides with the JOMI prediction set. In its full generality, the JOMI framework addresses the offline setting where a subset of multiple test points are selected; we discuss an extension of PEMI to multiple test point in Section~\ref{subsec:generalize_multiple_test}, which again generalizes JOMI in a similar way. 
\end{remark}

\subsection{Theoretical guarantees}
\label{subsec:theory}

Theorem~\ref{thm: general scc} establishes the finite-sample selection-conditional validity of the prediction sets above, without any conditions on the selection mechanism. 

\begin{theorem}\label{thm: general scc}
Suppose the data $\{Z_i\}_{i=1}^t$ are exchangeable. For any selection rule $\cS_t\colon \cZ^{t-1}\times \cX\to \{0,1\}$ and any conformity score $\cV_t\colon \cZ^{t}\to \RR$, and for $\Pi_t^\star \in\{\Pi_t,\Pi_t^{(M)}\}$ stated above, we have 
\@\label{eq:validity_dtm}
\PP\big( Y_t\in \hat{\cC}_{\alpha,t}(\Pi_t^\star) \biggiven S_t=1\big) \geq 1-\alpha,\quad \forall ~ t\geq 1. 
\@
In addition, 
\@\label{eq:validity_rand}
\PP\big( Y_t\in \hat{\cC}_{\alpha,t}^{\textnormal{rand}} (\Pi_t^\star) \biggiven S_t=1\big) = 1-\alpha,\quad \forall ~ t\geq 1. 
\@ 

\end{theorem}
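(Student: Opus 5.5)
Our plan is to condition on the unordered bag of data and treat the permutation $\hat\pi$ that generates the observed ordering as the only source of randomness. Let $[z_1,\dots,z_t]$ be the bag of realized values, and write $(Z_1,\dots,Z_t)=(z_{\hat\pi(1)},\dots,z_{\hat\pi(t)})$. Under exchangeability, conditional on the bag, $\hat\pi\sim\textnormal{Unif}(\Pi_t)$. If ties among the $z_i$ occur, we break them using an auxiliary uniform labeling, which does not affect any of the statistics below. Everything is evaluated at the true label $y=Y_t$, so the $y$-superscripts drop out.

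The key observation is that, for $\sigma\in\Pi_t$, the selection decision $S_t(\sigma)=\cS_t(\cD_{\sigma,t})$ and the score $V_t(\sigma)$ computed from the observed data equal $s(\hat\pi\circ\sigma)$ and $v(\hat\pi\circ\sigma)$. Here $s(\tau):=\cS_t(z_{\tau(1)},\dots,z_{\tau(t-1)},x_{\tau(t)})$ and $v(\tau):=\cV_t(z_{\tau(1)},\dots,z_{\tau(t)})$ are deterministic functions of $\tau$ once the bag is fixed. In particular $S_t=s(\hat\pi)$ and $V_t(\pi_0)=v(\hat\pi)$. Since $Y_t\notin\hat\cC$ iff $p_t(Y_t)\le\alpha$, it suffices to show $\PP(p_t(Y_t)\le\alpha,\, s(\hat\pi)=1\mid\text{bag})\le\alpha\,\PP(s(\hat\pi)=1\mid\text{bag})$, and then integrate over the bag.

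\emph{Case $\Pi_t^\star=\Pi_t$.} Composition $\sigma\mapsto\hat\pi\circ\sigma$ is a bijection of $\Pi_t$. On the event $s(\hat\pi)=1$, the reference set $\hat R_t$ is therefore mapped onto $A:=\{\tau: s(\tau)=1\}$, a fixed set given the bag (the identity $\pi_0$ is already in it). Thus $p_t(Y_t)=|A|^{-1}\sum_{\tau\in A}\ind\{v(\hat\pi)\le v(\tau)\}$. Conditional on $s(\hat\pi)=1$, $\hat\pi$ is uniform on $A$, so this is the standard rank p-value of a uniform draw from a finite multiset of values, which is super-uniform. This gives \eqref{eq:validity_dtm}. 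For the randomized version \eqref{eq: rand randomized pvalue}, the usual tie-breaking argument gives an exactly uniform p-value, which yields \eqref{eq:validity_rand}.

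\emph{Case $\Pi_t^\star=\Pi_t^{(M)}$.} Set $\tau_0=\hat\pi$ and $\tau_m=\hat\pi\circ\pi^{(m)}$. Given the bag, $\tau_0,\dots,\tau_M$ are i.i.d.\ uniform on $\Pi_t$, because the $\pi^{(m)}$ are independent uniform and right-multiplication preserves uniformity. Hence the tuple $(\tau_0,\dots,\tau_M)$ is exchangeable. The p-value equals $\sum_{j\in J}\ind\{v(\tau_0)\le v(\tau_j)\}/|J|$, where $J=\{0\}\cup\{m\ge1: s(\tau_m)=1\}$. Now condition additionally on the unordered multiset $\{\tau_0,\dots,\tau_M\}$ together with the event $s(\tau_0)=1$. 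By exchangeability, $\tau_0$ is then uniformly distributed over the indices $j$ with $s(\tau_j)=1$, and this index set coincides with $J$ on that event. The same rank argument as before then applies, and a tower argument yields \eqref{eq:validity_dtm}; randomized tie-breaking again gives exactness. The minor detail that $\pi_0$ is included even if some $\pi^{(m)}$ equals the identity is handled by counting multiplicities consistently.

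\textbf{Main obstacle.} The delicate step is the Monte Carlo case. One must define the conditioning so that the event $\{S_t=1\}$ and the reference set are both expressed symmetrically in $(\tau_0,\dots,\tau_M)$, with the identity playing the role of $\tau_0$. One must also verify that conditioning on the selection of $\tau_0$ leaves it uniform among the selected $\tau_j$. I would first write the joint law explicitly as $(M+1)$ i.i.d.\ uniform permutations, then check the symmetry by a direct relabeling of indices.
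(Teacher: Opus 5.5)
Your proposal is correct and follows essentially the same route as the paper. You condition on the unordered bag so that $\hat\pi$ is uniform, use the group bijection to identify the reference set with a fixed selected set in the full case, and, in the Monte Carlo case, use the exchangeability (indeed i.i.d.\ uniformity) of $(\hat\pi,\hat\pi\circ\pi^{(1)},\dots,\hat\pi\circ\pi^{(M)})$ and condition on their unordered multiset, with the randomized p-value giving exactness. Your composition order $\hat\pi\circ\sigma$, rather than the paper's $\pi\circ\hat\pi$, is the one consistent with the convention $Z_i=z_{\hat\pi(i)}$, and your explicit handling of ties and multiplicities tidies points the paper leaves implicit.
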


Notably, in the case $\Pi_t^\star = \Pi_t^{(M)}$, the coverage guarantee  holds exactly for any finite number $M\in \NN^+$ instead of relying on asymptotic Monte-Carlo approximation. A large value of $M$ increases the resolution of the p-values yet requires more extensive computation. %In our experiments, we fix $M=1000$. 

We defer the detailed proof to Appendix~\ref{proof: scc} and provide some intuition here. 
Recognizing that 
\$
Y_t \in \hat\cC_{\alpha,t}(\Pi_t^\star)  \quad \Leftrightarrow \quad p_{t}(Y_t; \Pi_t^\star) > \alpha,
\$
the key idea is to show the permutation test with $\hat{R}_t(Y_{t}; \Pi_t^\star)$ is valid conditional on the selection event. The specific techniques slightly differ for the two choices of $\Pi_t^\star$, and we discuss them separately. 

For the validity of the full-reference variant $\widehat{\mathcal{C}}_{\alpha,t}(\Pi_t)$, we prove a stronger result 
\begin{equation*}
    \mathbb{P}\big(p_t(Y_t;\Pi_t) \le \alpha \mid [\mathcal D_t], S_t=1\big)\le \alpha,
\end{equation*}
where we define the unordered bag of full observations $[\mathcal D_t]=[Z_1, \dots, Z_t]$. 
For any realized values of the unordered set as $[d_t] = [z_1, \ldots, z_{t-1}, z_t]$, conditional on $[\mathcal D_t]=[d_t]$,  the only randomness is in the order of $(Z_1, \dots, Z_t)$ as a permutation of $(z_1, \ldots, z_t)$. Due to  exchangeability, one sees that $\hat\pi$, the random permutation such that $(Z_1,\dots,Z_t)=(z_{\hat\pi(1)},\dots,z_{\hat\pi(t)})$, follows a uniform distribution on $\Pi_t$. Then, we show that $\hat\pi\sim \text{Unif}(\hat{R}_t(Y_{t};\Pi_t))$ conditional on $[\cD_t]=[d_t]$ and $S_t=1$ (indeed, our construction ensures that $\hat{R}_t(Y_t;\Pi_t)$ is determined by $[\cD_t]$ only). This leads to (a) in Theorem~\ref{thm: general scc} via the Bayes' rule and tower property.

The validity of $\hat{\cC}_{\alpha,t}(\Pi_t^{(M)})$ and $\hat{\cC}_{\alpha,t}^{\text{rand}}(\Pi_t^{(M)})$ relies on a slightly different structure: the exchangeability between $\hat\pi$ (the permutation such that $(Z_1,\dots,Z_t)=(z_{\hat\pi(1)},\dots,z_{\hat\pi(t)})$) and the randomly sampled $\{\pi^{(m)}\}_{m=1}^M$. 
Due to the independent and uniform sampling of $\{\pi^{(m)}\}_{m=1}^M$ from $\Pi_t$, we are able to show that the permutations $\{ \widehat{\pi},\; \pi^{(1)}\circ \widehat{\pi},\; \ldots,\; \pi^{(M)}\circ \widehat{\pi}\}$ are exchangeable. 
We then leverage this exchangeability and rely on the selection-conditional distribution of $\hat\pi$ among the corresponding reference set of permutations among $\{ \widehat{\pi},\; \pi^{(1)}\circ \widehat{\pi},\; \ldots,\; \pi^{(M)}\circ \widehat{\pi}\}$ to prove their validity in Theorem~\ref{thm: general scc}.  

Having completed the core general framework, the next three subsections are dedicated to several natural generalizations of PEMI. These results help connect to existing methods and show that PEMI strictly generalizes earlier methods. Readers interested in practical implementations of PEMI with concrete asymmetric selection rules can move to Section~\ref{sec:instantiation} without missing key concepts. 

\subsection{Incorporating offline data}
\label{subsec:offline_data}

Several existing works in the online setting incorporate certain offline data, i.e., labeled data that are not involved in any selection decisions~\citep{bao2024cap,sale2025online}. 
While PEMI works without offline data, it handles these settings by simply extending the permutations to the entire data sequence.

Following the literature, we denote the offline data as \(\mathcal{D}_{\mathrm{off}}=\{Z_{-n_{\mathrm{off}}+1},\ldots,Z_{0}\}\), where \(n_{\mathrm{off}}\in \NN^+\) is the number of offline samples. They are collected prior to the online selection procedure.
The online data up to time \(t\) are denoted by \(\mathcal{D}_{t, \mathrm{on}}=\{Z_{1},\ldots, X_{t}\}\) and we denote the complete data as \(\mathcal{D}_{t} = \{Z_{-n_{\mathrm{off}}+1}, \ldots, Z_{t-1}, X_{t}\}\). We assume that data in $\cD_t$ are exchangeable. 
We then define the selection indicator $S_t = \mathcal{S}_t(\mathcal{D}_{t})$,
where $\mathcal{S}_t$ is a fixed mapping $\mathcal{S}_t\colon (\mathcal{X}\times\mathcal{Y})^{n_{\mathrm{off}}+t-1}\times\mathcal{X} 
\to \{0,1\}$. Note that this general setup covers our default setting. Accordingly, in this section, the conformity score $\cV_t\colon (\cX\times\cY)^{n_{\mathrm{off}}+t}\to \RR$ is a pre-specified function that takes the entire data sequence as the arguments.

Here, we extend PEMI via permuting the entire set of online and offline data. For a hypothesized $y\in\cY$ and a permutation $\pi$ of $\{-n_{\mathrm{off}}+1,\dots, t\}$, we denote the permuted dataset $\mathcal{D}_{\pi, t}^y = (Z_{\pi(-n_{\mathrm{off}}+1)}(y), \ldots, X_{\pi(t)})$. 
Similar to before, $Z_{i}(y)=Z_i$ for $i<t$ and $Z_{t}(y)=(X_t,y)$. 
The selection decision under the permutation is $S_t^{y}(\pi)=\cS_t(\mathcal{D}_{\pi, t}^y)$, 
and the conformity score is  $V_t^{y}(\pi)=\cV_t(Z_{\pi(-n_{\mathrm{off}}+1)}(y), \ldots, Z_{\pi(t)}(y))$. Then, we define
\begin{equation}\label{eq: rand dtm offline pvalue}
    p_t^\text{off}(y)=\frac{ \sum_{\pi \in \widehat{R}_t^{\text{off}}(y)}\mathds{1}\{V_t^y(\pi_0) \leq V_t^y(\pi)\}}{ \vert \widehat{R}_t^{\text{off}}(y)\vert},\quad \text{where} \quad \widehat{R}_t^{\text{off}}(y) = \big\{\pi \in \Pi_{t,\text{off}}^{(M)} :  S_t^y(\pi) = 1\big\}\cup\{\pi_0\},
\end{equation}
and $\Pi_{t,\text{off}}^{(M)}=\{\pi^{(1)},\dots,\pi^{(M)}\}$ are permutations  independently and uniformly drawn from the set of permutations over $\{-n_{\mathrm{off}}+1, \dots, t\}$.
Finally, we define our prediction set as  
\begin{equation}\label{eq: rand dtm offline interval}
    \widehat{\mathcal{C}}_{\alpha,t}^{\text{off}}=\{y \in \mathcal{Y}:p_t^\text{off}(y) > \alpha\}.
\end{equation}

\begin{theorem}\label{thm: offline scc}
Suppose that $\{Z_i\}_{i=-n_{\mathrm{off}}+1}^0 \cup \{Z_j\}_{j=1}^t $ are exchangeable. Then $\widehat{\mathcal{C}}_{\alpha,t}^{\text{off}}$ defined in~\eqref{eq: rand dtm offline interval} obeys
\begin{equation}
\label{eq:thm offline dtm}
    \PP( Y_t \in \widehat{\mathcal{C}}_{\alpha,t}^{\text{off}} \given S_t=1) \geq 1-\alpha, \quad \forall ~t\geq 1. 
\end{equation}
\end{theorem}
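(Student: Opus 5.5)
Theorem~\ref{thm: offline scc} is Theorem~\ref{thm: general scc} with $\Pi_t^\star=\Pi_t^{(M)}$, run on a longer index set. The plan is to relabel and repeat the Monte-Carlo argument. Set $N=n_{\mathrm{off}}+t$ and index the sequence $(Z_{-n_{\mathrm{off}}+1},\dots,Z_t)$ by $1,\dots,N$. Then $\cS_t$ is a fixed map $\cZ^{N-1}\times\cX\to\{0,1\}$ and $\cV_t$ is a fixed map $\cZ^N\to\RR$. The offline points are just coordinates that the selection rule may use in any asymmetric way. Under this relabeling $p_t^{\text{off}}(y)$ coincides with $p_t(y;\Pi_N^{(M)})$ in~\eqref{eq: deter interval}, and exchangeability of the full sequence is the hypothesis Theorem~\ref{thm: general scc} needs. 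Formally the theorem then follows at once. For completeness I would still write out the core argument, since it is the one underlying Theorem~\ref{thm: general scc}.

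\textbf{Step 1 (conditioning on the bag).} Condition on the unordered bag $[\cD]=[z_1,\dots,z_N]$. Let $\hat\pi$ be the random permutation with $(Z_1,\dots,Z_N)=(z_{\hat\pi(1)},\dots,z_{\hat\pi(N)})$. By exchangeability, $\hat\pi\sim\mathrm{Unif}(\Pi_N)$ given the bag; ties can be handled by breaking them uniformly at random. When $y=Y_t$, the permuted data satisfy $\cD^{Y_t}_{\pi,t}=(z_{\hat\pi\circ\pi(1)},\dots)$. Hence the selection decision and the score under $\pi$ are functions $g(\hat\pi\circ\pi)$ and $h(\hat\pi\circ\pi)$ of the composed permutation alone. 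Here $g(\sigma)=\cS_t(z_{\sigma(1)},\dots,z_{\sigma(N-1)},x_{\sigma(N)})$ and $h(\sigma)=\cV_t(z_{\sigma(1)},\dots,z_{\sigma(N)})$.

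\textbf{Step 2 (exchangeability of the permutation ensemble).} Set $\sigma_0=\hat\pi$ and $\sigma_m=\hat\pi\circ\pi^{(m)}$. Because the $\pi^{(m)}$ are i.i.d.\ uniform and independent of $\hat\pi$, the tuple $(\sigma_0,\sigma_1,\dots,\sigma_M)$ is exchangeable: it has the law of $M+1$ i.i.d.\ uniform permutations. To see this, note that given $\hat\pi$ the $\sigma_m$ are i.i.d.\ uniform, and $\hat\pi$ is itself uniform.

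\textbf{Step 3 (the p-value is a conditional rank).} In these terms, $S_t=1$ iff $g(\sigma_0)=1$. On this event the reference set is $\{m\ge 0: g(\sigma_m)=1\}$, since $\pi_0$ is always included and $g(\sigma_0)=1$. The p-value is then
\[
p_t^{\text{off}}(Y_t)=\frac{\sum_{m:g(\sigma_m)=1}\ind\{h(\sigma_0)\le h(\sigma_m)\}}{|\{m:g(\sigma_m)=1\}|}.
\]
Now condition on the multiset $\{\sigma_0,\dots,\sigma_M\}$ and on the event $g(\sigma_0)=1$. Exchangeability makes the index of $\hat\pi$ uniform over the selected indices. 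So $p_t^{\text{off}}(Y_t)$ is the normalized rank of a uniformly chosen element among the selected scores, and the standard quantile lemma gives $\PP(p\le\alpha\mid\cdot)\le\alpha$. Integrating over the multiset and the bag via the tower property yields~\eqref{eq:thm offline dtm}.

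\textbf{Main obstacle.} The delicate point is Step 3: justifying that, conditional on the unordered multiset of $\sigma$'s and on $S_t=1$, the position of $\hat\pi$ is uniform over the selected ones. This needs the selection event to be a symmetric-in-label statement, namely $g(\sigma_0)=1$, and the reference set to depend on the tuple only through the multiset. Repeated permutations (multiplicities) must also be counted with multiplicity. I would handle this by working with the labeled tuple and applying exchangeability to the index $J$ of the original $\hat\pi$ directly, without relying on multisets.
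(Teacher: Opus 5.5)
Your proposal is correct and is essentially the paper's proof. The paper likewise reruns the Monte-Carlo argument of Theorem~\ref{thm: general scc} on the enlarged index set $\{-n_{\mathrm{off}}+1,\dots,t\}$: it conditions on the bag and on the unordered ensemble $[\hat\pi,\pi^{(1)}\circ\hat\pi,\dots,\pi^{(M)}\circ\hat\pi]$, identifies the reference-set scores with bag-level quantities, and then combines exchangeability of the ensemble with Bayes' rule to get the $\alpha$ bound. Your composition order $\hat\pi\circ\pi$ (which is what $Z_{\pi(i)}=z_{\hat\pi(\pi(i))}$ actually gives) and your explicit counting of repeated permutations with multiplicity are slightly more careful than the paper's write-up; since either composition order yields an i.i.d.\ uniform ensemble, the argument is unaffected.
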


The detailed proof of Theorem~\ref{thm: offline scc} is in Appendix~\ref{proof: offline scc}. 
This extension is intuitive: when more data is available, we simply apply  the same ideas in Section~\ref{sec:general_method} to the enlarged dataset. Under exchangeability, the same theoretical arguments still go through. Applying the randomization trick in~\eqref{eq: rand randomized pvalue} also leads to exact coverage, which we omit here for brevity.

The practical benefit of including offline data is that a large fraction of permutations may be included in the reference set, which leads to more stable and efficient prediction set. For example, consider all the permutations that keep the online data $\{1,\dots,t-1\}$ invariant and only permutes $\{-n_{\mathrm{off}}+1,\dots,-1,t\}$. Under such permutation, all the selection decisions before time $t$ remain the same, and the permutation would be included in $\hat{R}_t^{\text{off}}(y)$ whenever switching the offline data into position $t$ leaves the selection decision the same. 
Finally, we remark that our framework does not necessarily require the offline data to operate, and this extension is introduced primarily for completeness and to bridge with existing literature. 

\subsection{General selection taxonomy and FCR control}
\label{subsec:generalize_FCR}

Besides the SCC, many existing works also consider the false coverage rate (FCR) when multiple test points are selected. While our discussion so far has been focused on the SCC, in this part, we introduce the concept of selection taxonomy and extend PEMI to provide FCR control in certain situations. 

Similar to~\cite{jin2024confidence}, we define a selection taxonomy $\mathfrak{S} \subseteq \{0,1\}^{t}$ as any pre-specified collection of binary sequences of length $t$. Intuitively, the taxonomy is a set of selection 
trajectories satisfying desired pre-specified properties. 
Given a taxonomy $\mathfrak{S}$, the selection-conditional coverage can be generalized to
\begin{equation}\label{eq:scc_taxonomy}
     \mathbb{P}(Y_t \in \widehat{\mathcal C}_{\alpha, t} \mid S_t = 1, (S_1, \dots, S_t) \in \mathfrak S) \geq 1 - \alpha,
\end{equation}
where $S_i = \cS_i(Z_1,\dots,Z_{i-1},X_i)\in\{0,1\}$ is the selection decision at time $i$. 

To achieve~\eqref{eq:scc_taxonomy}, we construct the reference set that keeps the selection decision in the taxonomy: 
\begin{equation}
\widehat{R}_t(y) = \{\pi \in \Pi_t^{(M)} : S_t^y(\pi) = 1,\ (S_1^y(\pi), \dots, S_t^y(\pi)) \in \mathfrak{S}\} \cup \{\pi_0\}.
\end{equation}
The PEMI prediction set, denoted as $\hat\cC_{\alpha,t}(\mathfrak{S})$, can then be constructed with $\hat{R}_t(y)$ similar to~\eqref{eq: deter interval} or~\eqref{eq: rand randomized pvalue}.  
The proof of~\eqref{eq:scc_taxonomy} is a straightforward extension of the proof of Theorem~\ref{thm: general scc}.

\paragraph{FCR Control.} We now demonstrate that we can achieve FCR control by setting an appropriate taxonomy. The FCR  quantifies the expected proportion of selected prediction sets that fail to cover the true outcome:
\begin{equation}
\label{eq:thm fcr}
    \text{FCR}=\mathbb E\Bigg[\frac{\sum_{t=1}^TS_t \ind\{Y_t \notin \widehat{\mathcal{C}}_{\alpha,t} \}}{1  \lor  \sum_{t=1}^TS_t } \Bigg].
\end{equation}
Here $T\in \NN^+$ is the total number of time points. 

Existing methods achieve the FCR control for the so-called \emph{decision-driven} selection rules, where the selection rules are deterministic functions of past selection decisions~\citep{sale2025online,bao2024cap}. By introducing an appropriate selection taxonomy, PEMI achieves FCR control in similar settings. At each time $t$, we record the observed selection trajectory as $s_1 = \mathcal{S}_1(X_1)$, $s_2 = \mathcal{S}_2(Z_1, X_2)$, $\dots$, $s_t = \mathcal{S}_t(Z_1, \ldots, Z_{t-1}, X_t)$, and define the taxonomy as $\mathfrak{S} := \{(s_1, \ldots, s_t)\}$.  

The following theorem shows that under a conditional independence assumption on future selection decisions, this construction of PEMI prediction set yields FCR control. 

\begin{theorem}\label{thm: fcr}
Suppose for each $t\in\NN^+$, the selection decisions $\{S_{t'}\}_{t'> t}$ are independent of $\ind\{Y_t\in \widehat{\mathcal{C}}_{\alpha,t}(\mathfrak{S})\}$ conditional on $(S_1, \dots, S_{t})$. Then $\{\widehat{\mathcal{C}}_{\alpha,t}(\mathfrak{S})\}_{t=1}^T$ for  $\mathfrak{S} := \{(s_1, \ldots, s_t)\}$ above obeys $\text{FCR}\leq \alpha$. 
\end{theorem}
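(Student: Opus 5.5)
We reduce FCR to the taxonomy-conditional coverage guarantee~\eqref{eq:scc_taxonomy}, applied with the trajectory taxonomy $\mathfrak{S}=\{(s_1,\dots,s_t)\}$. Write $E_t=\ind\{Y_t\notin\widehat{\mathcal C}_{\alpha,t}(\mathfrak S)\}$ and $N=\sum_{t'=1}^T S_{t'}$. By linearity,
\[
\text{FCR}=\sum_{t=1}^T \EE\Big[\frac{S_tE_t}{1\lor N}\Big].
\]
Condition on the full trajectory $\mathbf{S}_{1:t}=(S_1,\dots,S_t)$. The plan is to show, for each fixed $t$ and each $\mathbf{s}\in\{0,1\}^t$ with $s_t=1$, that
\[
\EE\Big[\frac{E_t}{1\lor N}\,\Big|\,\mathbf S_{1:t}=\mathbf s\Big]\le \alpha\,\EE\Big[\frac{1}{1\lor N}\,\Big|\,\mathbf S_{1:t}=\mathbf s\Big].
\]
Summing over $\mathbf s$ and $t$ then gives $\text{FCR}\le\alpha\,\EE[\sum_t S_t/(1\lor N)]\le\alpha$.

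\textbf{Step 1 (conditional miscoverage).} Because the taxonomy here is the singleton $\{\mathbf s\}$, the event $\{S_t=1,\mathbf S_{1:t}\in\mathfrak S\}$ is exactly $\{\mathbf S_{1:t}=\mathbf s\}$. There is one subtlety: $\mathfrak S$ is itself data-dependent, since it is the observed trajectory. So the argument should be run for each fixed $\mathbf s$. For a fixed $\mathbf s$, the procedure with fixed taxonomy $\{\mathbf s\}$ coincides with $\widehat{\mathcal C}_{\alpha,t}(\mathfrak S)$ on the event $\mathbf S_{1:t}=\mathbf s$. The extension of Theorem~\ref{thm: general scc} to fixed taxonomies, namely~\eqref{eq:scc_taxonomy}, then gives
\[
\PP(E_t=1\mid \mathbf S_{1:t}=\mathbf s)\le\alpha .
\]
Here I rely on the claimed straightforward extension: the reference set now keeps permutations whose permuted trajectory $(S_1^y(\pi),\dots,S_t^y(\pi))$ equals $\mathbf s$, and the exchangeability of $\{\hat\pi,\pi^{(m)}\circ\hat\pi\}$ is unchanged.

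\textbf{Step 2 (decoupling the denominator).} Write $N=\sum_{i\le t}s_i+\sum_{t'>t}S_{t'}$. Given $\mathbf S_{1:t}=\mathbf s$, the first sum is fixed. The second sum is a function of $\{S_{t'}\}_{t'>t}$, which by hypothesis is conditionally independent of $E_t$ given $\mathbf S_{1:t}$. Hence $1/(1\lor N)$ and $E_t$ are conditionally independent given $\mathbf S_{1:t}=\mathbf s$, and the expectation factorizes:
\[
\EE\Big[\frac{E_t}{1\lor N}\,\Big|\,\mathbf s\Big]=\PP(E_t=1\mid\mathbf s)\,\EE\Big[\frac{1}{1\lor N}\,\Big|\,\mathbf s\Big]\le\alpha\,\EE\Big[\frac{1}{1\lor N}\,\Big|\,\mathbf s\Big].
\]

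\textbf{Step 3 (assembly).} Multiply by $\PP(\mathbf S_{1:t}=\mathbf s)$ and sum over $\mathbf s$ with $s_t=1$. This gives
\[
\EE\Big[\frac{S_tE_t}{1\lor N}\Big]\le\alpha\,\EE\Big[\frac{S_t}{1\lor N}\Big].
\]
Summing over $t$ yields $\text{FCR}\le\alpha\,\EE\big[N/(1\lor N)\big]\le\alpha$.

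The main obstacle is Step 1. The difficulty is justifying that the random, trajectory-defined taxonomy still yields conditional validity. I would handle it by the fixed-$\mathbf s$ reduction above. The point to check is that the reference set for $\widehat{\mathcal C}_{\alpha,t}(\mathfrak S)$, evaluated at $y=Y_t$, is computed identically to that of the fixed-taxonomy procedure whenever the observed trajectory is $\mathbf s$. This holds because the identity permutation reproduces the observed $\mathbf s$.
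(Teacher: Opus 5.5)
Your proposal is correct and follows essentially the paper's route. The paper also first proves a ``strong'' selection-conditional coverage lemma, $\mathbb{P}(Y_t\notin\widehat{\mathcal C}_{\alpha,t}^{\text{str}}\mid S_1=s_1,\dots,S_{t-1}=s_{t-1},S_t=1)\le\alpha$, with a reference set built from the fixed trajectory $(s_1,\dots,s_t)$ exactly as in your fixed-$\mathbf s$ reduction, and then combines it with the conditional independence hypothesis. The only difference is bookkeeping: the paper conditions on the full trajectory $(S_1,\dots,S_T)$ so the denominator is constant and uses the hypothesis to drop $S_{t+1},\dots,S_T$ from the conditioning, whereas you condition on $(S_1,\dots,S_t)$ and use the hypothesis to factor $\mathbb{E}[E_t/(1\lor N)\mid \mathbf S_{1:t}=\mathbf s]$, which is equally valid.
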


Theorem~\ref{thm: fcr} is proved in Appendix~\ref{proof: fcr}. The core idea here follows~\citet[Proposition 5.2]{sale2025online}. By limiting the selection event to the taxonomy, the prediction sets $\{\widehat{\mathcal{C}}_{\alpha,t}(\mathfrak{S})\}_{t=1}^T$ achieve  coverage conditional on the entire selection trajectory, which implies FCR control. A natural scenario where the conditional independence assumption holds is in the decision-driven selection rules. In general, however, we remark that exact FCR control without being overly conservative is challenging. With arbitrary selection rules, the decisions and label uncertainty over multiple time points have complicated dependence. We thus leave the problem as an open future direction.

\subsection{Generalization to multiple test samples}
\label{subsec:generalize_multiple_test}

In this subsection, we generalize PEMI to an offline-like setting where multiple test points can be selected. For the ease of clarity, we adopt notations similar to the JOMI framework~\citep{jin2024confidence}.

Assume access to the  calibration data $\mathcal D_{\mathrm{calib}}=(Z_1,\dots,Z_n)$ and test data $\mathcal D_{\mathrm{test}}=(X_{n+1},\dots,X_{n+m})$. 
We consider a general selection mechanism for multiple test samples $\mathcal{S}\colon (\mathcal{X} \times \mathcal{Y})^n \times \mathcal{X}^m \to 2^{\{1,\dots,m\}}$, which can be sensitive to data ordering. 
We also rely on a pre-specified conformity score $\cV\colon (\cX\times\cY)^{n+1}\to \RR$ sensitive to data ordering (e.g., being a function of only the last data point). 
The goal is to construct prediction sets only for the selected samples with selection-conditional coverage. 
% In fact, our major online setting at time $t$ can be viewed as a special case with a single test sample. 
This setting also reduces to that of JOMI when $\cS$ is assumed to be permutation invariant to the first $n$ arguments (calibration data). 

Our method follows the same principle as in Section~\ref{sec: procedure}. Consider the selection set $\widehat{S} = \mathcal{S}(\mathcal{D}_{\mathrm{calib}}, \mathcal{D}_{\mathrm{test}})\subseteq\{1,\dots,m\}$. 
Let $\Pi_{n+j}^{(M)}=(\pi^{(1)}, \dots, \pi^{(M)})$ be permutations randomly and uniformly drawn from the set of all permutations over $\{1, \dots, n, n + j\}$. For each $j \in \widehat{S}$ and any hypothesized value $y$, we define the permuted calibration data $\mathcal{D}
_{\mathrm{calib}}^{\pi}(y)$ and permuted test data $\mathcal{D}_{\mathrm{test}}^{\pi}(y)$ in a similar way:
\begin{align*}
    \mathcal{D}_{\mathrm{calib}}^{\pi}(y)
        &= \left( Z_{\pi(1)}(y), Z_{\pi(2)}(y), \ldots, Z_{\pi(n-1)}(y), Z_{\pi(n)}(y) \right), \\
    \mathcal{D}_{\mathrm{test}}^{\pi}(y)
        &= \left( X_{n+1}, \ldots, X_{n+j-1}, X_{\pi(n+j)}, X_{n+j+1}, \ldots, X_{n+m} \right),
\end{align*}
with $Z_{i}(y)=(X_i,Y_i)$ for $i\leq n$ and $Z_{n+j}(y) = (X_{n+j},y)$.  
Accordingly, the selection set after permutation is $\hat{S}^\pi(y) = \cS(\cD_{\calib}^\pi(y), \cD_{\test}^\pi(y))$, and the conformity score is $V_{n+j}^{y}(\pi)=\cV(Z_{\pi(1)}(y), \ldots, Z_{\pi(n)}(y),Z_{\pi(n+j)}(y))$. We define the reference set 
\begin{equation}
    \widehat R_{n+j}^M(y)=\{\pi \in \Pi_{n+j}^{(M)}: j \in \widehat{S}^\pi(y)\} \cup\{\pi_0\}, 
\end{equation}
and construct the PEMI  prediction set 
\begin{equation}\label{eq: most general prediction set}
    \widehat{\cC}_{\alpha, n+j}=\left\{y \in \mathcal{Y} : p_{n+j}(y) > \alpha \right\},\quad \text{where} \quad p_{n+j}(y)=\frac{ \sum_{\pi \in \widehat R_{n+j}^M(y)}\mathds{1}\{V_{n+j}^y(\pi_0) \leq V_{n+j}^y(\pi)\}}{ \vert \widehat R_{n+j}^M(y) \vert}
\end{equation}

\begin{theorem}\label{thm: most general scc}
Suppose $\{Z_i\}_{i=1}^n \cup \{Z_{n+j}\} $ are exchangeable conditional on $\{X_{n+l}\}_{l \in [m]\backslash\{j\}}$ for any $j \in [m]$. Then $\widehat{\mathcal{C}}_{\alpha,n+j}$ defined in~\eqref{eq: most general prediction set} obeys
\begin{equation}
\label{eq: thm most general dtm}
    \PP( Y_{n+j} \in \widehat{\mathcal{C}}_{\alpha,n+j} \given j \in \widehat{S}) \geq 1-\alpha.
\end{equation}
\end{theorem}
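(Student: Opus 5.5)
The plan is to reduce Theorem~\ref{thm: most general scc} to the Monte-Carlo case $\Pi_t^\star=\Pi_t^{(M)}$ of Theorem~\ref{thm: general scc}. Fix $j\in[m]$ and condition throughout on $\{X_{n+l}\}_{l\neq j}$. By assumption, $(Z_1,\dots,Z_n,Z_{n+j})$ is then exchangeable. Relabel index $n+j$ as the $(n+1)$-th position, so we work with $t=n+1$ points. Define the single-test selection rule
\[
\tilde{\cS}(z_1,\dots,z_n,x):=\ind\big\{j\in\cS\big((z_1,\dots,z_n),(X_{n+1},\dots,X_{n+j-1},x,X_{n+j+1},\dots,X_{n+m})\big)\big\},
\]
with the other test covariates treated as fixed constants. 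Likewise take $\tilde\cV_{n+1}(z_1,\dots,z_{n+1}):=\cV(z_1,\dots,z_{n+1})$. Then $\tilde S=\ind\{j\in\hat S\}$. For each permutation $\pi$ of $\{1,\dots,n,n+j\}$ with $y$ imputed, $\tilde{\cS}$ applied to the permuted data equals $\ind\{j\in\hat S^\pi(y)\}$, and the permuted score equals $V_{n+j}^y(\pi)$. So $\widehat R^M_{n+j}(y)$ and $p_{n+j}(y)$ coincide with the objects $\widehat R_t(y;\Pi_t^{(M)})$ and $p_t(y;\Pi_t^{(M)})$ of \eqref{eq:full_ref_set}--\eqref{eq: deter interval} for this rule.

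With the reduction in place, the key steps in order are as follows. (i) Work conditionally on $\{X_{n+l}\}_{l\ne j}$ and on the bag $[Z_1,\dots,Z_n,Z_{n+j}]=[d]$. Let $\hat\pi$ be the random ordering, which is uniform on the permutations by exchangeability. (ii) Show that $\hat\pi,\pi^{(1)}\circ\hat\pi,\dots,\pi^{(M)}\circ\hat\pi$ are exchangeable, in fact i.i.d.\ uniform, since each $\pi^{(m)}$ is independent uniform. (iii) Observe that the reference set and all scores are functions of this collection of orderings of the fixed bag. In particular, $\pi\in\widehat R^M_{n+j}(Y_{n+j})$ iff the ordering $\pi^{(m)}\circ\hat\pi$ (with the appropriate composition convention) is selected. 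Also, $\pi_0$ belongs to the reference set exactly when $j\in\hat S$. (iv) Conditional on selection, $\hat\pi$ is then exchangeable with the other selected orderings. The standard rank argument gives $\PP(p_{n+j}(Y_{n+j})\le\alpha\mid [d],\{X_{n+l}\}_{l\neq j},j\in\hat S)\le\alpha$. (v) Apply the tower property over the bag and the other covariates to obtain \eqref{eq: thm most general dtm}.

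Since Theorem~\ref{thm: general scc} was stated for an arbitrary fixed rule $\cS_t$ and score $\cV_t$, the cleanest route is to invoke it directly. Conditionally on $\{X_{n+l}\}_{l\neq j}$, the rule $\tilde\cS$ is a fixed mapping and the data are exchangeable, so \eqref{eq:validity_dtm} applies verbatim. The conditional coverage bound is then integrated over $\{X_{n+l}\}_{l\ne j}$, weighting by $\PP(j\in\hat S\mid \{X_{n+l}\}_{l\neq j})$, which preserves the $1-\alpha$ bound.

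The main obstacle I expect is bookkeeping in the identification step (iii). The index set is $\{1,\dots,n,n+j\}$ rather than $\{1,\dots,t\}$, and permuted test data replace only slot $j$ by $X_{\pi(n+j)}$. I must check that $\hat S^\pi(y)$ with $y=Y_{n+j}$ really depends only on the reordered bag, and that the composition convention ($\pi\circ\hat\pi$ versus $\hat\pi\circ\pi$) matches the definition of $Z_{\pi(i)}(y)$. A second point to verify is that the conditioning on the other test covariates is legitimate: they are fixed constants in $\tilde\cS$, and the hypothesis gives exactly the conditional exchangeability needed.
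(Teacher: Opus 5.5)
Your proposal is correct and follows essentially the paper's argument. The paper proves the bound conditionally on $j\in\widehat S$, the bag $[\mathcal D_j]$, the other test covariates $\mathcal D_j^c$, and the unordered set $[\widehat\pi^M]$. It identifies $p_{n+j}(Y_{n+j})$ with the p-value of $\widehat\pi$ among the selected orderings and finishes with the exchangeability lemma and the tower property; these are your steps (i)--(v). In other words, it reruns the proof of Theorem~\ref{thm: general scc} with $\mathcal D_j^c$ added to the conditioning, whereas you cite that theorem as a black box through your rule $\tilde{\mathcal S}$. Both routes are valid.

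Your composition-convention worry is harmless: $\widehat\pi$ together with either $\{\pi^{(m)}\circ\widehat\pi\}_m$ or $\{\widehat\pi\circ\pi^{(m)}\}_m$ is i.i.d.\ uniform. One small correction: $\pi_0$ is always in $\widehat R^M_{n+j}$ by construction, so what holds on $\{j\in\widehat S\}$ is that $\widehat\pi$ lies among the selected orderings.
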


We defer the detailed proof of Theorem~\ref{thm: most general scc} to Appendix~\ref{proof: most general scc}. 
The conditional exchangeability condition is implied by the standard joint exchangeability of $\{Z_i\}_{i=1}^{n+m}$. Similar to the proof of Theorem~\ref{thm: general scc}, we show a stronger result that 
$ 
\mathbb{P}\big(p_{n+j}(Y_{n+j}) \le \alpha \mid j \in \widehat{S},[\mathcal D_j],\mathcal D_j^c, [\hat{\pi}^M]\big)\le \alpha,
$  
where we define the unordered set of full observations $[\mathcal D_j]=[Z_1, \dots, Z_n, Z_{n+j}]$, the remaining test data $\mathcal D_j^c=\mathcal D_{\text{test}}\backslash\{X_{n+j}\}$ and the unordered permutation set $[\hat{\pi}^M]=[\widehat{\pi},\; \pi^{(1)}\circ \widehat{\pi},\; \ldots,\; \pi^{(M)}\circ \widehat{\pi}]$. The general arguments still follow the same ideas, except that we now condition on all the other test points.  

%!tex root = main.tex

\section{Computationally efficient instantiations}
\label{sec:instantiation}

So far, we have established a general framework for predictive inference with selection-conditional coverage for arbitrary selection rules. In classification problems, one can solve for the PEMI prediction set by enumerating all classes $y\in \cY$. However, when \(|\mathcal{Y}|\) is infinite,  such an enumeration is computationally intractable. 

In this section, we present efficient algorithms tailored to a wide range of common selection rules with special structures. We focus on three broad classes of selection rules: covariate-dependent rules, conformal selection, and selection based on earlier outcomes. For efficient computation, throughout, we limit the scope to conformity score functions that involve only the last data point, i.e., $\cV_t(z_1,\dots,z_t)=v(x_t,y_t)$, for a pre-specified function \(v:\mathcal{X}\times\mathcal{Y}\to\mathbb{R}\). We also only focus on the Monte-Carlo  variants $\hat\cC_{\alpha,t}(\Pi_t^{(M)})$ and $\hat\cC_{\alpha,t}^{\text{rand}}(\Pi_t^{(M)})$ since enumeration over the entire permutation set is expensive.

\subsection{Covariate-dependent selection rules}
We first consider a broad class of selection rules known as \textit{covariate-dependent} rules~\citep{jin2024confidence}. In this setting, the selection decision at time $t$ only depends on the covariates  $\{X_i\}_{i=1}^t$ and not the response values $\{Y_i\}_{i=1}^{t-1}$.  
This covers many commonly used selection rules including those in existing work:

\vspace{0.25em}
\begin{enumerate}
\item \textit{Decision-driven selection.} This is the class of rules studied by~\cite{bao2024cap,sale2025online}. Here, the selection $S_t$ is a function of $X_t$ and all the past selection decisions $\{S_i\}_{i=1}^{t-1}$ only. Such rules are covariate-dependent, since they do not involve the responses $\{Y_i\}_{i=1}^{t-1}$. 
    \item \textit{Selection based on weighted quantile.} A unit $t$ is selected if its predicted value is above the weighted quantile of the predicted values (or any scores) of the past $t-1$ data at level $q$. Such ideas have been used to improve robustness of distribution shifts~\citep{barber2023conformalpredictionexchangeability}. In online settings, the weights $\{w_i\}_{i=1}^{t-1}$ may vary over time, and the selection rule is asymmetric to the labeled data.
    \item \textit{Selection based on weighted average.} A unit $t$ is selected if its predicted value is larger than the weighted average of the predicted values of the past $t-1$ data. Similar to the quantile case, the selection rule can be naturally asymmetric.
    \item \textit{Selection based on black-box online optimization.} Many online algorithms produce decisions based on streaming data, which fall within this category if the optimization problem involves covariates only, such as the online Knapsack problems in resource allocation~\citep{chakrabarty2008online}.
\end{enumerate}
\vspace{0.25em}

We now present the procedure applicable to any covariate-dependent selection rule. The key to simplification is that the reference set remains the same for any hypothesized value \(y\in \cY\).  
In words, for covariate-dependent selection rules, the reference set $\hat{R}_t(y;\Pi_t^{(M)})$ can be computed independent of $y$. With a last-point score $v(x_t,y_t)$, the PEMI p-value~\eqref{eq: deter interval} compares $v(X_t,y)$ to $v(X_{\pi(t)},Y_{\pi(t)})$'s in this single reference set, and  the prediction set relies on a single threshold. Since the selection rule does not involve the labels, we denote the selection decision under any imputed response as \(S_t^y(\pi) \equiv S_t(\pi)\). 

The following proposition provides the explicit form of the PEMI prediction set, with proof in Appendix~\ref{proof: prop covariate}.  Algorithm~\ref{alg: covariate dependent jomi online} summarizes the procedure, where we only present $\widehat{\mathcal{C}}_{\alpha, t}(\Pi_t^{(M)})$ for brevity.

\begin{prop}
\label{prop: covariate}
Assume $S_t = 1$.  
Let $\Pi_t^{(M)}$ be i.i.d.~samples from the uniform distribution over $\Pi_t$. 
Define 
\begin{equation}
\label{eq: AtBt_main}
\widehat{R}_t = \{\pi \in \Pi_t^{(M)} \colon S_t(\pi) = 1\} \cup\{\pi_0\}, \qquad
B_t = \{\pi \in \Pi_t^{(M)} \colon S_t(\pi) = 1,\ \pi(t) \neq t\}.
\end{equation}
The PEMI prediction set is   
    $ 
    \widehat{\mathcal{C}}_{\alpha, t}(\Pi_t^{(M)}) = \left\{ y \in \mathcal{Y} : v(X_t, y) \leq \mathrm{Quantile}\left(\beta; \{ v(X_{\pi(t)}, Y_{\pi(t)}) \}_{\pi \in B_t} \cup \{+\infty\} \right) \right\}$ for $\beta = \frac{\lceil (1-\alpha) \cdot |\widehat{R}_t| \rceil}{|B_t|}.$ 
\end{prop}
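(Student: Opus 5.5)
The plan is to compute the p-value in~\eqref{eq: deter interval} explicitly and turn the condition $p_t(y;\Pi_t^{(M)})>\alpha$ into an order-statistic condition on the scores indexed by $B_t$. Throughout, $\Pi_t^{(M)}$ is treated as a multiset, so repeated draws are counted with multiplicity, and $\pi_0$ is counted once separately.

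\emph{Step 1: the reference set does not depend on $y$.} Since $\cS_t$ only uses covariates and the imputation changes only the label of $Z_t$, the permuted data $\cD^y_{\pi,t}$ have the same covariates for every $y$. Hence $S^y_t(\pi)=S_t(\pi)$ and $\widehat R_t(y;\Pi_t^{(M)})=\widehat R_t$, the set in~\eqref{eq: AtBt_main}. Because $S_t=1$, the identity $\pi_0$ is itself a selection-preserving permutation.

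\emph{Step 2: split the reference set.} Write $\widehat R_t = A_t\sqcup B_t$, where $A_t$ collects the elements with $\pi(t)=t$, including $\pi_0$. With the last-point score $\cV_t(z_1,\dots,z_t)=v(x_t,y_t)$ we have:
\begin{itemize}
\item for $\pi\in A_t$, $V_t^y(\pi)=v(X_t,y)=V_t^y(\pi_0)$, so the indicator in~\eqref{eq: deter interval} always equals $1$;
\item for $\pi\in B_t$, $V^y_t(\pi)=v(X_{\pi(t)},Y_{\pi(t)})=:W_\pi$, which is free of $y$.
\end{itemize}
Therefore
\[
p_t(y)=\frac{|A_t|+\#\{\pi\in B_t: W_\pi\ge v(X_t,y)\}}{|\widehat R_t|}.
\]
Using $|A_t|+|B_t|=|\widehat R_t|$, the condition $p_t(y)>\alpha$ is equivalent to
\[
\#\{\pi\in B_t: W_\pi< v(X_t,y)\}<(1-\alpha)|\widehat R_t|.
\]
Since the left side is an integer, this is the same as requiring it to be at most $k-1$, where $k=\lceil(1-\alpha)|\widehat R_t|\rceil$.

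\emph{Step 3: identify the quantile.} The condition $\#\{W_\pi<v\}\le k-1$ holds if and only if $v\le W_{(k)}$, the $k$-th smallest element of the multiset $\{W_\pi\}_{\pi\in B_t}\cup\{+\infty\}$. When $k>|B_t|$, $W_{(k)}=+\infty$ and every $y$ is included. Under the convention that $\mathrm{Quantile}(\beta;\cdot)$ over a multiset of size $|B_t|$ (plus $+\infty$) returns the $\lceil \beta|B_t|\rceil$-th smallest value, taking $\beta=k/|B_t|$ gives exactly $W_{(k)}$. This yields the stated set.

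The main obstacle I expect is bookkeeping rather than substance, in two places. First, Step 3 must match the paper's quantile convention (in particular, behavior when $\beta\ge 1$) and handle ties through the strict-versus-weak inequalities. Second, duplicates of $\pi_0$, or other permutations fixing $t$, that appear among the $M$ draws must be counted in $A_t$ and in $|\widehat R_t|$ consistently with~\eqref{eq: AtBt_main}.
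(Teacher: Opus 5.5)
Your proposal is correct and follows essentially the same route as the paper's proof. Both arguments note that $\widehat R_t$ does not depend on $y$, split it into the permutations fixing $t$ (whose indicators are identically one) and $B_t$, and invert the resulting count inequality into the order statistic of $\{v(X_{\pi(t)},Y_{\pi(t)})\}_{\pi\in B_t}\cup\{+\infty\}$ at index $\lceil(1-\alpha)|\widehat R_t|\rceil$. The differences are cosmetic: you count the complement $\#\{W_\pi<v(X_t,y)\}$ directly instead of passing through the paper's $K=\lfloor(\alpha-1)|\widehat R_t|+|B_t|\rfloor+1$, and you spell out the multiset counting and the case $k>|B_t|$, which the paper leaves to the quantile convention.
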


Here, the set $B_t$ collects the informative permutations that move a labeled point into the last position and yield $v(X_{\pi(t)},Y_{\pi(t)})\neq v(X_t,y)$ appearing in the reduction form of the PEMI p-value. The randomized set $\hat\cC_{\alpha,t}^{\textnormal{rand}}(\Pi_t^{(M)})$ also admits a closed form, which we defer to Proposition~\ref{prop: covariate_full} in Appendix~\ref{app:subsec_cov_full} for brevity.

\begin{algorithm}[t]
\caption{PEMI with Random Permutation for Covariate-dependent Selection}
\label{alg: covariate dependent jomi online}
\begin{algorithmic}[1]
\REQUIRE Data sequence $(Z_1,\dots,Z_{t-1},X_t)$, confidence level $\alpha\in(0,1)$, number of permutations $M$, covariate-dependent selection rule $\mathcal{S}$, conformity score function $v(\cdot, \cdot)$.
% \FOR{$t = 1$ to $N$}
    \STATE Compute $S_t = \mathcal{S}_t(Z_1, \dots, Z_{t-1}, X_t)$;
    \IF{$S_t = 1$} 
        \STATE Randomly sample $M$ permutations $\Pi_t^{(M)}=(\pi^{(1)}, \dots, \pi^{\text{(M)}})$ from \text{Unif}($\Pi_t$); 
         \STATE Compute $\widehat{R}_t$ and $B_t$ as~\eqref{eq: AtBt_main};
        \STATE Compute $\hat\eta = \mathrm{Quantile} \bigl(\frac{\lceil (1-\alpha) \cdot |\widehat{R}_t| \rceil}{|B_t|};\ 
        \{ v(X_{\pi(t)}, Y_{\pi(t)}) \}_{\pi \in B_t}\cup\{+\infty\}\big)$;
         \STATE Construct prediction set $
        \widehat{\mathcal{C}}_{\alpha, t}(\Pi_t^{(M)})
   = \{ y \in \mathcal{Y} :
        v(X_t, y)\leq \hat\eta\}$. 
    \ENDIF
% \ENDFOR
\ENSURE Prediction sets $\widehat{\mathcal{C}}_{\alpha, t}(\Pi_t^{(M)})$.
\end{algorithmic}
\end{algorithm}

\subsection{Conformal selection}
\label{subsec:conformal_selection}

The second class of selection rules are known as \emph{conformal selection}, which aims to select units whose outcomes satisfy certain conditions while controlling certain type-I error~\citep{jin2023selectionpredictionconformalpvalues,jin2025model,xu2023onlinemultipletestingevalues}. 
Such decisions are produced by hypothesis testing procedures based on certain conformal p-values or e-values constructed for each test point.  
With online testing procedures, this class of selection rules is often asymmetric and involves earlier outcomes. %Yet, we are able to establish a simplified implementation due to the structure of the conformal p/e-values. 

We follow the setting of~\citet{jin2023selectionpredictionconformalpvalues,jin2024confidence} with slight modifications to accommodate the online context. At each time $t$, we observe the test point $X_t$ along with a threshold $c_t\in \RR$ of interest. The goal is to select points with $Y_t>c_t$ based on the labeled data $\{(X_i, Y_i, c_i)\}_{i=1}^{t-1}$. 

The selection relies on any score function $F: \mathcal{X} \times \mathbb{R} \to \mathbb{R}$ such that $F(x,y)$ is non-increasing in $y\in \RR$ for any $x \in \mathcal{X}$. Denoting $\widehat{F}_i = F(X_i, c_i)$, we define a general \emph{weighted conformal p-value} as  
\begin{equation}\label{eq: selection pvalue}
p_t^w = \frac{w_t + \sum_{i=1}^{t-1} w_i \cdot \ind\{\widehat{F}_i \geq \widehat{F}_t , Y_i\leq c_i\}}{\sum_{i=1}^t w_i},
\end{equation}
where $\{w_i\}_{i=1}^\infty$ is a pre-fixed sequence of non-negative weights. 
This p-value is a weighted version of the conformal p-value introduced in~\citet{jin2023selectionpredictionconformalpvalues} with the powerful ``clipped'' score~\citep{jin2024confidence}. Following similar arguments in~\citet{barber2023conformalpredictionexchangeability} and~\cite{jin2023selectionpredictionconformalpvalues}, when $\{(X_i,Y_i,c_i)\}_{i=1}^t$ are exchangeable, such a p-value is valid, i.e., 
$\mathbb{P}(p_t^w \leq \alpha,\, Y_t \leq c_t) \leq \alpha$  for all $\alpha \in [0,1]$.  
The conformal selection framework has been applied to identifying promising drug candidates in drug discovery~\citep{bai2025conformal} and reliable language model outputs~\citep{gui2024conformal}, and extended to online settings~\citep{xu2023onlinemultipletestingevalues}, where the need for uncertainty quantification after selection naturally arises. 

We provide the implementations of PEMI for two broad classes of selection rules that (1) threshold the p-value $p_t^w$, and (2) threshold certain e-values constructed based on the conformal p-values. For the conformal selection rules we consider, the selection decision depends on past outcomes only through the binary indicators $\ind\{Y_i\leq c_i\}$. Therefore, after imputing a candidate label 
$y\in \cY$ for time $t$, the induced selection event depends on $y$ only through whether $y\leq c_t$ or $y>c_t$. Consequently, the reference set $\hat{R}(y;\Pi_t^{(M)})$ is constant within each region, leading to two regime-specific reference sets $\hat{R}_t^{(0)}$ and $\hat{R}_t^{(1)}$ for $y> c_t$ and $y\leq c_t$, respectively, and the PEMI prediction set takes a shared form 
\$
\hat\cC_{\alpha,t}(\Pi_t^{(M)}) = \{y>c_t, v(X_t,y)\leq \hat{q}_0\}\cup \{y\leq c_t, v(X_t,y)\leq \hat{q}_1\},
\$
where $\hat{q}_0,\hat{q}_1\in \RR$ are suitable quantiles of the conformity scores in $\hat{R}_t^{(0)}$ and $\hat{R}_t^{(1)}$. 
The remaining of this subsection details how these quantities can be computed for commonly used online testing procedures.

\subsubsection{p-value-based selection rules}

We first consider selection procedures based on thresholding the conformal p-values at certain values. As data arrive sequentially, one can either test each unit individually, or test multiple units across the entire sequence. We thus consider two major types of thresholding methods:
\begin{enumerate}
        \item \emph{Fixed threshold:} Unit $t$ is selected if $p_t^w$ is below a fixed threshold $q \in (0,1)$.
        \item \emph{Adaptive threshold:} Unit $t$ is selected if $p_t^w$ is below an adaptive threshold determined by an online multiple testing algorithm~\citep{javanmard2015onlinecontrolfalsediscovery,ramdas2019saffronadaptivealgorithmonline}.
\end{enumerate}

The second class covers a broad range of online testing methods such as LOND~\citep{javanmard2015onlinecontrolfalsediscovery}, SAFFRON~\citep{ramdas2019saffronadaptivealgorithmonline}, and ADDIS~\citep{tian2019addisadaptivediscardingalgorithm}. These methods often set an adaptive threshold \(\alpha_t\) at each time point to allocate error rates, which is dynamically updated to balance discovery power and error control. 
The thresholds in these methods can be unified as 
\begin{equation}
    \alpha_t = G\left(t;\, \mathcal{F}_{t-1},\, \boldsymbol{\theta}\right),\quad \text{where} \quad \mathcal{F}_{t-1}
=
\sigma\bigl(\{p_i,\, \alpha_i \}_{i=1}^{t-1}\bigr),
\end{equation}
for some function $G(\cdot)$. 
Above, $\mathcal{F}_{t-1}$ is 
the $\sigma$-algebra generated by the history of realized p-values $\{p_i\}_{i=1}^{t-1}$ and adaptive thresholds $\{\alpha_i\}_{i=1}^{t-1}$. 
Finally, $\boldsymbol{\theta}$ collects all the  hyper-parameters such as the weights $\{\gamma_t\}_{t\ge1}$, initial alpha-wealth $W_0$ and threshold parameters in SAFFRON and ADDIS.
% \footnote{We note that these online testing procedures typically work for independent p-values, and thus it is unclear yet whether they provide type-I error control with our weighted p-values. Although we conjecture this might be true due to the known positive dependence structures~\citep{jin2023selectionpredictionconformalpvalues}, a rigorous study is beyond the scope of this work. Instead, one may view them as working procedures one may use in practice to derive the selection decisions.} 

Proposition~\ref{prop: pvalue threshold} provides the explicit form of the PEMI set under this selection rule, where we only present $\widehat{\mathcal{C}}_{\alpha, t}(\Pi_t^{(M)})$ for brevity. The proof is in Appendix~\ref{proof: prop pvalue} and the procedure is summarized in Algorithm~\ref{alg: conformal p-value jomi online}.

\begin{prop}\label{prop: pvalue threshold}
Let $\alpha_t \in (0,1)$ denote the selection threshold at time $t$, which can be either fixed or adaptive as above. Consider a selected time point $t\in \NN^+$ with 
$p_t^w \leq \alpha_t$. 
For any permutation $\pi\in\Pi_t^{(M)}$, we define the permuted score $\hat{F}_{\pi(i)}=F(X_{\pi(i)},c_{\pi(i)})$ for $i\in [t]$. For $k\in\{0,1\}$ and $j \in [t]$, we define %the permuted conformal p-values  
\begin{equation}
    p_{j,\pi}^{w,k}
=
\frac{
  w_j
  + \sum_{1\le i \le j-1,i\neq \pi^{-1}(t)}
    w_i
    \mathds{1}\bigl\{\widehat F_{\pi(i)}\ge\widehat F_{\pi(j)},\;Y_{\pi(i)}\le c_{\pi(i)}\bigr\}
  + k\cdot w_{\pi^{-1}(t)}
    \mathds{1}\bigl\{\widehat F_{t}\ge\widehat F_{\pi(j)} \bigr\} \cdot \ind\{\pi^{-1}(t)< j\}
}{
  \sum_{i=1}^j w_i
},
\end{equation}
and let $\alpha_{t, \pi}^{k}$ be the threshold at time $t$ computed with the permuted p-values and the given online testing procedure.  
Then, it holds that $\widehat{R}_t(y)=\widehat {R}_t^{\ind\{y\leq c_t\}}$ where we define 
\begin{equation}\label{eq:pvalue reference set}
\widehat{R}_t^{k} = \big\{\,\pi\in\Pi_t^{(M)} :\; p_{t,\pi}^{w,k} \leq \alpha_{t, \pi}^{k}\,\big\}\cup \{\pi_0\}, \quad B_t^k = \big\{\pi \in \widehat{R}_t^k : \pi(t) \neq t\big\},\quad  k\in\{0,1\}.
\end{equation} 
Then, the PEMI prediction set is given by 
\begin{equation}\label{eq: pvalue prediction set}
    \widehat{\mathcal{C}}_{\alpha, t}(\Pi_t^{(M)})
=
\left\{\,y\in\mathcal Y: y>c_{t},\,v(X_t,y)\le\widehat{q}_0\,\right\}
\cup
\left\{\,y\in\mathcal Y: y\le c_{t},\,v(X_t,y)\le\widehat{q}_1\,\right\},
\end{equation}
where $\widehat{q}_k
=\mathrm{Quantile} \bigl(\frac{\lceil (1-\alpha)\cdot |\hat{R}_t^k| \rceil}{|B_t^k|};\ 
        \{ v(X_{\pi(t)}, Y_{\pi(t)}) \}_{\pi \in B_t^k}\cup\{+\infty\}\bigr)$  
for $k\in\{0,1\}$. 
\end{prop}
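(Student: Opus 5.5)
The argument has two parts. First, identify the permuted selection decision $S_t^y(\pi)$ explicitly for the conformal $p$-value rule and show that it depends on $y$ only through $k=\ind\{y\le c_t\}$. Second, plug the resulting two reference sets into the generic PEMI $p$-value~\eqref{eq: deter interval} and invert it, exactly as in the covariate-dependent case (Proposition~\ref{prop: covariate}).

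For the first part, fix $\pi\in\Pi_t^{(M)}$ and a hypothesized $y$. Apply the selection rule to the permuted sequence $\cD_{\pi,t}^y$. At each position $j\le t$ of the permuted stream, the weighted conformal $p$-value~\eqref{eq: selection pvalue} compares $\widehat F_{\pi(j)}$ with the earlier permuted scores $\widehat F_{\pi(i)}$, $i<j$. Each comparison carries the indicator $\ind\{Y_{\pi(i)}\le c_{\pi(i)}\}$ and the weight $w_i$ attached to the \emph{position} $i$, not to the original index.

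\begin{itemize}
\item For positions $i\neq\pi^{-1}(t)$, the data point is a genuine labeled point, and the term is exactly the one displayed in $p_{j,\pi}^{w,k}$.
\item At position $i=\pi^{-1}(t)<j$, the point is $(X_t,y,c_t)$, whose score is $\widehat F_t=F(X_t,c_t)$, free of $y$. Its indicator is $\ind\{y\le c_t\}=k$. This gives the term $k\, w_{\pi^{-1}(t)}\ind\{\widehat F_t\ge \widehat F_{\pi(j)}\}\ind\{\pi^{-1}(t)<j\}$.
\end{itemize}

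Hence the permuted $p$-value at every position $j\le t$ equals $p_{j,\pi}^{w,k}$.

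Next I argue that the threshold depends on $y$ only through $k$. The threshold $\alpha_t$ is a deterministic function $G$ of the past $p$-values and thresholds. By induction on $j$, the permuted thresholds $\alpha_{j,\pi}$ are functions of $\{p_{i,\pi}^{w,k}\}_{i<j}$ only, so $\alpha_{t,\pi}^k$ also depends on $y$ only via $k$. The fixed-threshold case is trivial. Therefore $S_t^y(\pi)=\ind\{p_{t,\pi}^{w,k}\le\alpha_{t,\pi}^k\}$, which gives $\widehat R_t(y)=\widehat R_t^{k}$ with $k=\ind\{y\le c_t\}$.

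This bookkeeping step is where I expect the main care to be needed. One has to check that the weights follow positions while the labels and scores follow $\pi$. One also has to check that the position-$t$ $p$-value uses $\widehat F_{\pi(t)}$ and correctly handles the case $\pi(t)=t$, where the imputed point sits last and never enters its own sum. Finally, the identity permutation is included in both $\widehat R_t^k$ by definition, which is consistent because $S_t=1$ is assumed.

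For the second part, restrict to $y$ in one region, say with index $k$. The score is a last-point score, so $V_t^y(\pi)=v(X_{\pi(t)},Y_{\pi(t)})$ if $\pi(t)\neq t$, and $V_t^y(\pi)=v(X_t,y)$ if $\pi(t)=t$. Permutations with $\pi(t)=t$, including $\pi_0$, always contribute $\ind\{V_t^y(\pi_0)\le V_t^y(\pi)\}=1$. Therefore
\[
p_t(y)>\alpha \iff \#\{\pi\in B_t^k: v(X_t,y)\le v(X_{\pi(t)},Y_{\pi(t)})\} > \alpha|\widehat R_t^k| - (|\widehat R_t^k|-|B_t^k|).
\]
Equivalently, the number of $\pi\in B_t^k$ with $v(X_{\pi(t)},Y_{\pi(t)})<v(X_t,y)$ must be less than $(1-\alpha)|\widehat R_t^k|$. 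I would then repeat the counting argument from the proof of Proposition~\ref{prop: covariate}. That inequality holds iff $v(X_t,y)$ is at most the $\lceil(1-\alpha)|\widehat R_t^k|\rceil$-th smallest element of $\{v(X_{\pi(t)},Y_{\pi(t)})\}_{\pi\in B_t^k}\cup\{+\infty\}$, which is $\widehat q_k$ with the stated $\beta$; the boundary case where $\beta>1$ gives $+\infty$. Intersecting with each region $\{y>c_t\}$ and $\{y\le c_t\}$ and taking the union gives~\eqref{eq: pvalue prediction set}.
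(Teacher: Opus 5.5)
Your proposal is correct and follows the same route as the paper's proof. You split the position-$\pi^{-1}(t)$ term out of the permuted weighted $p$-value to show it depends on $y$ only through $\ind\{y\le c_t\}$, argue the online threshold inherits this because it is a function of past $p$-values and thresholds, and then invert within each region exactly as in Proposition~\ref{prop: covariate}. Your version is slightly more explicit than the paper's, which computes only the position-$t$ $p$-value and asserts the threshold claim, whereas you verify the identity with $p^{w,k}_{j,\pi}$ at every position $j\le t$ and spell out the induction on the permuted thresholds.
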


\begin{algorithm}[htbp]
\caption{PEMI with Random Permutation for conformal p-value and e-value}
\label{alg: conformal p-value jomi online}
\begin{algorithmic}[1]
\REQUIRE Data sequence $(Z_1,\dots,Z_{t-1},X_t)$, confidence level $\alpha \in (0,1)$, number of permutations $M$, selection rule $\mathcal{S}$, conformity score function $v(\cdot,\cdot)$, rule $\in \{\text{p-value}, \text{e-value}\}$.
    \STATE Compute $S_t = \mathcal{S}_t(Z_1, \dots, Z_{t-1}, X_t)$;
    \IF{$S_t = 1$}
        \STATE Randomly sample $M$ permutations $\Pi_t^{(M)}=(\pi^{(1)}, \dots, \pi^{\text{(M)}})$ from \text{Unif}($\Pi_t$);
        \STATE Compute the threshold $\alpha_t$;
        \STATE \textbf{if} rule = \texttt{p-value} \textbf{then}
        \STATE \hspace{1em} Compute $\widehat{R}_t^{k}$ and $\widehat{B}_t^{k}$ as~\eqref{eq:pvalue reference set}  $\text{for }k=0,1$;
        \STATE \textbf{if} rule = \texttt{e-value} \textbf{then}
        \STATE \hspace{1em} Compute $\widehat{R}_t^{k}$ and $\widehat{B}_t^{k}$ as in Proposition~\ref{prop: elond threshold} in Appendix~\ref{app:subsec_elond_detail} $\text{for }k=0,1$;
        %\STATE Compute $\widehat{R}_t^{k}$ and $\widehat{B}_t^{k}$ as~\eqref{eq:pvalue reference set} and~\eqref{eq: pvalue Bt} $\text{for }k=0,1$;
        \STATE Compute $\widehat{q}_k
        =\mathrm{Quantile} \Bigl(\frac{\lceil (1-\alpha)\cdot |R_t^k| \rceil}{|B_t^k|};\ 
        \{ v(X_{\pi(t)}, Y_{\pi(t)}) \}_{\pi \in B_t^k}\cup\{+\infty\}\Bigr) 
        \text{for }k=0,1$;
         \STATE Compute $\widehat{\mathcal{C}}_{\alpha, t}(\Pi_t^{(M)})$ as in~\eqref{eq: pvalue prediction set}. 
        % \[
        % \widehat{\mathcal{C}}_{\alpha, t}^\text{mc}=\;\bigl\{\,y\in\mathcal Y:\:y>c_{t},\;v(X_t,y)\le\widehat{q}_0\bigr\}\cup\bigl\{\,y\in\mathcal Y:\;y\le c_{t},\;v(X_t,y)\le\widehat{q}_1\bigr\}
        % \]
    \ENDIF
\ENSURE Prediction set $\widehat{\mathcal{C}}_{\alpha, t}(\Pi_t^{(M)})$.
\end{algorithmic}
\end{algorithm}

\subsubsection{e-value-based selection}
\label{subsec: evalue}

\emph{E-values} provide an alternative to p-value-based hypothesis testing~\citep{vovk2005algorithmic,ramdas2025hypothesis} with recent applications in conformal selection~\citep{xu2023onlinemultipletestingevalues}.  
Here we present PEMI sets for selections from e-value-based procedures, with e-values based on the conformal p-values.

We study the e-LOND algorithm along with the form of e-values proposed in~\citet{xu2023onlinemultipletestingevalues}, which ensure FDR control for online conformal selection, thereby providing a useful tool when test samples arrive sequentially and the selection decisions need to be issued in real-time.  
In addition to the common conformal selection setup, we assume access to $n\in \NN^+$ offline data which will be used in selection, and our PEMI prediction set follows the generalized approach in Section~\ref{subsec:offline_data}.  
At each time \(t\), for every past index \(j\in[t-1]\), we first construct two leave-one-out conformal p-values:
\[
p_{j}^{-}
:=\frac{\sum_{i=-n+1}^{0} \ind\{\widehat F_i \geq\widehat F_{j}, Y_i\leq c_i\}}
        {n+1},
\qquad
p_{j}^{+}
:=\frac{1+\sum_{i=-n+1}^{0} \ind\{\widehat F_i \geq\widehat F_{j}, Y_i\leq c_i\}}
        {n+1}.
\]
We then apply the LOND procedure~\citep{javanmard2015onlinecontrolfalsediscovery} separately to \((p_{j}^{ -})_{j\in[t-1]}\) and \((p_{j}^{+})_{j\in[t-1]}\) to obtain two selection sets
\(\widehat{\mathfrak R}_{t-1}^{-}\) and \(\widehat{\mathfrak R}_{t-1}^{+}\), which yield adaptive levels
$ 
\widehat{\alpha}_t^{\text{LOND},-}
:=\alpha\,\gamma_t (|\widehat{\mathfrak R}_{t-1}^{-}|+1 )
$ and $
\widehat{\alpha}_t^{\text{LOND},+}
:=\alpha\,\gamma_t (|\widehat{\mathfrak R}_{t-1}^{+}|+1).
$ 
The two sets are used only to construct the e-values, defined as
$ 
E_t^{\mathrm{e-LOND}}
:= \ind\! \{p_t\le \widehat{\alpha}_t^{\text{LOND},+} \} / \widehat{\alpha}_t^{\text{LOND},-}$, where %\quad \text{where} \quad 
$p_{t}
:=\frac{1+\sum_{i=-n+1}^{0} \ind\{\widehat F_i \geq\widehat F_{t}, Y_i\leq c_i\}}
        {n+1}$.
Finally, the selection set of e-LOND is determined sequentially: 
for every time point $i$, letting $\widehat{\mathfrak{R}}_{i-1}^\text{e-LOND}$ be the set of time points selected by e-LOND up to time $t-1$, the unit $t$ is selected if and only if %$E_t^{\text{e-LOND}}$ surpasses an adaptive threshold:  
\begin{equation}
 E_t^\text{e-LOND}\geq 1/\alpha_t^\text{e-LOND},\quad \text{where}\quad    \alpha_t^\text{e-LOND}:=\alpha \gamma_t \cdot (\vert \widehat{\mathfrak R}_{t-1}^\text{e-LOND} \vert +1).
\end{equation} 

To compute the PEMI set after e-LOND selection, the idea is still based on the fact that $\hat{R}_t(y;\Pi_t^{(M)})$ depends on $y$ only through whether $y\leq c_t$. The e-LOND selection event can also be rewritten as a p-value thresholding event with a complicated data-dependent cutoff; the calculation can therefore also follow similar ideas. We defer  the exact formulas to Proposition~\ref{prop: elond threshold} in Appendix~\ref{app:subsec_elond_detail} for brevity. 

\subsection{Selection based on earlier outcomes}

The final class of selection rules we consider is based on weighted quantiles of earlier outcomes. 
Specifically, a test point is selected if a user-specified score $\widehat{\mu}(X_t)$ exceeds the weighted quantile of $\{Y_i\}_{i=1}^{t-1}$, where the weights $\{w_i\}_{i=1}^t$ are a fixed sequence and $\hat{\mu}\colon \cX\to \RR$ is a pre-trained score function. 
With proper transformations it addresses selection rules based on weighted quantiles of given functions of the outcomes, such as $Y_i-f(X_i)$, $Y_i\cdot f(X_i)$, and so on, for any function $f(\cdot)$.

Given weights $\{w_i\}_{i<t}$, we denote the weighted quantile of a set of values $\{y_i\}_{i<t}$ as $\text{wQ}(\beta; \{y_i\}_{i=1}^{t-1}) := \inf\{z\in \RR\colon \sum_{i=1}^{t-1}w_i \ind\{y_i\leq z\}/\sum_{i=1}^{t-1}w_i \geq \beta\}$ for any $\beta\in[0,1]$. 
The key to efficient computation is as follows. Partition the range of $y\in \RR$ into intervals via the scores $\{\hat{\mu}_1, \dots, \hat{\mu}_{t-1}\}$. Within each interval, the ordering of $y$ and $\{\hat{\mu}_i\}_{i=1}^{t-1}$ remains the same, so the reference set $\hat{R}_t(y)$ is constant. We can thus compute $\hat{R}_t(y)$ for a representative value of $y$ in each region and merge them to obtain the final prediction set. 

Proposition~\ref{prop: earlier outcome partition} provides the explicit form of the prediction set, where we only present $\widehat{\mathcal{C}}_{\alpha,t}(\Pi_t^{(M)})$ for simplicity. The detailed proof is in Appendix~\ref{proof: prop earlier outcomes}, and Algorithm~\ref{alg: earlier outcomes jomi online} summarizes the procedure.

\begin{prop}\label{prop: earlier outcome partition}
 Suppose the $t$-th data is selected as $\widehat{\mu}(X_t) \geq \textnormal{wQ}(1-\beta; \{Y_i\}_{i=1}^{t-1}). $ We denote the sorted sequence of $\{\hat\mu(X_i)\}_{i=1}^{t-1}$ by $\widehat{\mu}_{(1)} \leq \widehat{\mu}_{(2)} \leq \cdots \leq \widehat{\mu}_{(t-1)}$, which partitions the range of $y$ into open intervals $I_1, \dots, I_t$, where
$ 
I_1 = (-\infty, \widehat{\mu}_{(1)})$, 
$I_j = (\widehat{\mu}_{(j-1)}, \widehat{\mu}_{(j)})$, for $j=2,\dots,t-1$, 
and $I_t = (\widehat{\mu}_{(t-1)}, +\infty)$. 
Let $\Pi_t^{(M)}$ denote the random permutation set over $\{1, \dots, t\}$. Then, the PEMI prediction set at time $t$ is given by
\begin{equation}\label{eq: final set for earlier outcomes}
    \widehat{\mathcal{C}}_{\alpha, t}(\Pi_t^{(M)}) =
    \left( \textstyle{\bigcup_{j=1}^{t}} \widehat{\mathcal{C}}_{\alpha, t}^{(j)} \right) \;\cup\; \widehat{\mathcal{C}}_{\alpha, t}^{\mathrm{bd}},
\end{equation}
Here, we define the boundary prediction subset as 
\begin{equation}\label{eq: prediction set for bd}
    \widehat{\mathcal{C}}_{\alpha, t}^{\mathrm{bd}} = \left\{\, y_k \in \{\widehat{\mu}_{(k)}\}_{k=1}^{t-1} : p_t(y_k;\Pi_t^{(M)}) > \alpha \right\},
\end{equation}
where $p_t(y;\Pi_t^\star)$ is defined in~\eqref{eq: deter interval}. 
For each interval $I_j$, the interval-wise prediction subset is  
\begin{equation}\label{eq: prediction set for open intervals}
    \widehat{\mathcal{C}}_{\alpha, t}^{(j)} = \Big\{\, y \in I_j : v(X_t, y) \leq \mathrm{Quantile} \bigl(\textstyle{\frac{\lceil (1-\alpha)\cdot|\widehat{R}_t^{(j)}| \rceil}{|B_t^{(j)}|}};\ 
        \{ v(X_{\pi(t)}, Y_{\pi(t)}) \}_{\pi \in B_t^{(j)}}\cup\{+\infty\}\bigr) \Big\},
\end{equation}
where $B_t^{(j)} = \{\pi \in \widehat{R}_t^{(j)} :  \pi(t) \neq t\}$, and, denoting we $\widehat{\mu}_{(0)}=-\infty$ and $\widehat{\mu}_{(t)}=+\infty$,
\begin{equation}\label{eq: ref set of open intervals}
    \widehat{R}_t^{(j)} = \big\{\, \pi \in \Pi_t^{(M)} : \widehat{\mu}({X_{\pi(t)}}) \le\widehat{\mu}_{(j-1)},p_\pi^{(1)} \le \alpha \big\}\cup\big\{\, \pi \in \Pi_t^{(M)} : \widehat{\mu}({X_{\pi(t)}}) \ge\widehat{\mu}_{(j)},p_\pi^{(0)} \le \alpha \big\}\cup\{\pi_0\},
\end{equation}
where 
$p_\pi^{(k)}=\frac{\sum_{1\leq i\leq t-1, i\neq \pi^{-1}(t)} w_i\ind\{Y_{\pi(i)}\geq\widehat{\mu}({X_{\pi(t)}})\}+w_{\pi^{-1}(t)}\cdot k}{\sum_{i=1}^{t-1}w_i}$, $k=0,1$. 
\end{prop}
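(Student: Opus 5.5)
The plan is to reduce the proposition to the covariate-dependent case (Proposition~\ref{prop: covariate}). First we show that for $y$ in the interior of each $I_j$ the reference set $\widehat R_t(y;\Pi_t^{(M)})$ does not depend on $y$ and equals $\widehat R_t^{(j)}$. Then we invert the p-value on each interval using the same counting argument as in Proposition~\ref{prop: covariate}. Finitely many boundary points $\widehat\mu_{(k)}$ break this constancy, so we simply evaluate $p_t(y_k;\Pi_t^{(M)})$ there directly. That gives $\widehat{\mathcal C}^{\mathrm{bd}}_{\alpha,t}$, and the final set is the union in~\eqref{eq: final set for earlier outcomes}.

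\emph{Step 1: rewrite the permuted selection event.} Fix $\pi\in\Pi_t^{(M)}$ and $y$. By~\eqref{eq:permuted_data_t}, the permuted data at time $t$ contain:
\begin{itemize}
\item the covariate $X_{\pi(t)}$ in the last position;
\item labels $Y_{\pi(i)}$ in positions $i\neq\pi^{-1}(t)$, $i\le t-1$;
\item the imputed label $y$ in position $\pi^{-1}(t)$, whenever $\pi^{-1}(t)\le t-1$.
\end{itemize}
By the definition of $\mathrm{wQ}$, the event $\widehat\mu(X_{\pi(t)})\ge \mathrm{wQ}(1-\beta;\cdot)$ is equivalent to a weighted-fraction condition on the labels lying on one side of the level $\widehat\mu(X_{\pi(t)})$. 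Written as a tail fraction, this is exactly $p$-type inequality $p\le\alpha$ in~\eqref{eq: ref set of open intervals}, with the paper's matching of $\beta$ and the level. The only place $y$ enters is through the single term $w_{\pi^{-1}(t)}\ind\{y\ge \widehat\mu(X_{\pi(t)})\}$. Hence $S_t^y(\pi)=\ind\{p_\pi^{(k)}\le\alpha\}$ with $k=\ind\{y\ge\widehat\mu(X_{\pi(t)})\}$.

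For $\pi$ with $\pi(t)=t$, $y$ does not enter at all. Here I will adopt the convention that the $w_{\pi^{-1}(t)}$ term is absent, so $p_\pi^{(0)}=p_\pi^{(1)}$. Such $\pi$ contribute score ties only, exactly as in Proposition~\ref{prop: covariate}.

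\emph{Step 2: constancy on $I_j$.} If $\pi(t)\neq t$, then $\widehat\mu(X_{\pi(t)})$ is one of $\widehat\mu_{(1)},\dots,\widehat\mu_{(t-1)}$. For $y\in I_j$ we therefore have either $\widehat\mu(X_{\pi(t)})\le\widehat\mu_{(j-1)}$, in which case $k=1$, or $\widehat\mu(X_{\pi(t)})\ge\widehat\mu_{(j)}$, in which case $k=0$. No other case occurs because $y$ avoids all the $\widehat\mu_{(i)}$. For $\pi$ with $\pi(t)=t$, the value of $k$ is irrelevant, so either branch classifies $\pi$ correctly. This yields $\widehat R_t(y;\Pi_t^{(M)})=\widehat R_t^{(j)}$ for all $y\in I_j$.

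\emph{Step 3: quantile inversion.} With $\mathcal V_t=v(x_t,y_t)$, each $\pi\in\widehat R_t^{(j)}\setminus B_t^{(j)}$ has $V_t^y(\pi)=V_t^y(\pi_0)$ and so always counts in the numerator of~\eqref{eq: deter interval}. Each $\pi\in B_t^{(j)}$ counts iff $v(X_{\pi(t)},Y_{\pi(t)})\ge v(X_t,y)$. Writing $r=|\widehat R_t^{(j)}|$ and $b=|B_t^{(j)}|$, the condition $p_t(y)>\alpha$ becomes
\begin{equation*}
\#\{\pi\in B_t^{(j)}: v(X_{\pi(t)},Y_{\pi(t)})\ge v(X_t,y)\}>\alpha r-(r-b).
\end{equation*}
Following verbatim the proof of Proposition~\ref{prop: covariate}, this is equivalent to
\begin{equation*}
v(X_t,y)\le \mathrm{Quantile}\Bigl(\tfrac{\lceil(1-\alpha)r\rceil}{b};\{v(X_{\pi(t)},Y_{\pi(t)})\}_{\pi\in B_t^{(j)}}\cup\{+\infty\}\Bigr),
\end{equation*}
which is~\eqref{eq: prediction set for open intervals}.

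The main obstacle I expect is the bookkeeping in Step 1. Matching the weighted-quantile inequality to the ``$p\le\alpha$'' form requires the correct direction of inequalities and care with ties at $Y_{\pi(i)}=\widehat\mu(X_{\pi(t)})$. Handling the degenerate permutations with $\pi(t)=t$ also needs attention, since for them $\widehat\mu(X_t)$ may lie anywhere. I would check these by writing out the $\mathrm{wQ}$ definition explicitly before anything else.
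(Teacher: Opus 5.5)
Your route is the paper's own: rewrite the rule as $p\le\alpha$, observe that $y$ enters only through $w_{\pi^{-1}(t)}\ind\{y\ge\widehat\mu(X_{\pi(t)})\}$, split on $\widehat\mu(X_{\pi(t)})\le\widehat\mu_{(j-1)}$ versus $\widehat\mu(X_{\pi(t)})\ge\widehat\mu_{(j)}$, invert as in Proposition~\ref{prop: covariate}, and impute the boundary points directly.

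\textbf{Gap in Step 2.} The step that fails is the one you flagged at the end. In Step 2 you say that for $\pi(t)=t$ ``either branch classifies $\pi$ correctly.'' That does not hold. Both branches of~\eqref{eq: ref set of open intervals} are gated by a condition on $\widehat\mu(X_{\pi(t)})=\widehat\mu(X_t)$, and $\widehat\mu(X_t)$ is not among $\widehat\mu_{(1)},\dots,\widehat\mu_{(t-1)}$. Generically it lies in the interior of some interval $I_{j^*}$.

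For $j=j^*$ neither branch fires. Every $\pi\in\Pi_t^{(M)}$ with $\pi(t)=t$ and $\pi\neq\pi_0$ is therefore excluded from $\widehat R_t^{(j^*)}$. Yet such a $\pi$ belongs to $\widehat R_t(y)$ for every $y$ whenever $S_t(\pi)=1$. With equal weights this covers every such $\pi$, since then $S_t(\pi)=S_t=1$.

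These permutations lie outside $B_t^{(j^*)}$ and contribute only ties. They still enter $|\widehat R_t^{(j^*)}|$, and hence the level $\lceil(1-\alpha)|\widehat R_t^{(j^*)}|\rceil/|B_t^{(j^*)}|$. Dropping them lowers the p-value, so the formula can return a strictly smaller set than PEMI on $I_{j^*}$. The identity $\widehat R_t(y)=\widehat R_t^{(j)}$ is thus false on $I_{j^*}$ for the reference set as literally stated.

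The paper's proof makes the same slip: it asserts that $\widehat\mu(X_{\pi(t)})$ cannot lie in the open interval, which is true only for $\pi(t)\neq t$. Its formula for $p_\pi^{(k)}$ also adds $w_t k$ when $\pi^{-1}(t)=t$; your convention of dropping that term is correct.

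\textbf{Repair.} Since $y$ never enters $\mathcal{D}_{\pi,t}^y$ when $\pi(t)=t$, restrict the two branches of~\eqref{eq: ref set of open intervals} to $\pi(t)\neq t$. Then add the $j$-independent set $\{\pi\in\Pi_t^{(M)}:\pi(t)=t,\ p_\pi^{(0)}\le\alpha\}$, with the $w_{\pi^{-1}(t)}$ term absent. With this change, constancy on each $I_j$ holds and your Step 3 goes through verbatim.

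\textbf{Ties in Step 1.} Your concern here is minor. The condition $\widehat\mu(X_t)\ge\mathrm{wQ}(1-\alpha;\{Y_i\}_{i=1}^{t-1})$ is equivalent to $\sum_{i=1}^{t-1} w_i\ind\{Y_i>\widehat\mu(X_t)\}/\sum_{i=1}^{t-1} w_i\le\alpha$, with a strict inequality inside the indicator. The paper simply takes the $\ge$ version as the definition of the rule. The two versions differ only on ties among observed labels. For $y$ in an open $I_j$ and $\pi(t)\neq t$, the imputed label never ties with $\widehat\mu(X_{\pi(t)})$.
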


\begin{algorithm}[H]
\caption{PEMI for  Selection Based on Earlier Outcomes}
\label{alg: earlier outcomes jomi online}
\begin{algorithmic}[1]
\REQUIRE Data sequence $(Z_1,\dots,Z_{t-1},X_t)$, confidence level $\alpha\in(0,1)$, number of permutations $M$, selection rule $\mathcal{S}$, conformity score function $v(\cdot, \cdot)$.
    \STATE Compute $S_t = \mathcal{S}_t(Z_1, \dots, Z_{t-1}, X_t)$;
    \IF{$S_t = 1$}
        % \STATE Construct the full permutation set $\Pi_t$ over the indices $\{1, \dots, t\}$;
        \STATE Randomly sample $M$ permutations $\Pi_t^{(M)}=(\pi^{(1)}, \dots, \pi^{\text{(M)}})$ from \text{Unif}($\Pi_t$);
        %\STATE Randomly sample $\Pi_t^M=\{\pi^{(1)}, \dots, \pi^{\text{(M)}}\}$ from $\Pi_t$, the full permutation set over $\{1,\dots,t\}$;
        \STATE Partition the range of $y$ via the sorted sequence $\hat{\mu}_{(1)}  \le \cdots \le \hat{\mu}_{(t-1)}$ into open intervals $I_1, \dots, I_t$;

        \STATE Construct the reference set $R_t^{(j)} $ for $j=1, \dots, t $ as~\eqref{eq: ref set of open intervals};

        \STATE Compute prediction sets $\widehat{\mathcal{C}}_{\alpha,t}^{(j)}$ for $j=1, \dots, t $ and $\widehat{\mathcal{C}}_{\alpha,i}^{\mathrm{bd}}$ as~\eqref{eq: prediction set for open intervals} and~\eqref{eq: prediction set for bd};

        \STATE Merge interval-wise and boundary prediction set to form the final prediction set $\widehat{\mathcal{C}}_{\alpha, t}(\Pi_t^{(M)})$ in~\eqref{eq: final set for earlier outcomes}.
 
    \ENDIF
\ENSURE Prediction set $\widehat{\mathcal{C}}_{\alpha, t}(\Pi_t^{(M)})$.
\end{algorithmic}
\end{algorithm}

%!tex root = main.tex

\section{Real application in drug discovery}
\label{sec:real}

We demonstrate the utility of PEMI through a real drug discovery dataset under a wide range of asymmetric selection rules following Section~\ref{sec:instantiation}. 
In drug discovery, predictive models are widely employed to assist in selecting promising drug candidates. In such high-stakes scenarios, quantifying the uncertainty of predictions is critical for understanding and controlling errors in later developments. 
Real selection processes are often online, with drug candidates evaluated sequentially over time, leading to various asymmetric selection rules.

We focus on a drug-target interactions (DTI) task, where the goal is to construct prediction sets for the unknown (continuously-valued) binding affinities of pairs selected in an online process. Our results show that across various selection rules, PEMI consistently delivers precise selection-conditional coverage. In contrast, vanilla conformal prediction  fails to achieve target coverage in most cases, and in the few cases where the coverage is attained, the size of the resulting prediction sets is exceedingly large.

\vspace{-1em}

\paragraph{Dataset and model.} We utilize the DAVIS dataset~\citep{PMID:22037378}, which contains real-valued binding affinities for 30,060 drug-target pairs. The covariates consist of the encoded structures of drugs and disease targets. We randomly sample 20\% of the dataset to train a three-layer neural network using DeepPurpose library~\citep{Huang_2020}, which serves as our regression model $\widehat{\mu}\colon \mathcal X \to \mathbb R$ used in the residual conformity score $\cV_t(z_1,\dots,z_t)=|y_t-\hat\mu(x_t)|$. The remaining 80\% of the data form the online data sequence. %For each experiment, we randomly sample an appropriate number of instances from this pool as test data.

\vspace{-1em}

\paragraph{Selection rules.} We consider three types of realistic selection rules $\mathcal{S}$ (details to be introduced later):
\begin{enumerate}
    \item \emph{Covariate-dependent selection}: selecting drugs based solely on covariate information.
    \begin{enumerate}
        \item Decision-driven selection rules. When the scientist operates under a fixed budget for investigating drug candidates, the selection strategy may become increasingly stringent as more pairs are chosen, so that later selections are restricted to candidates with higher predicted binding affinity.
        \item Weighted quantile/average. When the scientist prioritizes predictions from more recent time points, the selection strategy can be based on a weighted quantile or average of the previously predicted affinities, where recent drug-target pairs receive higher weights.
        \item Selection based on model uncertainty. The scientist select drug candidates when their predicted affinities disagree between multiple models (high variance), as such disagreement may suggest the need for further experimental investigation.
    \end{enumerate}
    \item \emph{Conformal selection}: selecting drugs based on conformal p-values or e-values.
    \begin{enumerate}
        \item Fixed threshold. When the scientist evaluates each drug-target pair individually, the selection rule can be based on a conformal p-value with a fixed threshold, so that a candidate is pursued only if its conformal p-value falls below a pre-specified level.
        \item E-LOND. When the scientist aims to identify a sequence of promising drug-target pairs, the selection strategy can follow the e-LOND procedure to control the FDR below some $q  \in (0,1)$.
    \end{enumerate}
    \item \emph{Selection based on earlier outcomes}: the scientist selects drugs based on past outcomes, i.e., the current drug-target pair is selected only if its predicted affinity $\widehat{\mu}_t$ exceeds a quantile or weighted quantile of the realized affinities $\{y_s\}_{s<t}$; building upon true labels rather than past predictions may better reflect empirical performance and mitigate  model misspecification.
\end{enumerate}

\vspace{-0.5em}

\paragraph{Evaluation metrics.} The selection-conditional coverage at each time point $t$ is evaluated by a consistent estimator $\widehat{\text{Cov}}_t=\frac{\widehat{P}(Y_t \in \widehat{\mathcal{C}}_{\alpha,t},  S_t=1)}{\widehat{P}(S_t=1)}$, where $\widehat{P}$ is the empirical fraction across repeated experiments. To assess the efficiency, we evaluate at each time point the \emph{median} length of the prediction sets over repeated experiments in which the selection event occurs (the \emph{mean} is inapplicable as infinite-size sets may occur due to small labeled sample size in earlier time steps). Furthermore, we compute at each time point the proportion of selected samples whose sets have infinite length to evaluate the effectiveness of the prediction sets.

\subsection{Covariate-dependent selection}
\subsubsection{Decision-driven selection rules}

We first consider a decision-driven selection rule, which incorporates both past selections and the current covariate. Specifically, we set  $S_t=\ind\{\widehat{\mu}_t \geq \tau_1+\frac{1}{\tau_0}\sum_{i=1}^{t-1} S_i\}$, where $\tau_0=200,\ \tau_1=5.5$. The experimental setting is fully online with $T=200$ and no offline data.
This rule is used in the EXPRESS paper~\citep{sale2025online} to demonstrate validity gap in the CAP method~\citep{bao2024cap}; the latter is thus excluded from the comparison. 
We adopt the confidence level of $\alpha=0.4$ used in~\cite{sale2025online}. The methods under comparison are $\hat{\cC}_{\alpha,t}(\Pi_t^{(M)})$ (\texttt{Ours\_Upper}) and $\hat{\cC}_{\alpha,t}^{\text{rand}}(\Pi_t^{(M)})$ (\texttt{Ours\_Randomize}),  {EXPRESS},  OnlineSCI with three parameter settings (\texttt{OnlineSCI1/2/3}) from~\cite{humbert2025online}, and vanilla conformal prediction (\texttt{Vanilla})  that uses all past time points as calibration data.

Figure~\ref{fig:real decision} presents the empirical selection-conditional coverage, the median prediction set size, and the fraction of prediction sets with infinite length for the four methods across $N=10000$ runs. For selection-conditional coverage, the green curve (vanilla CP) remains well below the target level throughout. The orange curve (EXPRESS), although with theoretical guarantees, is overly conservative because of its stringent requirements on reference set construction (based on data point swapping), and thus also stays far above the target level. As OnlineSCI relies on certain hyper-parameters, it can be unclear a priori how to choose them, and its behavior is sensitive to such choice: OnlineSCI1 (pink) substantially under-covers in the early stage due to an unfavorable initialization; OnlineSCI2 (light blue), with well-chosen parameters, remains close to the target level throughout; and OnlineSCI3 (gray), with an overly small $\gamma$ sequence, is overly conservative and fails to converge. 
In contrast, PEMI (blue and red) consistently stays above and close to the target level.  

Regarding prediction set size, EXPRESS is overly conservative: when $t>180$, the median length reaches infinity (i.e., no reference calibration data can be found as the criterion becomes too stringent), whereas our methods and OnlineSCI remain reasonably efficient. 
The pattern in the last panel is revealing: the fraction of infinite-length sets for EXPRESS increases over time as its requirements on the reference set become more stringent. 
In contrast, with a moderate number of labeled points such as 40, PEMI always leads to non-vacuous prediction sets. 
This shows the statistical benefit of building on permutation tests rather than data swapping.

\begin{figure} 
  \centering
  \includegraphics[width=1.0\linewidth]{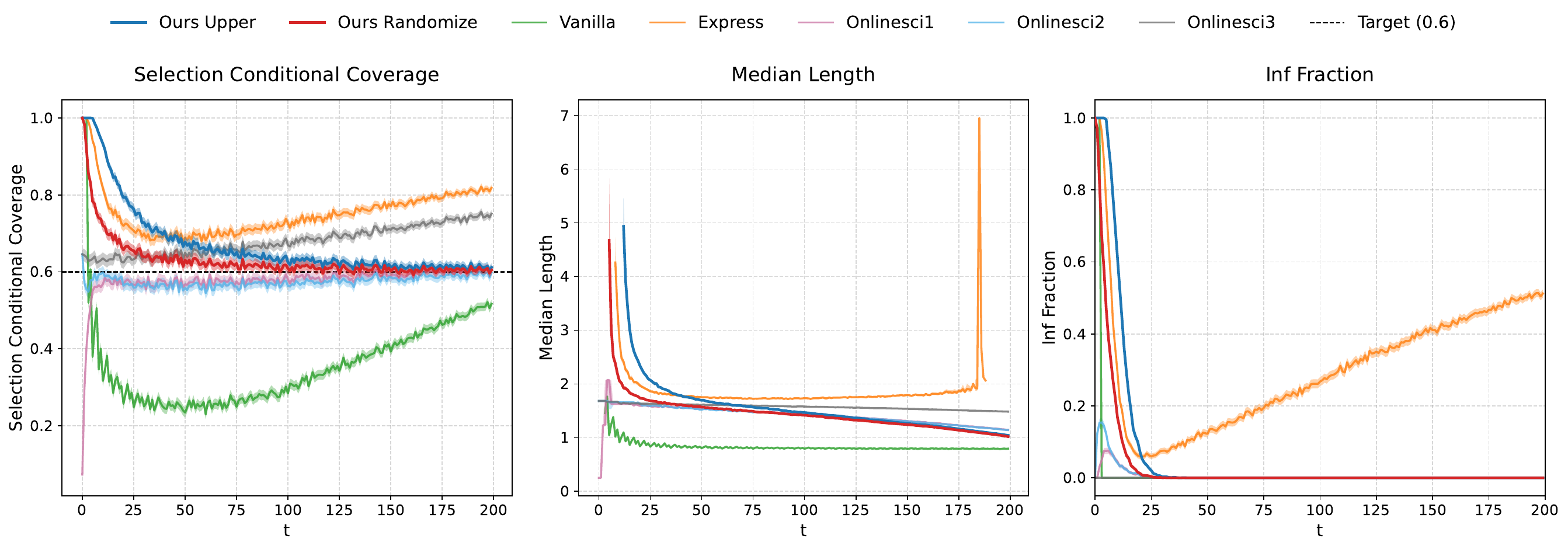}
  \caption{{\small Selection-conditional coverage (left), median prediction set size (middle), and the fraction of prediction sets  with infinite length (right) in drug property prediction under a decision-driven selection rule for PEMI (\texttt{Ours Upper}), randomized PEMI (\texttt{Ours randomize}), vanilla conformal prediction (\texttt{Vanilla}), EXPRESS (\texttt{EXPRESS}) and OnlineSCI with three parameter choices (\texttt{OnlineSCI1/2/3}). The purple dashed line is the target coverage level $1-\alpha=0.6$.}}
  \label{fig:real decision}
\end{figure}

\subsubsection{Weighted quantile or average of predictions}
\label{subsubsec:w_quantile_pred}

From now on, we demonstrate the application of PEMI to selection rules that cannot be addressed by the EXPRESS framework. Therefore, the comparison is between our method, vanilla conformal prediction, and OnlineSCI (even though the theoretical results in~\cite{humbert2025online} might not apply to the selection rules here, the procedures can be executed anyway). 

The selection rule is based on the weighted quantile or average of the predictions, with  $S_t=\ind\{\widehat{\mu}_t > \text{wQ}(1-q_{sel}, \{\widehat{\mu}_i\}_{i=1}^{t-1})\}$ and $S_t=\ind\{\widehat{\mu}_t > \frac{\sum_{i=1}^{t-1}w_i \widehat{\mu}_i}{\sum_{i=1}^{t-1}w_i}\}$, where we define $q_{sel}=0.1$ and the pre-specified weights $w_i \propto {0.5}^{(t-i)}$, which assigns higher importance to predictions closer to the current test point. 
For consistency, we keep the confidence level at $\alpha=0.4$ and use the absolute residual conformity score.

Figure~\ref{fig:real wq of mu} and Figure~\ref{fig:real wa of mu} respectively present the results under these two selection rules across 10,000 runs. For selection-conditional coverage, our methods remain above the target level at all time points. In comparison, vanilla CP severely undercovers for most time points. OnlineSCI1 (with a well-tuned parameter) falls below the target level in the early stage but quickly adapts and converges to coverage close to the target, whereas OnlineSCI2 (with less well-tuned parameter) fails to adapt to the target level in the time horizon evaluated. Regarding the interval length, our methods and OnlineSCI moderately enlarge the prediction sets compared with vanilla CP so as to ensure valid coverage.

% Figure~\ref{fig:real wq of mu} and Figure~\ref{fig:real wa of mu} respectively present the results under these two selection rules across 10,000 runs. For selection-conditional coverage, our methods remain above the target level at all time points, whereas vanilla CP fails to achieve the target at most points. Regarding the interval length, our methods moderately enlarge the prediction sets compared with vanilla CP so as to ensure valid coverage. 

\begin{figure}[htbp]
  \centering
  \includegraphics[width=0.9\linewidth]{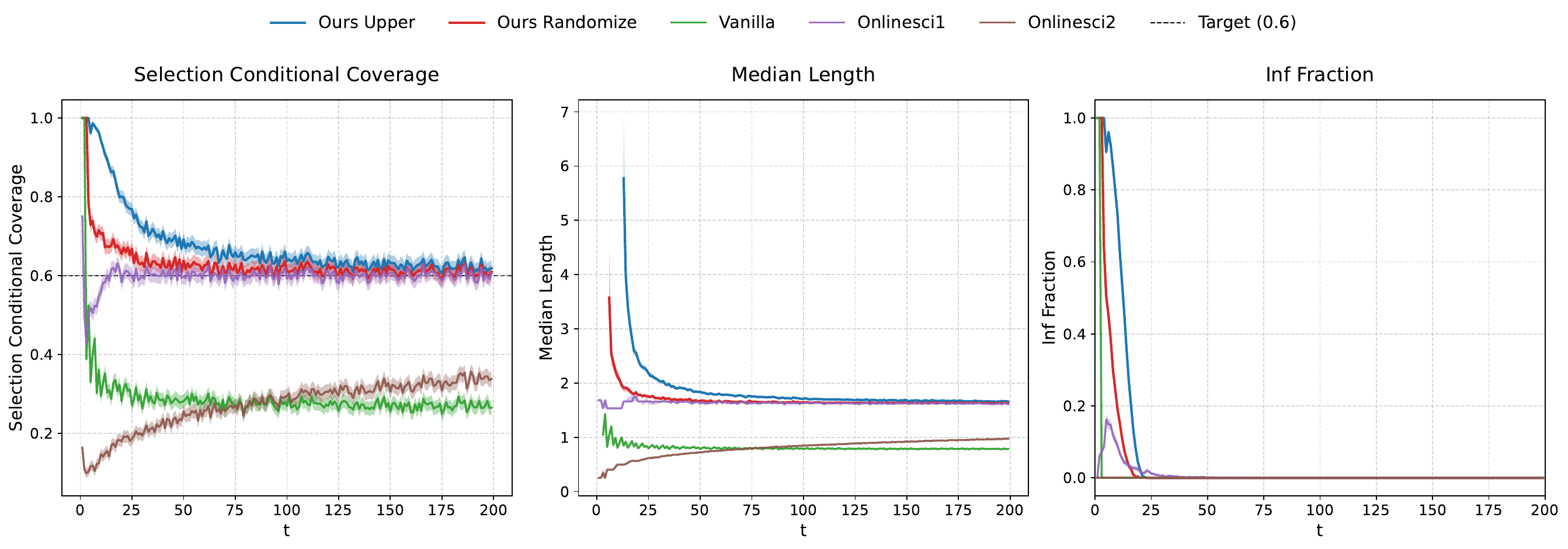}
  \caption{{\small Results under a selection rule based on weighted quantile of predictions when applying PEMI (\texttt{Ours Upper}), randomized PEMI (\texttt{Ours randomize}), vanilla conformal prediction (\texttt{Vanilla}) and  OnlineSCI \texttt{OnlineSCI}). The purple dashed line is the target coverage level $1-\alpha=0.6$. Details are otherwise as Figure~\ref{fig:real decision}.}}
  \label{fig:real wq of mu}
\end{figure}

\begin{figure}[htbp]
  \centering
  \includegraphics[width=0.9\linewidth]{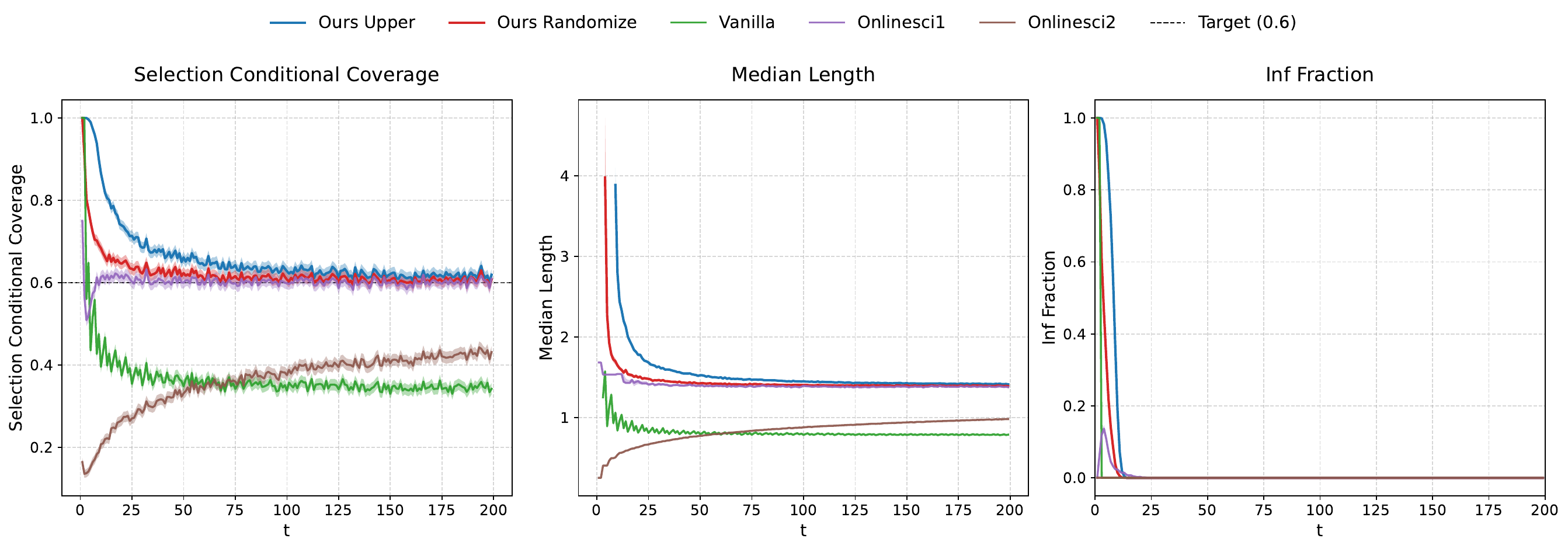}
  \caption{{\small Results under selection  based on weighted average of earlier predictions; details otherwise as Figure~\ref{fig:real wq of mu}.}}
  \label{fig:real wa of mu}
\end{figure}

\subsubsection{Selection based on model uncertainty and online optimization}

Third, we consider selection rules based on model uncertainty from multiple models. Using the DeepPurpose library, we pre-train three different models  $\{f^{(j)}\}_{j=1}^3$ with identical neural network architecture and different featurizations of the drugs and targets (Morgan and Conjoint Triad, PubChem and AAC, and Daylight and PseudoAAC). For each test point, we compute the variance $s_t = \text{Var}(\{f^{(j)}(X_t)\}_{j=1}^3)$ and select the point if $s_t\geq \tau_t$. The threshold $\tau_t$ is determined through the optimal solution to an online optimization program: $\underset{\tau}{\text{max}}\sum_{i=1}^{t-1} s_i \ind\{s_i \geq \tau\}$ subject to $\frac{1}{t-1}\sum_{i=1}^{t-1} \ind\{s_i\geq \tau\} \leq \gamma$. Operationally, this online optimization adaptively sets $\tau_t$ to maximize the cumulative uncertainty of admitted points while enforcing the throughput constraint $\gamma$, thereby allocating limited assay capacity to the most informative candidates at each time. Finally, we use the first model $\hat\mu=f^{(1)}$  to construct prediction sets at all selected time points.

\begin{figure} 
  \centering
  \includegraphics[width=0.9\linewidth]{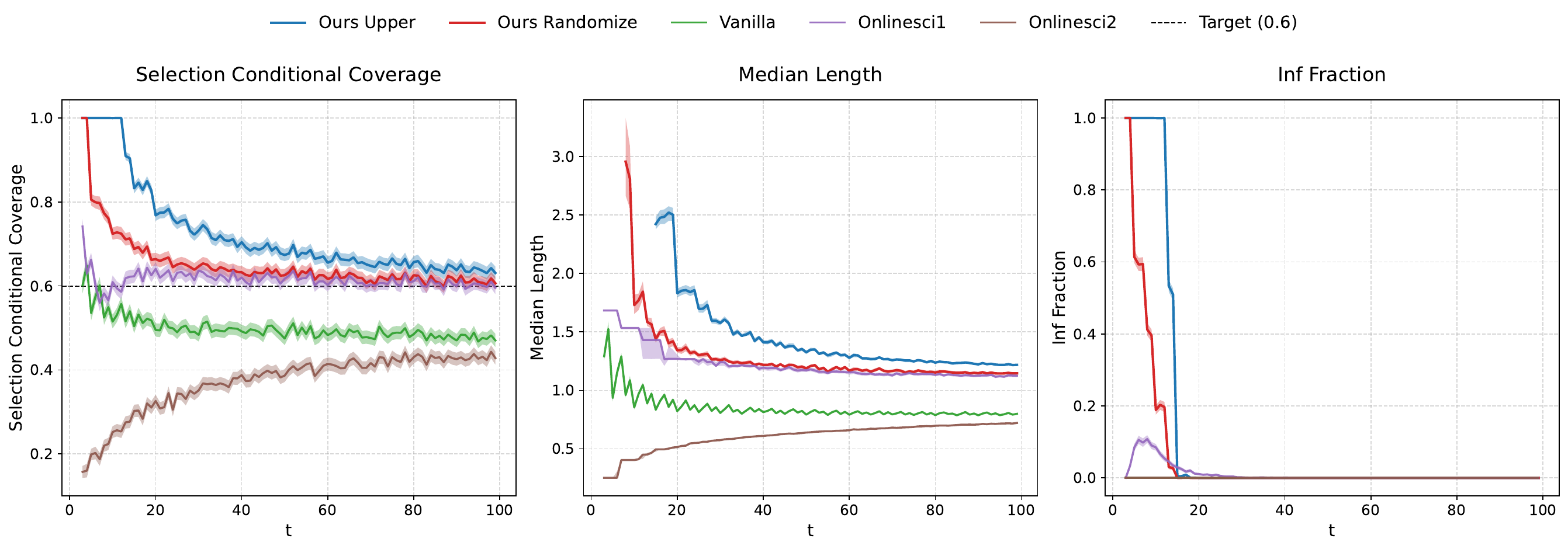}
  \caption{{\small Results under  selection  based on model uncertainty with online optimization; details otherwise as Figure~\ref{fig:real wq of mu}.}}
  \label{fig:real online optimization}
\end{figure}

Figure~\ref{fig:real online optimization} presents the result for this selection rule across $N=10000$ runs. While vanilla CP is over-confident at nearly all time points since the high-uncertainty points typically need larger prediction sets and the performance of OnlineSCI again relies on proper choice of hyperparameters, our methods maintain coverage that is consistently above or exactly at the target level. Indeed, PEMI slightly enlarges the prediction set size to ensure validity.

% Figure~\ref{fig:real online optimization} presents the result for this selection rule across $N=10000$ runs. While vanilla CP is over-confident at nearly all time points since the high-uncertainty points typically need larger prediction sets, our methods maintain coverage that is consistently above or exactly at the target level. Indeed, PEMI slightly enlarges the prediction set size to ensure validity.

\subsection{Conformal selection rules}
\paragraph{Fixed threshold.}
We first study conformal selection with a fixed threshold. To demonstrate the generality of our approach, we employ the weighted conformal p-value, originally designed to mitigate coverage gap due to distribution shift~\citep{jin2025model,barber2023conformalpredictionexchangeability}. The pre-fixed weights are $w_i=0.99^{t+1-i}$. At each test point, a candidate is selected if its p-value is below a pre-specified threshold $q=0.3$. Following the setting in Section~\ref{subsec:conformal_selection}, the thresholds  $c_i$'s are taken as the $0.7$-th quantile of the training pairs' true binding affinities with the same target as sample $i$. The experiment is repeated 10000 times.

Figure~\ref{fig:real fixed} presents the results. For selection-conditional coverage, vanilla CP fails to achieve the target level at most time points, and OnlineSCI relies on hyperparameter choice. In contrast, PEMI stays very close to the target level at most points (except for earlier time points where infinite-length sets have to occur). The PEMI set is larger than vanilla CP to ensure valid coverage, while still being smaller than the well-tuned OnlineSCI1 sets.

% Figure~\ref{fig:real fixed} presents the results. For selection-conditional coverage, vanilla CP is overly conservative. In contrast, PEMI stay very close to the target level at most points (except earlier time points where infinite-length sets occur). The PEMI set is larger than vanilla CP to ensure valid coverage. 

\begin{figure}[htbp]
  \centering
  \includegraphics[width=0.9\linewidth]{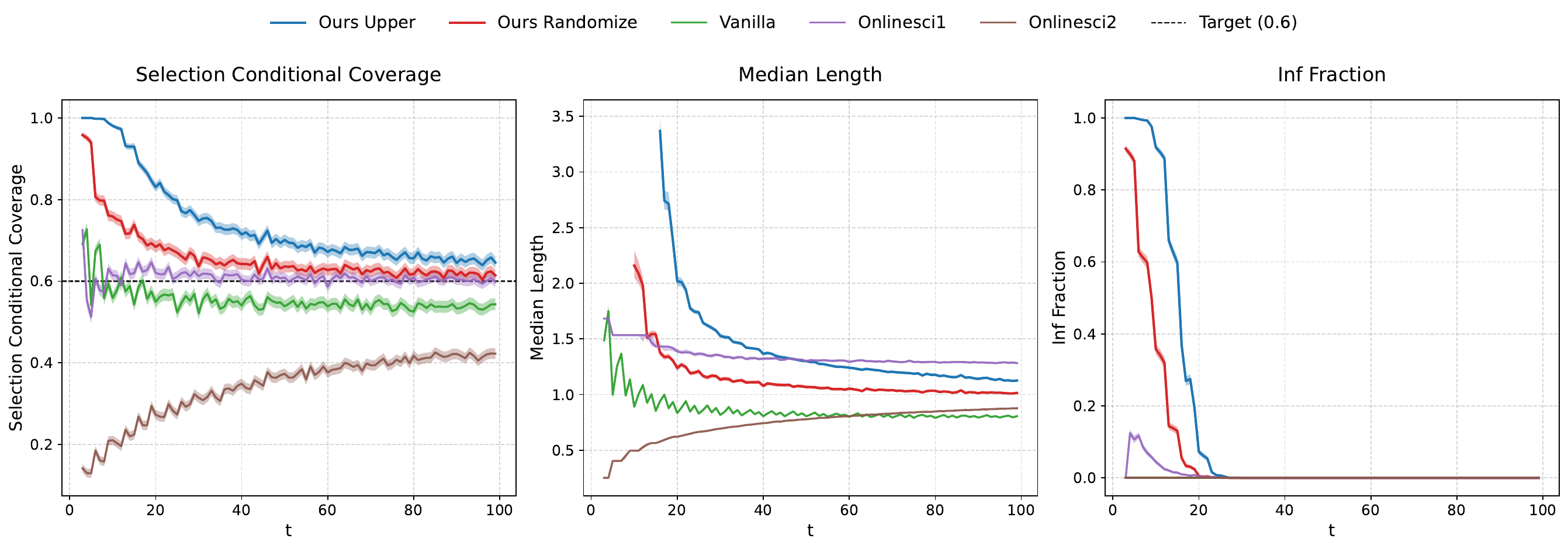}
  \caption{{\small Results under conformal selection with a fixed threshold; details are otherwise the same as Figure~\ref{fig:real wq of mu}.}}
  \label{fig:real fixed}
\end{figure}

\paragraph{E-LOND threshold.}
\label{subsubsec:real_elond}
Second, we consider conformal selection with online multiple testing procedures. As a demonstration, we adopt the Ue-LOND algorithm in \citet{xu2023onlinemultipletestingevalues}---the randomized version of e-LOND---as our online multiple testing procedure, which provides rigorous FDR control. The e-values used in this procedure are defined as in Section~\ref{subsec: evalue}.
To ensure sufficient selection frequency for evaluation, we set aside 50 data points as an offline calibration set. 
Consistent with Section~\ref{subsec: evalue}, the Ue-LOND procedure computes an e-value $e_t$ and a threshold $\alpha_t$ for each time point $t$, which is selected whenever $e_t\geq 1/\alpha_t$. For each sample $i$, we take $c_i$ as the 0.3-th quantile of the binding affinities of training pairs with the same disease target.
It is worth noting that the Ue-LOND procedure has low selection power here. 
As such, we lower the thresholds $c_i$'s and scale up the experiment to  100,000 independent runs for reliable evaluation.

Figure~\ref{fig:real e-lond} presents the results for this selection rule. The limited power of Ue-LOND makes the selection into the reference set difficult, and therefore the results appear conservative. Nevertheless, PEMI methods maintain valid selection-conditional coverage, whereas vanilla CP and OnlineSCI2 (with less well-tuned parameters) fall well below the target level.  
The better-tuned OnlineSCI1 curve also stays slightly below the target level  since the selection time points, in which it makes adjustment to its thresholds, are rare. For both the  prediction set size and the fraction of infinite-length sets, we observe an initial increase followed by a gradual decrease over time. This pattern is due to the nature of the selection procedure: early in the sequence, \(\{\alpha_t\}\) decreases rapidly and makes selection into the reference set difficult, thereby driving up both the set size and the proportion of infinite-length sets. As time progresses, the reference set of permutations grows and the decay of $\alpha_t$ becomes slower, leading to a slow decline in both metrics.

\begin{figure}[htbp]
  \centering
  \includegraphics[width=0.9\linewidth]{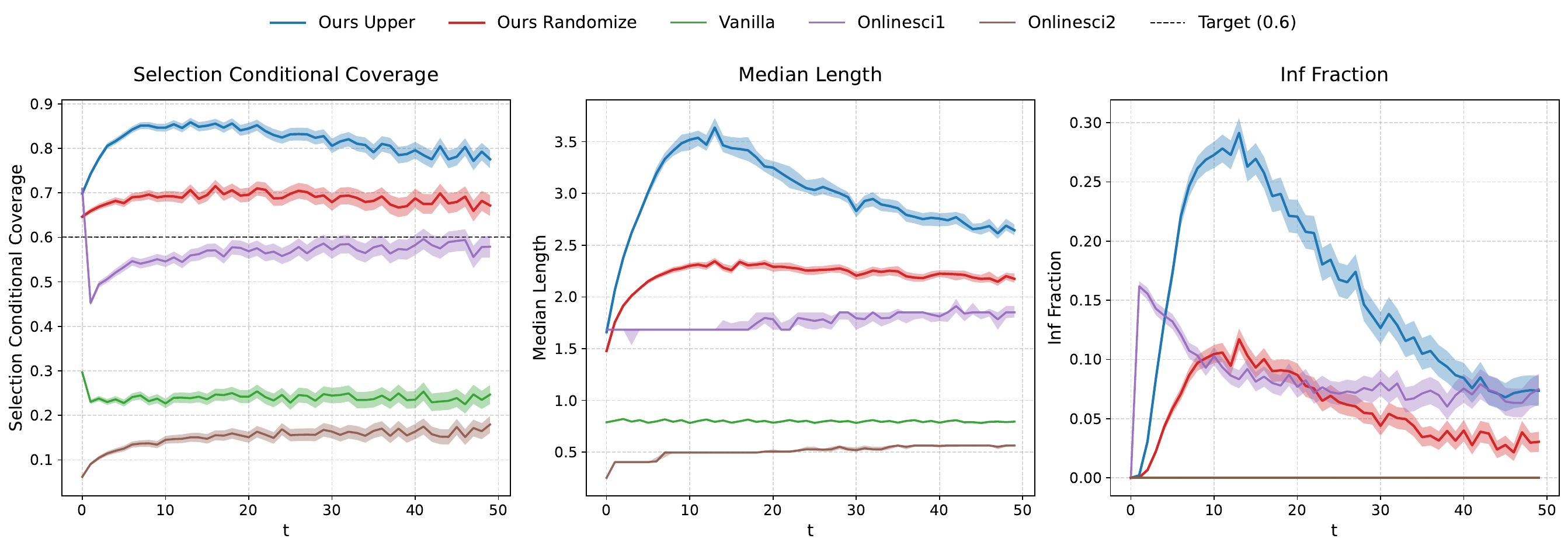}
  \caption{{\small Results under conformal selection with a e-LOND threshold; details are otherwise the same as Figure~\ref{fig:real wq of mu}.}}
  \label{fig:real e-lond}
\end{figure}

\subsection{Selection based on earlier outcomes}
%\subsubsection{Quantile of earlier outcomes}

Finally, we evaluate selection rules constructed directly from earlier labels. We select the point $t$ whenever its predicted value $\mu_t=\hat\mu(X_t)$ exceeds the quantile and the weighted quantile of $\{Y_i\}_{i<t}$. The weights are set as $w_i \propto {0.5}^{(t-i)}$, giving higher importance to more recent candidates.

Figure~\ref{fig:real q of y} and Figure~\ref{fig:real wq of y} present the results for these two selection rules across 10,000 runs, respectively.
PEMI maintains valid coverage, whereas vanilla CP and OnlineSCI2 (less well-tuned) fail at most points. OnlineSCI1 with better-tund parameters still suffers from undercoverage at very early points. The middle and right panels show the reasonable efficiency of our methods with a quickly decaying frequency of infinite-length prediction sets. 

\begin{figure}[H]
  \centering
  \includegraphics[width=0.9\linewidth]{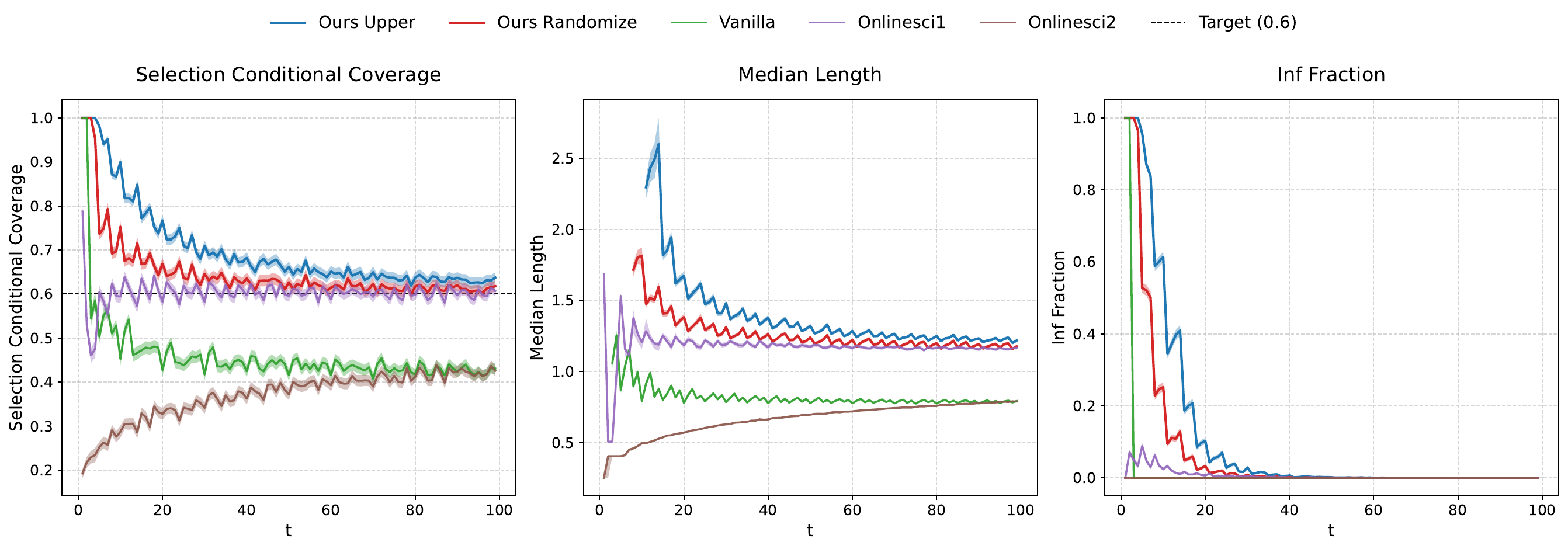}
  \caption{{\small Results under a selection rule based on quantile of earlier outcomes; details are otherwise  as Figure~\ref{fig:real wq of mu}.}}
  \label{fig:real q of y}
\end{figure}

%\subsubsection{Weighted quantile of earlier outcomes}

\begin{figure}[H]
  \centering
  \includegraphics[width=0.9\linewidth]{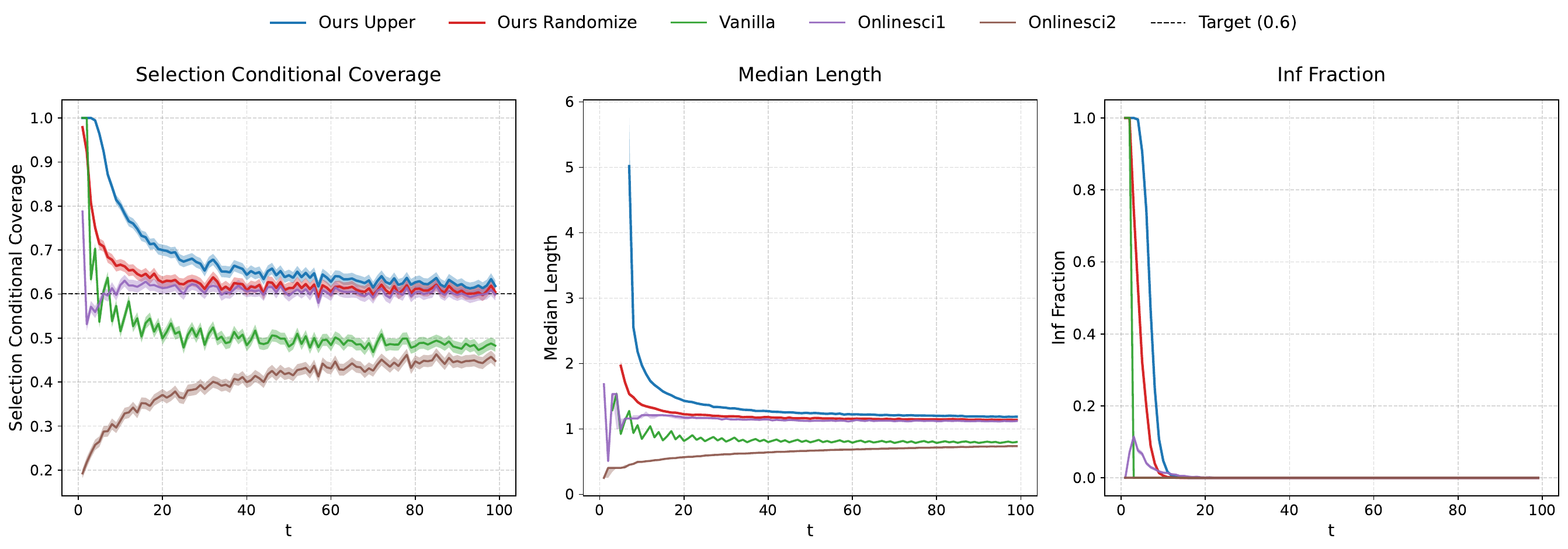}
  \caption{{\small Results under a selection rule based on weighted quantile of earlier outcomes; details otherwise as Figure~\ref{fig:real wq of mu}.}}
  \label{fig:real wq of y}
\end{figure}

%!tex root = main.tex

\section{Simulation studies}
\label{sec:simu}

In this part, we design simulation studies to investigate the impact of various factors, such as the quality of the predictive model, the noise level of the data, the decay rate of weights in certain selection rules, and the quality of the score function, on the performance of our methods. 

\subsection{Model quality and noise level}
\label{subsec:model quality}
We first evaluate our method with controlled prediction models and noise levels so the discrepancy between  \(\widehat{\mu}(X)\) and the true label \(Y\) varies.
We generate i.i.d.~covariates \(X_i \sim \mathrm{Unif}[-1,1]\) and responses
$Y_i=\mu(X_i)+\epsilon_i$, where $  
\mu(x) = 3\,\sin (4\pi x )
    + 4 [\max\{0,\, x - 0.3\}]^2
    - 4 [\max\{0,\, -(x + 0.4)\}]^2 
$,
where \(\epsilon_i \given X_i \sim \mathcal{N}\! (0,\; \bigl[\sigma \cdot(0.5+\vert X_i \vert)\bigr]^2 )\), and \(\sigma \in\{0.1,\;1.0,\;10.0\}\) controls the noise level. For model quality, we use linear model and random forest to fit the regression model \(\widehat{\mu}(\cdot)\). With a nonlinear true model \(\mu(x)\), linear model suffers from pronounced model misspecification while random forest can capture the nonlinearity and provide a good fit.

For conciseness, we focus exclusively on the weighted quantile rule with weights \(w_i \propto 0.5^{\,t-i}\) similar to Section~\ref{subsubsec:w_quantile_pred}. Figure~\ref{fig:simu model and noise} presents the results under two models of different quality and three noise levels.  

\vspace{-.75em}
\paragraph{Model quality.}  Comparing across models, for selection-conditional coverage, our method performs well with both models while vanilla CP becomes less stable when model misspecification is severe, showing PEMI's robustness to model quality. Regarding prediction set size, when the noise level is low, random forest naturally yields shorter prediction sets. However, when the noise level is high, the difference between them becomes negligible as most of the size contributes to covering the noise instead of signal.

\vspace{-.75em}
\paragraph{Noise level.} Figure~\ref{fig:simu model and noise} shows the results under three different noise levels. 
The coverage of vanilla CP is sensitive to noise: it is overly conservative when the noise level is low and fails to reach the target when the noise level is high. In contrast, our method remains robust and achieves the desired coverage across all noise levels. Regarding prediction set size, higher noise levels lead to larger prediction sets. In addition, our method enlarges the prediction sets to maintain validity when the noise is high.

\begin{figure}[H]
  \centering
  \includegraphics[width=1.0\linewidth]{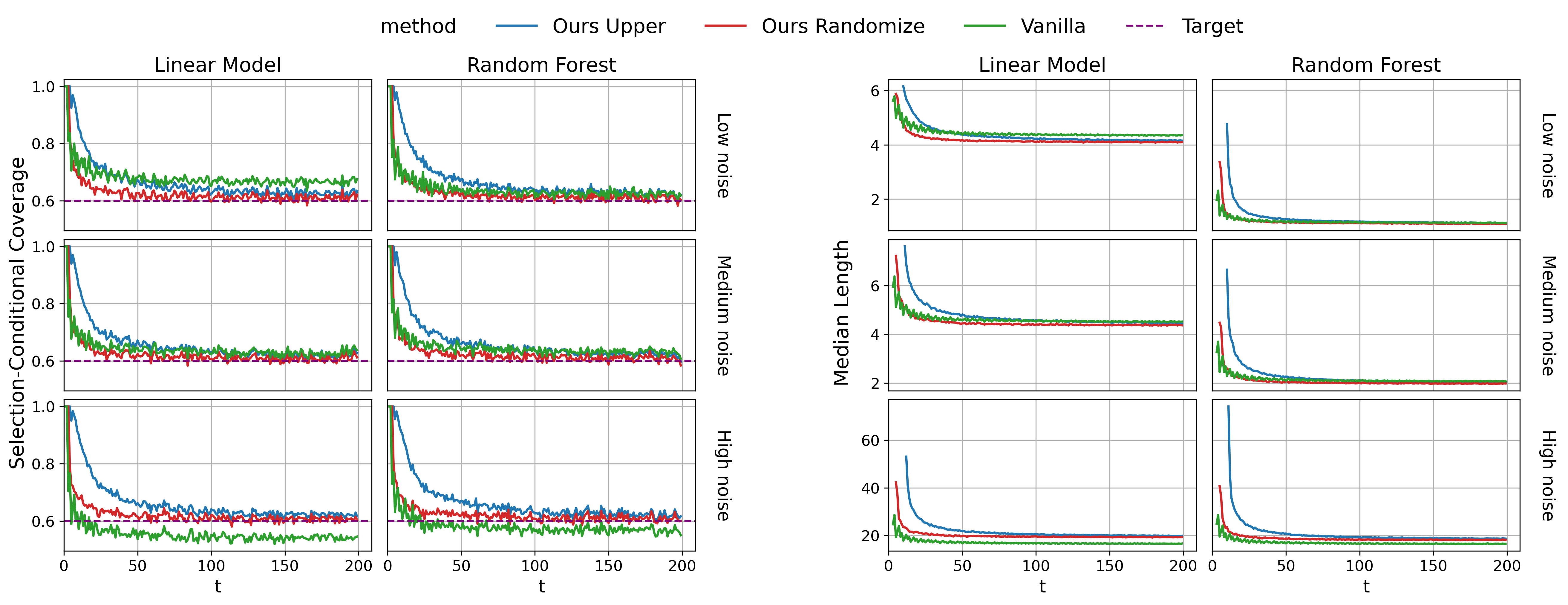}
  \caption{{\small Empirical selection-conditional coverage (left) and median prediction set size (right) with different model qualities and noise levels, under a selection rule based on weighted quantile of predictions when applying PEMI (\texttt{Ours Upper}), randomized PEMI (\texttt{Ours Randomize}), and vanilla conformal prediction (\texttt{Vanilla}). The purple dashed line is the target coverage level $1-\alpha=0.6$. The fraction of infinite-length sets is low and not displayed.}}
  \label{fig:simu model and noise}
\end{figure}

\subsection{Weight decay rate}
\label{subsec: weight decay}

We then evaluate the performance of our method under the weighted quantile selection rule in the last part with different decay rates, which affect the selection power. 
Following the data-generation process in Setting~3 of \citet{jin2023selectionpredictionconformalpvalues}, we generate i.i.d.~covariates \(X_i \sim \mathrm{Unif}[-1,1]^{20}\) and responses $Y_i=\mu(X_i)+\epsilon_i, \  
\mu(x) = 5\bigl(x_{1}x_{2} + e^{x_{4}-1}\bigr),$
where \(\epsilon_i \sim \mathcal{N}\!(0,\; \bigl[\sigma \cdot (5.5 - |\mu(x_i)|)/2\bigr]^2)\), and \(\sigma\) controls the noise level. Here we use a linear model and fix \(\sigma = 1.0\). We use  the same selection rule as before, with weights \(w_i \propto \beta^{\,t-i}\), where \(\beta\in \{0.1, 0.5, 0.9\}\), corresponding to fast, medium, and slow decay rates, respectively.

Figure~\ref{fig:simu weight} shows the results under the three weight decay rates. In this setting, the weight decay rate  determines the power of the selection rule: the faster the decay, the higher the frequency of selection. This further affects our selection reference set: an overly stringent selection rule can make it difficult for permutations to enter the reference set, resulting in overly conservative prediction sets. 
Indeed, we observe that a faster decay leads to selection-conditional coverage that more quickly approaches the target level, accompanied by a reduction in prediction set length and a faster decay of the infinite-length fraction to zero.

% \begin{figure}[H]
%   \centering
%   \includegraphics[width=1.0\linewidth]{figure/simu weight.png}
%   \caption{weight decay}
%   \label{fig:simu weight}
% \end{figure}

\begin{figure}[htbp]
  \centering
  \includegraphics[width=0.8\linewidth]{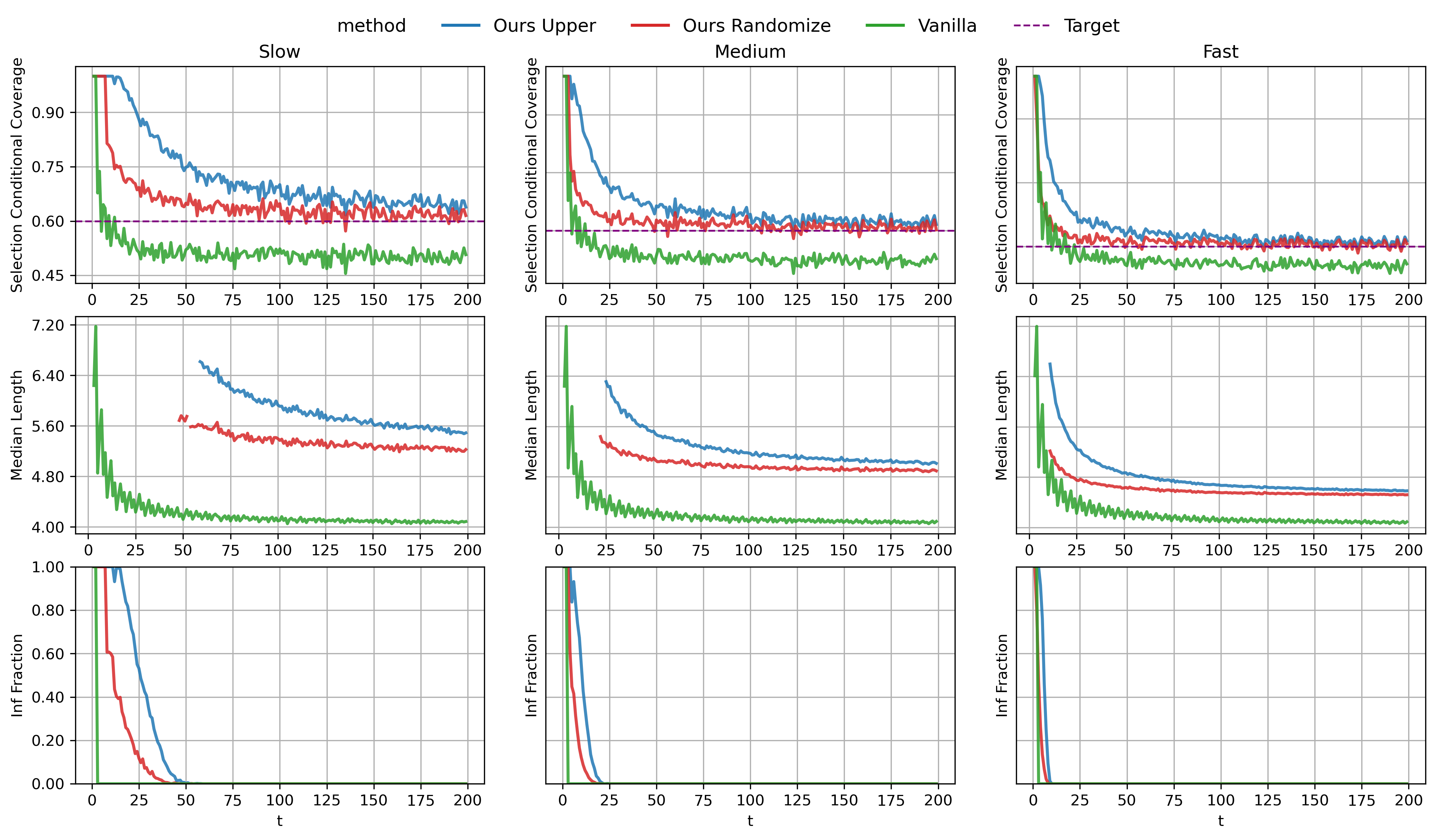}
  \caption{{\small Empirical selection-conditional coverage (top), median prediction set size (middle), and the fraction of prediction sets with infinite length (bottom) in different weight decay rate settings (columns) under a selection rule based on weighted quantile of predictions. Details are otherwise the same as Figure~\ref{fig:simu model and noise}.}}
  \label{fig:simu weight}
\end{figure}

\subsection{Conformity score function}
Finally, we evaluate our methods under different conformity score functions. 
We follow the data-generating processes of~\ref{subsec:model quality} and~\ref{subsec: weight decay}, which are highly nonlinear and heteroskedastic data settings, respectively, with two noise levels $\sigma \in \{1,5\}$. We use the same weighted quantile selection rule as before. 
We vary the score function over the absolute residual score $|y-\hat\mu(x)|$ with $\hat\mu(\cdot)$ being a linear model and random forests, and Conformalized Quantile Regression (CQR) score~\citep{romano2019conformalizedquantileregression}.

\begin{figure}[htbp]
  \centering
  \includegraphics[width=1.0\linewidth]{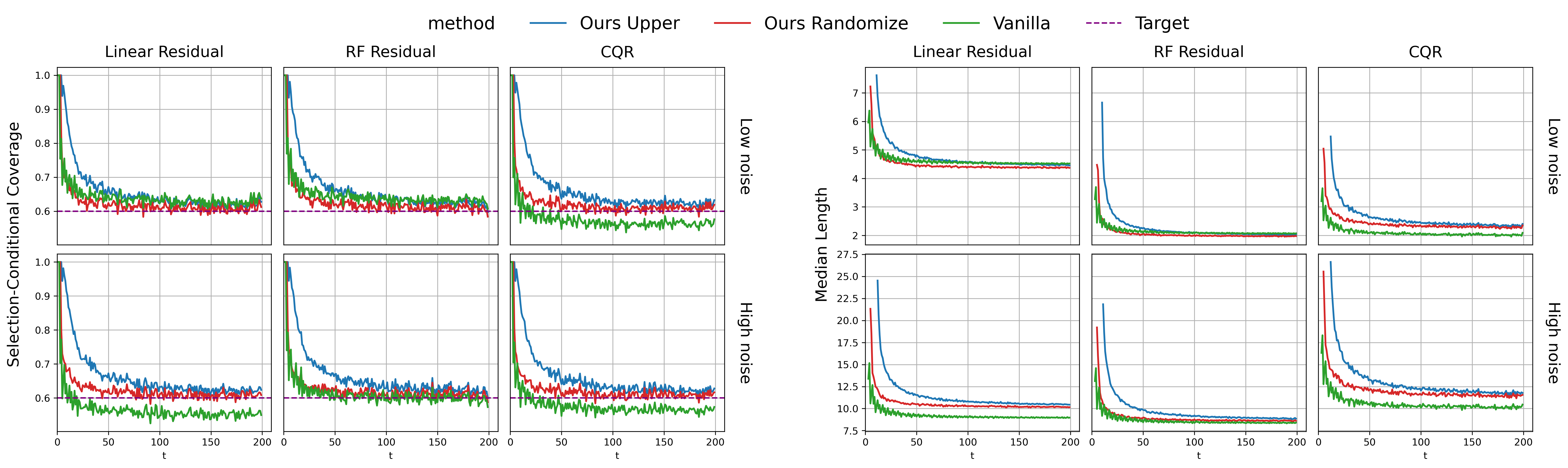}
  \caption{{\small Results under the settings in Section~\ref{subsec:model quality}  with three different score functions (columns) and two noise levels (rows). Details are otherwise the same as Figure~\ref{fig:simu model and noise}.}}
  \label{fig:simu nonlinear}
\end{figure}

\begin{figure}[htbp]
  \centering
  \includegraphics[width=1.0\linewidth]{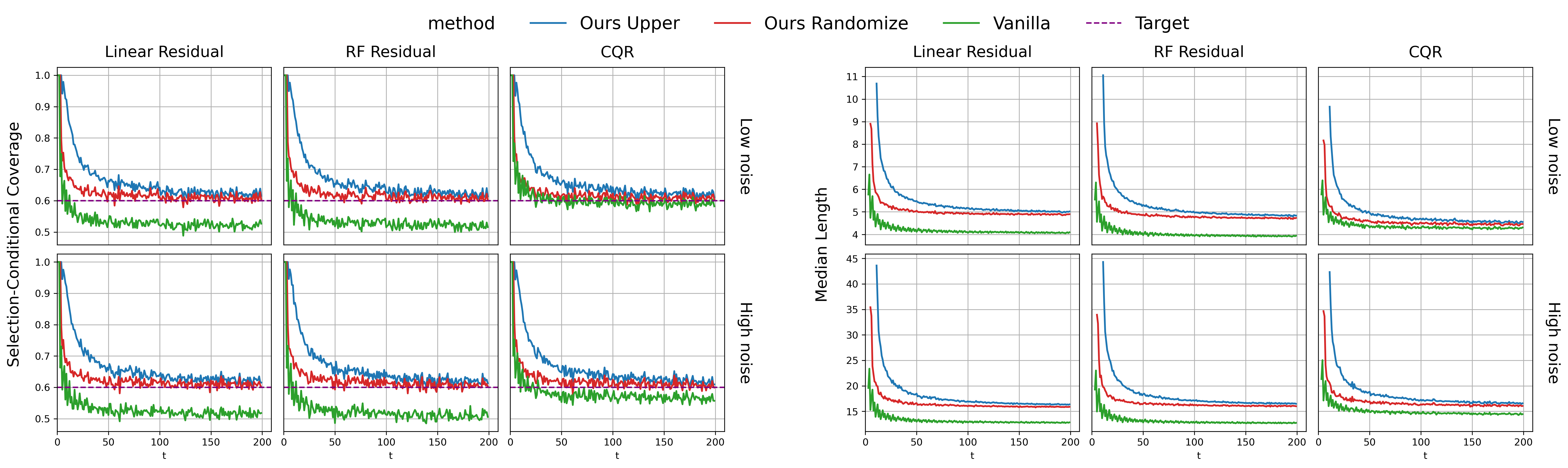}
  \caption{{\small Results under the settings in Section~\ref{subsec: weight decay}  with three different score functions (columns) and two noise levels (rows). Details are otherwise the same as Figure~\ref{fig:simu model and noise}.}}
  \label{fig:simu hetero}
\end{figure}

Figure~\ref{fig:simu nonlinear} and Figure~\ref{fig:simu hetero} present the results for the three score functions in the settings of Section~\ref{subsec:model quality} (nonlinear) and Section~\ref{subsec: weight decay} (heteroskedasticity), respectively. 
In vanilla CP, CQR-based sets lead to more stable performance metrics across runs than residual-based ones. However, in most cases, vanilla CP with CQR still under-covers. Even though CQR is designed to provide approximate $X$-conditional coverage, it is  insufficient for selection-conditional coverage with dynamic selection.  In contrast, our method consistently achieves valid coverage across all score functions. 
We also observe that CQR typically yields slightly smaller sets in the low-noise setting and wider ones in noisy settings.

\FloatBarrier
\vspace{-0.5em}
\section{Discussion}
\vspace{-0.25em}

This paper presents PEMI, a general framework for selective conformal prediction with any selection rules that is asymmetric to the labeled data. The key idea is post-selection inference for permutation test: we use a subset of permutations that lead to the same selection event  to calibrate a prediction set, thereby eliminating the only symmetry condition needed in earlier methods. 
Based on a Monte-Carlo permutation test, we derive computationally efficient implementations of PEMI under various commonly used online selection rules. The efficacy of PEMI is demonstrated on a real drug discovery dataset and extensive simulation studies. 

Several research questions remain open. 
First, many conformal prediction methods for online data come with the feature of addressing distributional drift~\citep{gibbs2021adaptive,barber2023conformalpredictionexchangeability,bao2024cap}, and similar ideas have recently been used for selective coverage guarantees~\citep{humbert2025online},  
motivating the question of how PEMI may adapt to distribution shifts. 
Second, the selection-conditional coverage may be achieved with a  more general permutation test. Besides the entire set of permutations or a uniformly drawn subset, other choices---such as a set of random swaps---can also lead to valid permutation tests~\citep{ramdas2023permutation,barber2023conformalpredictionexchangeability}. In particular, the JOMI method~\citep{jin2024confidence} can be viewed as setting $\Pi_t$ as the set of permutations that swaps the data point $t$ with all others; although its symmetry condition fails,  \cite{barber2023conformalpredictionexchangeability} shows a \emph{randomly-drawn} swap indeed leads to valid conformal prediction set even with asymmetric conformity scores, mirroring the settings here. It is thus interesting to study whether similar  ideas may lead to valid selection-conditional coverage. Finally, for selection procedures with rare selection event, such as those in Section~\ref{subsubsec:real_elond}, it is challenging to obtain a sufficiently large reference set and non-infinite-length set. This might reflect the fundamental difficulty of selection-conditional coverage with limited labeled data and rare selection event, which warrants a theoretical investigation.

\vspace{-0.5em}
\section*{Acknowledgments}
\vspace{-0.25em}
Y.~J.~would like to thank Rina Barber for helpful discussion on the generality of permutation tests.

\newpage
%%%%%%%%%%%%%%%%%%%%
%%% Bibliography %%%
%%%%%%%%%%%%%%%%%%%%
\bibliographystyle{apalike}
\bibliography{reference}

\clearpage
\appendix

\section{Deferred details}

\subsection{Complete version of Proposition~\ref{prop: covariate}}
\label{app:subsec_cov_full}

\begin{prop}
\label{prop: covariate_full}
Assume $S_t = 1$.  
Let $\Pi_t^{(M)}$ be i.i.d.~samples from the uniform distribution over $\Pi_t$. 
Define 
\begin{equation}
\label{eq: AtBt}
\widehat{R}_t = \{\pi \in \Pi_t^{(M)} : S_t(\pi) = 1\} \cup\{\pi_0\}, \qquad
B_t = \{\pi \in \Pi_t^{(M)} : S_t(\pi) = 1,\ \pi(t) \neq t\}.
\end{equation}
The PEMI prediction sets are given by: 
\begin{itemize}
    \item[(a)]   
    $ 
    \widehat{\mathcal{C}}_{\alpha, t}(\Pi_t^{(M)}) = \left\{ y \in \mathcal{Y} : v(X_t, y) \leq \mathrm{Quantile}\left(\beta; \{ v(X_{\pi(t)}, Y_{\pi(t)}) \}_{\pi \in B_t} \cup \{+\infty\} \right) \right\}$ for $\beta = \frac{\lceil (1-\alpha)\cdot |\widehat{R}_t| \rceil}{|B_t|}.$  

    \item[(b)]   
    $\widehat{\mathcal{C}}_{\alpha, t}^{\mathrm{rand}} (\Pi_t^{(M)})= \left\{ y \in \mathcal{Y} : v(X_t, y) \leq q \right\}$, 
    where the random threshold $q\in \RR$ is sampled from
    \[
    q \sim
    \begin{cases}
        \mathrm{Quantile}\left( \frac{r^* - 1}{|B_t|};\, \{ v(X_{\pi(t)}, Y_{\pi(t)}) \}_{\pi \in B_t} \cup \{+\infty\} \right), & \text{with probability } p; \\
        \mathrm{Quantile}\left( \frac{r^*}{|B_t|};\, \{ v(X_{\pi(t)}, Y_{\pi(t)}) \}_{\pi \in B_t} \cup \{+\infty\} \right), & \text{with probability } 1-p.
    \end{cases}
    \]
    Here, we define the probability $p = \frac{\alpha \cdot |\widehat{R}_t| - (|B_t| - r^* + 1 - e_B^*)}{ e_R^*}$ for $e_R^* = \#\{ \pi \in \widehat{R}_t : v(X_{\pi(t)}, Y_{\pi(t)}) = v_{(r^*)} \}$, 
    $e_B^* = \#\{ \pi \in B_t : v(X_{\pi(t)}, Y_{\pi(t)}) = v_{(r^*)} \}$, and $r^* = \min\{ i : v_{(i)} = v_{(\lceil (1-\alpha)\cdot |\widehat{R}_t| \rceil)} \}$, where $v_{(1)} \leq \cdots \leq v_{(|B_t|)}$ are the order statistics of $\{ v(X_{\pi(t)}, Y_{\pi(t)}) : \pi \in B_t \}$. 
    % Then, we can compute the probability of taking the lower quantile as follows: \yingcomment{what's the lower quantile?}
    % \[
    % p = \frac{\alpha(1+|\widehat{R}_t|) - (|B_t| - r^* + 1 - e_B^*)}{1 + e_R^*}.
    % \]
    
\end{itemize}
\end{prop}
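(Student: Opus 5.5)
The plan is to use the fact, noted before Proposition~\ref{prop: covariate}, that for a covariate-dependent rule $S_t^y(\pi)\equiv S_t(\pi)$ does not depend on $y$. Hence $\widehat R_t(y;\Pi_t^{(M)})=\widehat R_t$ for every $y$, and the identity $\pi_0$ is always in $\widehat R_t$. Throughout I treat $\Pi_t^{(M)}$ as a multiset, so repeated draws are counted with multiplicity.

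Split $\widehat R_t$ into $B_t$ and $A_t:=\widehat R_t\setminus B_t$, the permutations with $\pi(t)=t$. For the last-point score, $V_t^y(\pi_0)=v(X_t,y)$. For $\pi\in A_t$ we also have $V_t^y(\pi)=v(X_t,y)$. For $\pi\in B_t$ we have $V_t^y(\pi)=v_\pi:=v(X_{\pi(t)},Y_{\pi(t)})$, which does not depend on $y$ because $Z_{\pi(t)}^y=Z_{\pi(t)}$ whenever $\pi(t)\neq t$.

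\textbf{Part (a).} Substituting into~\eqref{eq: deter interval} gives
\[
|\widehat R_t|\, p_t(y)=|A_t|+\#\{\pi\in B_t: v_\pi\ge v(X_t,y)\}=|\widehat R_t|-\#\{\pi\in B_t: v_\pi< v(X_t,y)\}.
\]
So $p_t(y)>\alpha$ is equivalent to $\#\{\pi\in B_t: v_\pi<v(X_t,y)\}<(1-\alpha)|\widehat R_t|$. Since the left side is an integer, this is the same as requiring it to be at most $k-1$, where $k:=\lceil(1-\alpha)|\widehat R_t|\rceil$.

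By the order-statistic characterization, fewer than $k$ of the $v_\pi$ lie strictly below $v(X_t,y)$ if and only if $v(X_t,y)\le v_{(k)}$. Here we set $v_{(k)}=+\infty$ when $k>|B_t|$, which is exactly why $\{+\infty\}$ is appended. This $v_{(k)}$ is $\mathrm{Quantile}(k/|B_t|;\{v_\pi\}_{\pi\in B_t}\cup\{+\infty\})$, which proves (a).

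\textbf{Part (b).} Starting from~\eqref{eq: rand randomized pvalue}, write
\[
|\widehat R_t|\,p^{\rm rand}_t(y)=\#\{\pi\in B_t:v_\pi>v\}+U_t\bigl(|A_t|+\#\{\pi\in B_t:v_\pi=v\}\bigr),\qquad v=v(X_t,y).
\]
I would first show that, for fixed $U_t$, this quantity is nonincreasing in $v$. This holds because moving $v$ up past a value $v_{(i)}$ only transfers mass from the ``$>$'' term to the tie term and then to zero, and $U_t\le1$.

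Consequently $\{y:p^{\rm rand}_t(y)>\alpha\}$ has the form $\{v(X_t,y)\le q\}$, possibly with a strict boundary, for a threshold $q$ taken from the order statistics. To locate $q$, I would evaluate $p^{\rm rand}_t$ in three regimes:
\begin{itemize}
\item For $v$ strictly between consecutive distinct order statistics below $v_{(r^*)}$, there are no ties among the $v_\pi$, the $A_t$ term contributes $U_t|A_t|$, and $p^{\rm rand}_t>\alpha$ always holds, by the same counting as in part (a).
\item For $v$ strictly above $v_{(r^*)}$, $p^{\rm rand}_t\le\alpha$ holds regardless of $U_t$.
\item At $v=v_{(r^*)}$ exactly, the value is $\bigl(|B_t|-r^*+1-e_B^*+U_t e_R^*\bigr)/|\widehat R_t|$, where $e_R^*$ counts the ties in $B_t$ plus $|A_t|$. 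This exceeds $\alpha$ if and only if $U_t>p$, with $p$ as defined in the statement.
\end{itemize}
So with probability $1-p$ the boundary value $v_{(r^*)}$ is included and the set equals $\{v\le v_{(r^*)}\}$, i.e.\ the $r^*/|B_t|$ quantile. With probability $p$ it is excluded, and the set equals $\{v\le v_{(r^*-1)}\}$, since no other value in $(v_{(r^*-1)},v_{(r^*)})$ is attained by the $v_\pi$. This matches the two-point law of $q$.

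The main obstacle is the tie bookkeeping in part (b). The hard points are defining $r^*$ through the first index attaining the value $v_{(k)}$, checking that $e_R^*$ correctly absorbs $|A_t|$, and handling the edge cases $p\notin[0,1]$ (clipping) and $k>|B_t|$. I would verify these by directly comparing the counts $\#\{v_\pi>v_{(r^*)}\}=|B_t|-r^*+1-e_B^*$, with everything else in the argument being routine.
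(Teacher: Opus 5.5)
Your part (a) is correct and is essentially the paper's argument: you bound $\#\{\pi\in B_t: v_\pi<v\}$ where the paper bounds the complementary count $\#\{\pi\in B_t: v_\pi\ge v\}$. The gap is in part (b), in your first bullet. In (a), every $v<v_{(k)}$ passes only because the block $A_t$ contributes its full count $|A_t|$ to $|\widehat R_t|\,p_t(y)$. In $p_t^{\mathrm{rand}}$, every $\pi\in A_t$ (including $\pi_0$) is a tie, so the block contributes only $U_t|A_t|$, and ``the same counting'' fails for small $U_t$.

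Concretely, take $A_t=\{\pi_0\}$, $|B_t|=9$ with distinct scores, and $\alpha=1/4$. Then $|\widehat R_t|=10$, $k=r^*=8$, $e_B^*=1$, $e_R^*=2$, and $p=3/4$. For $v(X_t,y)\in(v_{(7)},v_{(8)})$ one gets $|\widehat R_t|\,p_t^{\mathrm{rand}}(y)=2+U_t$, which exceeds $\alpha|\widehat R_t|=2.5$ only if $U_t>1/2$. Checking every gap and atom, the randomized set is $\{v<v_{(7)}\}$, $\{v\le v_{(7)}\}$, $\{v<v_{(8)}\}$, $\{v\le v_{(8)}\}$ for $U_t$ in $(0,1/4]$, $(1/4,1/2]$, $(1/2,3/4]$, $(3/4,1]$ respectively. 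That is not the claimed two-point law, which includes the gap $(v_{(7)},v_{(8)})$ with probability $1/4$ rather than $1/2$.

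Your closing step has a separate flaw. The remark that ``no other value in $(v_{(r^*-1)},v_{(r^*)})$ is attained by the $v_\pi$'' is beside the point, since the set is indexed by $y$ and $v(X_t,y)$ can fall in that gap (e.g.\ absolute residuals with $\mathcal Y=\mathbb R$). Under your own first bullet the set would be $\{v<v_{(r^*)}\}$, not $\{v\le v_{(r^*-1)}\}$.

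The paper's proof of (b) makes exactly the same assertion (``if $v(X_t,y)<v_{(r^*)}$, then $p_t(y)>\alpha$ for all $U_t$'') and the same jump to the lower quantile. So you are tracking it faithfully, but the step is false and (b) does not hold as stated. Three parts of your plan do survive:
\begin{itemize}
\item monotonicity in $v$ for fixed $U_t$, so each realization is $\{v<q\}$ or $\{v\le q\}$;
\item exclusion of everything above $v_{(r^*)}$, since $p_t^{\mathrm{rand}}\le p_t$ and part (a) applies;
\item inclusion of $v_{(r^*)}$ itself iff $U_t>p$.
\end{itemize}

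A correct closed form has to solve the inequality on every piece. Let $w_1<\dots<w_L$ be the distinct values of $\{v_\pi\}_{\pi\in B_t}$, with multiplicities $e_j$, $G_j=\#\{\pi\in B_t:v_\pi\ge w_j\}$, $G_{L+1}=0$ and $w_0=-\infty$. Then $|\widehat R_t|\,p_t^{\mathrm{rand}}=G_j+U_t|A_t|$ on $(w_{j-1},w_j)$, and $G_{j+1}+U_t(|A_t|+e_j)$ at $w_j$.

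When $|A_t|$ is large, the threshold spreads over many order statistics. With $|A_t|=10$, $|B_t|=20$ distinct and $\alpha=1/2$, the $j$-th gap is included iff $U_t>(j-6)/10$, so $q$ ranges over $v_{(6)},\dots,v_{(15)}$. The statement ($r^*=15$, $p=10/11$) instead puts all mass on $v_{(14)}$ and $v_{(15)}$.
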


\subsection{PEMI set under e-LOND selection}
\label{app:subsec_elond_detail}

Proposition~\ref{prop: elond threshold} presents the explicit form of the PEMI set $\widehat{\mathcal{C}}_{\alpha,t}(\Pi_t^{(M)})$. The detailed proof is in Appendix~\ref{proof: prop evalue}.

\begin{prop}\label{prop: elond threshold}
Consider a selected unit $t\in \NN^+$ obeying $E_t\geq 1/\alpha_t^{\text{e-LOND}}$. For any permutation $\pi$ over $\{-n+1,\dots,t\}$, we denote the score after permutation $\hat{F}_{\pi(i)}=F(X_{\pi(i)},c_{\pi(i)})$ for $i\in \{-n+1,\dots,t\}$. 
    For $k\in\{0,1\}$, we define the two leave-one-out conformal p-values for each \(j\in[t]\) under permutation $\pi$ by 
    \[
   p_{j,\pi}^{k,-}
    =
    \frac{
        \sum_{-n+1\leq i\leq 0,i\neq \pi^{-1}(t)}
        \mathds{1}\bigl\{\widehat F_{\pi(i)}\ge\widehat F_{\pi(j)},\;Y_{\pi(i)}\le c_{\pi(i)}\bigr\}
      + k\cdot
        \mathds{1}\bigl\{\widehat F_{t}\ge\widehat F_{\pi(j)}\bigr\} \cdot \ind{\bigl\{\pi^{-1}(t) \leq 0\bigr\}}
    }{
      n+1
    }\,,\]
    \[
    p_{j,\pi}^{k,+}
    =
    \frac{
      1
      + \sum_{-n+1\leq i\leq 0,i\neq \pi^{-1}(t)}
        \mathds{1}\bigl\{\widehat F_{\pi(i)}\ge\widehat F_{\pi(j)},\;Y_{\pi(i)}\le c_{\pi(i)}\bigr\}
      + k\cdot
        \mathds{1}\bigl\{\widehat F_{t}\ge\widehat F_{\pi(j)}\bigr\} \cdot \ind{\bigl\{\pi^{-1}(t) \leq 0\bigr\}}
    }{
      n+1
    }\,.
    \]
    For each \(i\in[t]\), let $\widehat{\alpha}_{i,\pi}^{k,-}$ and $\widehat{\alpha}_{i,\pi}^{k,+}$ be the corresponding thresholds computed with $\{p_{j,\pi}^{k,-}\}_{1\leq j\leq i-1}$ and $\{p_{j,\pi}^{k,+}\}_{1\leq j\leq i-1}$ by the LOND algorithm. Then we define the e-values under permutation $\pi$ by
    $ 
    E_{i,\pi}^{k}
:= \ind\! \{p_{i,\pi}^{k,+}\le \widehat{\alpha}_{i,\pi}^{k,+} \} / \widehat{\alpha}_{i,\pi}^{k,-} 
    $ for all $i\in[t]$. 
    Similarly, based on these e-values and e-LOND algorithm, we can compute the threshold $\widehat{\alpha}_{t,\pi}^{k}$ at time $t$.
    Then, we have   $\widehat{R}_t(y)=\widehat {R}_t^{0}$ for $y > c_t$ and $\widehat{R}_t(y)=\widehat {R}_t^{1}$ for $y \leq c_t$, where
    \begin{equation}\label{eq: evalue reference set}
    \widehat{R}_t^{k} \;=\; \bigl\{\pi\in\Pi_t^{(M)} : E_{t,\pi}^{k} \geq 1/\widehat{\alpha}_{t,\pi}^{k} \bigr\}\cup \{\pi_0\}, \quad B_t^k = \{\pi \in \widehat{R}_t^k :  \pi(t) \neq t\}.
    \end{equation}
    Finally, the PEMI prediction set can be computed via 
    \begin{equation}\label{eq: evalue prediction set}
        \widehat{\mathcal{C}}_{\alpha, t}(\Pi_t^{(M)})
    =
    \left\{\,y\in\mathcal Y: y>c_{t},\,v(X_t,y)\le\widehat{q}_0\,\right\}
    \cup
    \left\{\,y\in\mathcal Y: y\le c_{t},\,v(X_t,y)\le\widehat{q}_1\,\right\},
    \end{equation}
    where $\widehat{q}_k
    =\mathrm{Quantile} \bigl(\frac{\lceil (1-\alpha)\cdot\widehat{R}_t^k| \rceil}{|B_t^k|};\ 
        \{ v(X_{\pi(t)}, Y_{\pi(t)}) \}_{\pi \in B_t^k}\cup\{+\infty\}\bigr)
    \text{for }k=0,1$.
\end{prop}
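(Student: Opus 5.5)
The argument has two parts. First I show that the reference set $\widehat R_t(y;\Pi_t^{(M)})$ from \eqref{eq:full_ref_set} depends on $y$ only through $\ind\{y\le c_t\}$. Then I reduce the p-value \eqref{eq: deter interval} to a quantile threshold on each of the two regions, in the same way as Proposition~\ref{prop: covariate}.

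\textbf{Step 1: dependence on $y$.} Fix a permutation $\pi$ of $\{-n+1,\dots,t\}$ and consider the permuted data $\mathcal D_{\pi,t}^y$ of Section~\ref{subsec:offline_data}. The e-LOND decision at time $t$ on this data is determined by three ingredients: the scores $\widehat F_{\pi(i)}=F(X_{\pi(i)},c_{\pi(i)})$, which do not involve labels; the indicators $\ind\{Y_{\pi(i)}\le c_{\pi(i)}\}$ for the offline positions $i\le 0$; and the LOND and e-LOND recursions over online positions $1,\dots,t$, which use only these p-values and e-values.

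The imputed label $y$ enters only when $t$ is moved into an offline slot, i.e.\ $\pi^{-1}(t)\le 0$. In that case it contributes the single term $\ind\{\widehat F_t\ge \widehat F_{\pi(j)}\}\,\ind\{y\le c_t\}$ to each leave-one-out count. Writing $k=\ind\{y\le c_t\}$, the permuted p-values are therefore exactly $p_{j,\pi}^{k,\pm}$ as displayed in the statement.

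If instead $\pi^{-1}(t)\ge 1$, then $t$ occupies an online position. Its label is never used there: only $\widehat F_t$ is needed to compute its p-value, and selection decisions depend on covariates and offline labels only. So in this case the decision does not depend on $k$ at all.

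Propagating these p-values through LOND gives $\widehat\alpha^{k,\pm}_{i,\pi}$, then the e-values $E^k_{i,\pi}$, then the e-LOND thresholds $\widehat\alpha^k_{t,\pi}$. Hence $S_t^y(\pi)=\ind\{E^k_{t,\pi}\ge 1/\widehat\alpha^k_{t,\pi}\}$, and $\widehat R_t(y)=\widehat R_t^{k}$ as in \eqref{eq: evalue reference set}.

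The main obstacle is bookkeeping. I need to check that every quantity in the recursion (the LOND rejection sets $\widehat{\mathfrak R}^{\pm}$ and the e-LOND rejection counts) depends on $y$ only via $k$. This requires an induction over $i=1,\dots,t$ showing that each online-step decision is a function of the permuted offline indicators. I would write out this induction explicitly.

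\textbf{Step 2: reduction to a quantile.} Fix $k$. Using the last-point score $\mathcal V_t=v(x_t,y_t)$, for $\pi\in\widehat R_t^k$ we have:
\begin{itemize}
\item if $\pi(t)=t$ (including $\pi=\pi_0$), then $V_t^y(\pi)=v(X_t,y)=V_t^y(\pi_0)$, so the indicator in \eqref{eq: deter interval} equals one;
\item if $\pi(t)\neq t$, then $V_t^y(\pi)=v(X_{\pi(t)},Y_{\pi(t)})$, which does not depend on $y$.
\end{itemize}
Therefore
\[
p_t(y)=\frac{|\widehat R_t^k|-|B_t^k|+\#\{\pi\in B_t^k: v(X_t,y)\le v(X_{\pi(t)},Y_{\pi(t)})\}}{|\widehat R_t^k|}.
\]
The condition $p_t(y)>\alpha$ is equivalent to
\[
\#\{\pi\in B^k_t: v_\pi< v(X_t,y)\}<\lceil(1-\alpha)|\widehat R_t^k|\rceil,
\]
up to the usual integer-rounding check, where $v_\pi:=v(X_{\pi(t)},Y_{\pi(t)})$. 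This in turn is equivalent to $v(X_t,y)\le \widehat q_k$, with the quantile taken over $\{v_\pi\}_{\pi\in B_t^k}\cup\{+\infty\}$ at level $\lceil(1-\alpha)|\widehat R^k_t|\rceil/|B^k_t|$. This is the same computation as in the proof of Proposition~\ref{prop: covariate}, which I would invoke directly.

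Finally, intersecting the two resulting sets with the regions $\{y>c_t\}$ and $\{y\le c_t\}$ respectively gives \eqref{eq: evalue prediction set}.
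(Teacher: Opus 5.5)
Your proposal is correct and follows essentially the same route as the paper's proof. Both arguments show that the test point can contribute an imputed label only when it sits in an offline slot, and there it adds the single term $\ind\{\widehat F_t\ge \widehat F_{\pi(j)}\}\ind\{y\le c_t\}$. So the permuted p-values, LOND thresholds, e-values and e-LOND threshold depend on $y$ only through $k=\ind\{y\le c_t\}$, which gives $\widehat R_t(y)=\widehat R_t^{k}$. Both then invert the p-value on each region exactly as in Proposition~\ref{prop: covariate}. Your explicit induction over the online recursion and your worked-out quantile inversion are more careful than the paper's ``following the same arguments'' step, but the argument is the same.
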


\section{Technical proofs}
\label{sec:appendix}

\subsection{Proof of Theorem~\ref{thm: general scc}}
\label{proof: scc}

Recall that $\mathcal{D}_t = (Z_1, Z_2, \ldots, Z_t)$ is the ordered dataset up to time $t$ with the unknown true label $Y_t$. Let $[\mathcal{D}_t] = [Z_1, \ldots, Z_{t-1}, Z_t]$ denote the unordered set of $\mathcal{D}_t$. We denote any realized value  of the unordered set as $[d_t] = [z_1, \ldots, z_{t-1}, z_t]$. 
Here, given $[\cD_t]=[d_t]$, the only randomness lies in which element of $\mathcal{D}_t$ corresponds to which value in $[d_t]$. 
We denote $\Pi_t$ as the entire set of permutations over $\{1,\dots, t\}$. 
For notational simplicity, we also write $S_t(\pi;[d_t])=\cS_t(z_{\pi(1)}, z_{\pi(2)}, \ldots, x_{\pi(t)})$ and $V_t(\pi;[d_t])=\cV_t(z_{\pi(1)}, z_{\pi(2)}, \ldots, z_{\pi(t)})$. 
Given any realized value $[d_t]$, we let $\widehat{\pi}$ denote the random permutation corresponding to the observed data, so that $(Z_1, \ldots, Z_t) = (z_{\widehat{\pi}(1)}, \ldots, z_{\widehat{\pi}(t)})$. The selection decision of the observed data thus obeys 
\$
S_t = \cS_t(Z_1,\dots,Z_{t-1},X_t) =S_t(\widehat{\pi}; [d_t]).
\$
We prove the three statements in Theorem~\ref{thm: general scc} separately. 

\paragraph{Proof for the case $\Pi_t^\star=\Pi_t$.} 
We first focus on $\cC_{\alpha,t}(\Pi_t)$. It suffices to show that for any realized value $[d_t]$, 
\begin{equation}\label{eq:proofa}
    \mathbb{P}\big(p_t(Y_t;\Pi_t) \le \alpha \mid [\cD_t]=[d_t], S_t=1\big)\le \alpha.
\end{equation}
First, we define the reference set of permutations satisfying the selection rule as
\[
R_t = \left\{ \pi \in \Pi_t : S_t(\pi;[d_t]) = 1 \right\}.
\]
Here, by definition, $R_t$ is fully determined by $[d_t]$, not the ordering of the data.  
Then we claim that conditional on $[\cD_t]=[d_t]$,
\begin{equation}\label{eq: veq}
\big\{V_t^{Y_t}(\pi): \pi \in \hat{R_t}(Y_t;\Pi_t)\big\}=\big\{V_t(\pi;[d_t]): \pi \in R_t\big\},
\end{equation}
where repetition of elements is allowed on both sides and the sets are both unordered. In other words, ~\eqref{eq: veq} means that no matter which permutation $\hat{\pi}$ corresponds to, the left-handed side always only depends on $[d_t]$.
Conditional on $[\cD_t]=[d_t]$, we have
\begin{equation}\label{eq: vt}
    V_t^{Y_t}(\pi)
    =\mathcal V_t(Z_{\pi(1)}, Z_{\pi(2)}, \ldots, Z_{\pi(t)})
    =\mathcal V_t(z_{\pi\circ \hat{\pi}(1)}, z_{\pi\circ \hat{\pi}(2)}, \ldots, z_{\pi\circ \hat{\pi}(t)})
    =V_t(\pi \circ \hat{\pi};[d_t]),
\end{equation}
\begin{equation}\label{eq: st}
    S_t^{Y_t}(\pi)
    =\mathcal S_t(Z_{\pi(1)}, Z_{\pi(2)}, \ldots, X_{\pi(t)})
    =\mathcal S_t(z_{\pi\circ \hat{\pi}(1)}, z_{\pi\circ \hat{\pi}(2)}, \ldots, x_{\pi\circ \hat{\pi}(t)})
    =S_t(\pi \circ \hat{\pi};[d_t]).
\end{equation}

Then we can prove the claim above:  
\begin{align*}
    \{V_t^{Y_t}(\pi): \pi \in \hat{R_t}(Y_t;\Pi_t)\}&\overset{(a)}{=} \{V_t(\pi \circ \hat{\pi};[d_t]): \pi \in \hat{R_t}(Y_t;\Pi_t)\}\\
    &\overset{(b)}{=} \{V_t(\pi \circ \hat{\pi};[d_t]): \pi \in \Pi_t, \ S_t^{Y_t}(\pi)=1\}\\
    &\overset{(c)}{=} \{V_t(\pi \circ \hat{\pi};[d_t]):  \pi \in \Pi_t,\ S_t(\pi \circ \hat{\pi};[d_t])=1\}\\
    &\overset{(d)}{=} \{V_t(\sigma;[d_t]):\sigma \in \Pi_t,\ S_t(\sigma;[d_t])=1\}\\
    &\overset{(e)}{=} \{V_t(\pi;[d_t]): \pi \in R_t\}.
\end{align*}
Above, step (a) is due to~\eqref{eq: vt}, step (b) follows from the definition of $\hat{R_t}(Y_t;\Pi_t)$, step (c) is due to~\eqref{eq: st}, step (d) follows from the group structure of $\Pi_t$,  and step (e) follows from the definition of $R_t$. We thus complete the proof of Equation~\eqref{eq: veq}.

For convenience, we denote the p-value of an arbitrary permutation $\pi$ on $[d_t]$ as
\[
p_t(\pi;[d_t])=\frac{\sum_{\pi' \in {R}_t}\mathds{1}\{V_t(\pi;[d_t]) \leq V_t(\pi';[d_t])\}}{\vert{R}_t \vert}
\]
and drop the dependence on $\Pi_t$. 
Recall that $\pi_0$ denotes the identity permutation, thus $\pi_0\circ \hat{\pi}=\hat{\pi}$. Following Equation~\eqref{eq: veq}, we can also conclude that $\vert \widehat{R}_t(Y_{t};\Pi_t) \vert=\vert R_t \vert$. Therefore, given $[\cD_t]=[d_t]$, we have 
\begin{align}
p_t(Y_t;\Pi_t)&=\frac{\sum_{\pi \in \widehat{R}_t(Y_{t};\Pi_t)}\mathds{1}\{V_t^{Y_{t}}(\pi_0) \leq V_t^{Y_{t}}(\pi)\}}{\vert \widehat{R}_t(Y_{t};\Pi_t) \vert}=\frac{\sum_{\pi \in R_t}\mathds{1}\{V_t(\hat\pi;[d_t]) \leq V_t(\pi ;[d_t])\}}{\vert R_t \vert}=p_t(\hat\pi;[d_t])\label{eq: yt to pihat}.
\end{align}
Returning to the proof of~\eqref{eq:proofa}, we have  
\begin{align*}
&\hspace{1.7em}\mathbb{P}\big(p_t(Y_t;\Pi_t) \le \alpha \biggiven [\cD_t]= [d_t], S_t=1\big)\\
&=\mathbb{P}\big(p_t(Y_t;\Pi_t) \le \alpha \biggiven [\cD_t]= [d_t], S_t(\widehat{\pi}; [d_t])=1\big)\\
&\overset{(a)}{=} \frac{\mathbb{P} \left( p_t (Y_{t};\Pi_t) \le \alpha,\; S_t(\widehat{\pi}; [d_t]) = 1 \given [\cD_t]= [d_t] \right)}{ \mathbb{P} \left( S_t(\widehat{\pi}; [d_t]) = 1 \given [\cD_t]=  [d_t] \right)} \\
&\overset{(b)}{=} \frac{\mathbb{P} \left( p_t (\hat{\pi};[d_t]) \le \alpha,\; S_t(\widehat{\pi}; [d_t]) = 1 \given [\cD_t]= [d_t] \right)}{ \mathbb{P} \left( S_t(\widehat{\pi}; [d_t]) = 1 \given [\cD_t]=  [d_t] \right)} \\
&\overset{(c)}{=} \frac{\frac{1}{\vert \Pi_t \vert} \sum_{\pi \in \Pi_t} \mathds{1} \left\{p_t (\pi;[d_t]) \le \alpha \right\}S_t(\pi;[d_t]) }{\frac{1}{\vert \Pi_t \vert} \sum_{\pi \in \Pi_t} S_t(\pi;[d_t]) } \\
% &\overset{(c)}{=} \frac{ \sum_{\pi \in \Pi_t} \mathds{1} \left\{\frac{ \sum_{\pi'\in \Pi_t} \mathds{1} \left\{ V_t(\pi';[d_t]) \le V_t(\pi;[d_t]) \right\} \cdot S_t(\pi';[d_t]) }{ \sum_{\pi'\in \Pi_t} S(\pi';[d_t]) } \le \alpha \right\}S_t(\pi;[d_t]) }{ \sum_{\pi \in \Pi_t} S_t(\pi;[d_t]) } \\
&\overset{(d)}{=} \frac{1}{ |R_t| } \sum_{\pi \in R_t} \mathds{1} \left\{ \frac{ \sum_{\pi' \in R_t} \mathds{1} \left\{ V_t(\pi;[d_t]) \le V_t(\pi';[d_t]) \right\} }{ |R_t| } \le \alpha \right\} \\
&\overset{(e)}{\le} \alpha.
\end{align*}

Above, step (a) follows from Bayes' rule, step (b) is due to~\eqref{eq: yt to pihat}, step (c) follows from the fact that $\widehat{\pi}$ is drawn uniformly from the full permutation set $\Pi_t$, step (d) plug in the explicit definition of the p-value and reference set $R_t$ and step (e) follows from its definition.

Then by tower property, we conclude~\eqref{eq:proofa}, and thus the coverage guarantee. The validity of the randomized version follows the same idea as our proof for the randomized version with $\Pi_t^\star=\Pi_t^{(M)}$ below, which we omit here. 

\paragraph{Proof for the case $\Pi_t^\star = \Pi_t^{(M)}$.} Our proof  relies on the following lemma. 

\begin{lemma}\label{lemma: perm exchangeability}
% Suppose that at time $t$ we have the full permutation set over indices $[t]$, 
% denoted by $\Pi_t$. 
Let $\widehat{\pi}, \pi^{(1)}, \ldots, \pi^{(M)}$ be random permutations i.i.d.~from $\textnormal{Unif}(\Pi_t)$.  Then the random variables
$ 
(\widehat{\pi},\; \pi^{(1)} \circ \widehat{\pi},\; \ldots,\; \pi^{(M)} \circ \widehat{\pi}) 
$
are exchangeable.
\end{lemma}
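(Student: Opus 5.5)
The plan is to compute the joint law of $(\widehat{\pi},\, \pi^{(1)}\circ\widehat{\pi},\, \ldots,\, \pi^{(M)}\circ\widehat{\pi})$ explicitly and show that it equals the law of $M+1$ i.i.d.\ draws from $\textnormal{Unif}(\Pi_t)$. Such a law is trivially invariant under reordering of the coordinates, which gives exchangeability. The only structure needed is that $\Pi_t$ is a finite group under composition, so that left multiplication by a fixed element is a bijection of $\Pi_t$.

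First, fix any tuple $(\sigma_0, \sigma_1, \ldots, \sigma_M) \in \Pi_t^{M+1}$ and write the event
\[
\{\widehat{\pi} = \sigma_0,\ \pi^{(m)}\circ\widehat{\pi} = \sigma_m \text{ for all } m\in[M]\}.
\]
Because each $\pi^{(m)}$ is a bijection composed with $\widehat{\pi}$, this event coincides with
\[
\{\widehat{\pi}=\sigma_0,\ \pi^{(m)} = \sigma_m\circ\sigma_0^{-1} \text{ for all } m\in[M]\}.
\]
Second, by independence and uniformity of $\widehat{\pi},\pi^{(1)},\ldots,\pi^{(M)}$, the probability of this event is $(1/t!)^{M+1}$, regardless of the tuple. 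So the vector is uniform on $\Pi_t^{M+1}$, i.e.\ its coordinates are i.i.d.\ $\textnormal{Unif}(\Pi_t)$.

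Third, for any permutation $\tau$ of the index set $\{0,1,\ldots,M\}$, the reordered vector has the same uniform law on $\Pi_t^{M+1}$. This is because reordering coordinates is a bijection of $\Pi_t^{M+1}$ and preserves the uniform measure. That is exactly exchangeability.

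There is no serious obstacle here. The one point to handle carefully is the change of variables $\pi^{(m)} = \sigma_m\circ\sigma_0^{-1}$, which must be written with the composition order matching the definition $\pi^{(m)}\circ\widehat{\pi}$; inverting on the wrong side would give a wrong identity. Note also that in the application, $\widehat{\pi}$ is uniform only conditionally on $[\mathcal D_t]=[d_t]$, by exchangeability of the data. The lemma will therefore be invoked conditionally on the bag, where $\widehat{\pi}$ is independent of the Monte-Carlo draws $\{\pi^{(m)}\}$, which are sampled independently of the data.
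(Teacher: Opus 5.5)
Your proposal is correct and follows essentially the same route as the paper: you fix a tuple $(\sigma_0,\ldots,\sigma_M)$, rewrite the event via $\pi^{(m)}=\sigma_m\circ\sigma_0^{-1}$, and use independence and uniformity to get the constant probability $(1/|\Pi_t|)^{M+1}$, which is invariant under reordering. Your observation that the vector is in fact i.i.d.\ $\textnormal{Unif}(\Pi_t)$ states the same computation as a slightly stronger and cleaner conclusion; the paper instead reaches exchangeability by evaluating both sides of the reordered identity.
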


\begin{proof}[Proof of Lemma~\ref{lemma: perm exchangeability}]
To show the exchangeability of $(\widehat{\pi},\; \pi^{(1)} \circ \widehat{\pi},\; \ldots,\; \pi^{(M)} \circ \widehat{\pi})$, it suffices to prove that their joint distribution is invariant 
under any reordering of the indices. Formally, let $[\hat{\pi}^M]=[\widehat{\pi},\; \pi^{(1)} \circ \widehat{\pi},\; \ldots,\; \pi^{(M)} \circ \widehat{\pi}]$ denote the unordered set of these random variables and $[\sigma^M]=[\sigma^{(0)}, \sigma^{(1)}, \ldots, \sigma^{(M)}]$ denote any realized value of this unordered set. Then, it suffices to show that under any permutation $\tau$ of $\{0,1,\ldots,M\}$, we have
\begin{equation}\label{eq: perm exchangeability}
    \mathbb{P} \left( \widehat{\pi} = \sigma^{(0)},\; \pi^{(1)} \circ \widehat{\pi} = \sigma^{(1)},\; \ldots,\; \pi^{(M)} \circ \widehat{\pi}  = \sigma^{(M)} \right)
=
\mathbb{P} \left( \widehat{\pi} = \sigma^{\tau(0)},\; \pi^{(1)} \circ \widehat{\pi} = \sigma^{\tau(1)},\; \ldots,\; \pi^{(M)} \circ \widehat{\pi} = \sigma^{\tau(M)} \right).
\end{equation}

Since $\widehat{\pi}, \pi^{(1)}, \ldots, \pi^{(M)}$ are independently and uniformly sampled from $\Pi_t$, we can directly compute the value of left-handed side:
\begin{align*}
    &\hspace{1.55em}\mathbb{P} \left( \widehat{\pi} = \sigma^{(0)},\; \pi^{(1)} \circ \widehat{\pi} = \sigma^{(1)},\; \ldots,\; \pi^{(M)} \circ \widehat{\pi}  = \sigma^{(M)} \right) \\
    &= \mathbb{P} \left( \widehat{\pi} = \sigma^{(0)},\; \pi^{(1)} = \sigma^{(1)} \circ (\sigma^{(0)})^{-1},\; \ldots,\; \pi^{(M)} = \sigma^{(M)} \circ (\sigma^{(0)})^{-1} \right) \\
    &\overset{(a)}{=} \mathbb{P}(\widehat{\pi} = \sigma^{(0)}) \cdot \prod_{m=1}^M \mathbb{P}\left(\pi^{(m)} =  \sigma^{(m)} \circ (\sigma^{(0)})^{-1} \right)\\
    &\overset{(b)}{=} \left( \frac{1}{|\Pi_t|} \right)^{M+1}.
\end{align*}

Step (a) follows from the independence of $(\widehat{\pi}, \pi^{(1)}, \ldots, \pi^{(M)})$. Step (b) holds because each of 
$\widehat{\pi}, \pi^{(1)}, \ldots, \pi^{(M)}$ 
is a random variable following a uniform distribution on $\Pi_t$, and therefore the probability 
of any one of them taking a specific value in $\Pi_t$ is $1/|\Pi_t|$. 
Moreover, by the group structure of $\Pi_t$, the permutations 
$\{
\sigma^{(0)},\; \sigma^{(1)} \circ (\sigma^{(0)})^{-1},\; \ldots,\; 
\sigma^{(M)} \circ (\sigma^{(0)})^{-1}
\}$ 
are all fixed elements of $\Pi_t$. Thus we can directly compute the probability as step (b).

Similarly, for any permutation $\tau$, we have
\begin{align*}
    &\hspace{1.55em}\mathbb{P} \left( \widehat{\pi} = \sigma^{\tau(0)},\; \pi^{(1)} \circ \widehat{\pi} = \sigma^{\tau(1)},\; \ldots,\; \pi^{(M)} \circ \widehat{\pi} = \sigma^{\tau(M)} \right) \\
    &= \mathbb{P} \left( \widehat{\pi} = \sigma^{\tau(0)},\; \pi^{(1)} =  \sigma^{\tau(1)} \circ (\sigma^{\tau(0)})^{-1},\; \ldots,\; \pi^{(M)} = \sigma^{\tau(M)} \circ (\sigma^{\tau(0)})^{-1} \right) \\
    &= \mathbb{P}(\widehat{\pi} = \sigma^{\tau(0)}) \cdot \prod_{m=1}^M \mathbb{P}\left(\pi^{(m)} = \sigma^{\tau(m)} \circ (\sigma^{\tau(0)})^{-1}\right)\\
    &= \left( \frac{1}{|\Pi_t|} \right)^{M+1}.
\end{align*} 
Therefore, we can conclude~\eqref{eq: perm exchangeability}. Then the joint distribution of $(\widehat{\pi},\; \pi^{(1)} \circ \widehat{\pi},\; \ldots,\; \pi^{(M)} \circ \widehat{\pi})$ is invariant under permutations, and hence these random permutations are exchangeable. We thus complete the proof of Lemma~\ref{lemma: perm exchangeability}.
\end{proof}
\noindent\textbf{Detailed proof for $\cC_{\alpha,t}(\Pi_t^{(M)})$.} Recall that $\Pi_t^{(M)}=(\pi^{(1)}, \dots, \pi^{(M)} )$. Let $[\hat\pi^M]=[\widehat{\pi},\; \pi^{(1)} \circ \widehat{\pi},\; \ldots,\; \pi^{(M)} \circ \widehat{\pi}]$ denote the unordered set of the random permutations and $[\sigma^M]=[\sigma^{(0)}, \sigma^{(1)}, \ldots, \sigma^{(M)}]$ denote any realization of this unordered set.
To prove~\eqref{eq:validity_dtm}, it suffices to show that
\begin{equation}\label{eq:proofb}
    \mathbb{P}\big(p_t(Y_t; \Pi_t^{(M)}) \le \alpha \mid [\mathcal D_t]=[d_t],[\hat\pi^M]=[\sigma^M], S_t=1\big)\le \alpha.
\end{equation}

First, we define the reference set of permutations  in the unordered set $[\sigma^M]$ as
\[
R_t^{M} = \left\{ \sigma \in [\sigma^M] : S_t(\sigma;[d_t]) = 1 \right\}.
\] 
Here, $R_t^{M}$ is fully determined by $[d_t]$ and $[\sigma^M]$. Then, conditional on $[\mathcal D_t]=[d_t]$ and $[\hat\pi^M]=[\sigma^M]$, we have:
\begin{align*}
    \{V_t^{Y_t}(\pi): \pi \in \hat{R}_t(Y_t;\Pi_t^{(M)})\}&= \{V_t(\pi \circ \hat{\pi};[d_t]): \pi \in \hat{R}_t(Y_t;\Pi_t^{(M)})\}\\
    &= \{V_t(\pi \circ \hat{\pi};[d_t]): \pi \in  \Pi_t^{(M)}\cup \{\pi_0\}, \ S_t^{Y_t}(\pi)=1\}\\
    &= \{V_t(\pi \circ \hat{\pi};[d_t]):  \pi \in  \Pi_t^{(M)}\cup \{\pi_0\},\ S_t(\pi \circ \hat{\pi};[d_t])=1\}\\
    &= \{V_t(\sigma;[d_t]):\sigma \in [\sigma^M],\ S_t(\sigma;[d_t])=1\}\\
    &= \{V_t(\pi;[d_t]): \pi \in R_t^M\}.
\end{align*}
For convenience, we define the p-value on $[d_t]$ as
\[
p_t(\pi;[d_t])=\frac{\sum_{\pi' \in R_t^M}\mathds{1}\{V_t(\pi;[d_t]) \leq V_t(\pi' ;[d_t])\}}{\vert R_t^M \vert}.
\] 

Similar to the proof for $\Pi_t^\star = \Pi_t$, conditional on $[\mathcal D_t]=[d_t]$ and $[\hat\pi^M]=[\sigma^M]$, we have:
\begin{align}
p_t(Y_t;\Pi_t^{(M)})&=\frac{\sum_{\pi \in \widehat{R}_t(Y_{t};\Pi_t^{(M)})}\mathds{1}\{V_t^{Y_{t}}(\pi_0) \leq V_t^{Y_{t}}(\pi)\}}{\vert \widehat{R}_t (Y_{t};\Pi_t^{(M)}) \vert}\\ 
% =\frac{\sum_{\pi \in \widehat{R}_t(Y_{t};\Pi_t^{(M)})\cup\{\pi_0\}}\mathds{1}\{V_t^{Y_{t}}(\pi_0) \leq V_t^{Y_{t}}(\pi)\}}{\vert \widehat{R}_t(Y_{t};\Pi_t^{(M)})\cup\{\pi_0\} \vert}\\
&=\frac{\sum_{\pi \in R_t^M}\mathds{1}\{V_t(\hat\pi;[d_t]) \leq V_t(\pi ;[d_t])\}}{\vert R_t^M \vert}=p_t(\hat\pi;[d_t])\label{eq: mc yt to pihat}.
\end{align}

Returning to the proof of~\eqref{eq:proofb}, we have
\begin{align*}
&\hspace{1.7em}\mathbb{P}\big(p_t(Y_t;\Pi_t^{(M)}) \le \alpha \mid [\mathcal D_t]=[d_t], [\hat\pi^M]=[\sigma^M], S_t=1\big)\\
&=\mathbb{P}\big(p_t(Y_t;\Pi_t^{(M)}) \le \alpha \mid [\mathcal D_t]=[d_t], [\hat\pi^M]=[\sigma^M], S_t(\widehat{\pi}; [d_t])=1\big)\\
&\overset{(a)}{=} \frac{\mathbb{P} \left( p_t(Y_{t};\Pi_t^{(M)}) \le \alpha,\; S_t(\widehat{\pi}; [d_t])=1 \mid [\mathcal D_t]=[d_t], [\hat\pi^M]=[\sigma^M] \right)}{ \mathbb{P} \left( S_t(\widehat{\pi}; [d_t])=1 \mid [\mathcal D_t]=[d_t], [\hat\pi^M]=[\sigma^M] \right)} \\
&\overset{(b)}{=} \frac{\mathbb{P} \left( p_t(\hat\pi;[d_t]) \le \alpha,\; S_t(\widehat{\pi}; [d_t])=1 \mid [\mathcal D_t]=[d_t], [\hat\pi^M]=[\sigma^M] \right)}{ \mathbb{P} \left( S_t(\widehat{\pi}; [d_t])=1 \mid [\mathcal D_t]=[d_t], [\hat\pi^M]=[\sigma^M] \right)} \\
&\overset{(c)}{=} \frac{\frac{1}{1+M} \sum_{m=0}^M \mathds{1} \left\{p_t(\sigma^{(m)};[d_t]) \le \alpha \right\}S_t(\sigma^{(m)};[d_t]) }{\frac{1}{1+M} \sum_{m=0}^M S_t({\sigma^{(m)}};[d_t]) } \\
&\overset{(d)}{=} \frac{1}{ |R_t^M|}\sum_{\sigma^{(m)} \in R_t^M} \mathds{1} \left\{ \frac{ \sum_{\sigma^{(n)} \in R_t^M} \mathds{1} \left\{ V_t(\sigma^{(n)};[d_t]) \le V_t(\sigma^{(m)};[d_t]) \right\} }{ |R_t^M| } \le \alpha \right\} \\
&\overset{(e)}{\le} \alpha.
\end{align*}

Above, step (a) follows from Bayes' rule, step (b) is due to~\eqref{eq: mc yt to pihat}, step (c) follows from the fact that $(\widehat{\pi},\; \pi^{(1)} \circ \widehat{\pi},\; \ldots,\; \pi^{(M)} \circ \widehat{\pi})$ are exchangeable, step (d) plug in the explicit definition of the p-value and reference set $R_t^M$ and step (e) follows from its definition.

Then by tower property, we conclude~\eqref{eq:proofb}, and thus~\eqref{eq:validity_dtm}.

\paragraph{Proof of $\cC_{\alpha,t}^{\textnormal{rand}}(\Pi_t^{(M)})$.} 
The arguments here are similar to those in proof of $\cC_{\alpha,t}(\Pi_t^{(M)})$. Our goal is to prove that
\begin{equation}\label{eq:proofc}
    \mathbb{P}\big(p_t^\text{rand}(Y_t;\Pi_t^{(M)}) \le \alpha \mid [\mathcal D_t]=[d_t],[\hat\pi^M]=[\sigma^M], S_t=1\big)\le \alpha.
\end{equation}

Here, we follow the same form of $R_t^M$ as proof (b). Thus, conditional on $[\mathcal D_t]=[d_t]$ and $[\hat\pi^M]=[\sigma^M]$, we have:
\begin{align*}
    \{V_t^{Y_t}(\pi): \pi \in \hat{R}_t(Y_t; \Pi_t^{(M)})\}
    &= \{V_t(\pi;[d_t]): \pi \in R_t^M\}.
\end{align*}
Then we define the randomized p-value on $[d_t]$ as: 
\[
p_t^\text{rand}(\pi;[d_t])=\frac{\sum_{\pi' \in R_t^M}\mathds{1}\{V_t(\pi;[d_t]) < V_t(\pi' ;[d_t])\}+U_t \cdot \sum_{\pi' \in R_t^M}\mathds{1}\{V_t(\pi;[d_t]) = V_t(\pi' ;[d_t])\}}{\vert R_t^M \vert}.
\]
Similar to the proof for $\cC_{\alpha,t}( \Pi_t^{(M)})$, we have
\begin{align}
p_t^\text{rand}(Y_t; \Pi_t^{(M)})=p_t^\text{rand}(\hat\pi;[d_t])\label{eq: rand yt to pihat}.
\end{align}
Returning to the proof of~\eqref{eq:proofc}, we have
\begin{align*}
&\hspace{1.7em}\mathbb{P}\big(p_t^{\text{rand}}(Y_t; \Pi_t^{(M)}) \le \alpha \mid [\mathcal D_t]=[d_t], [\hat\pi^M]=[\sigma^M], S_t=1\big)\\
&=\mathbb{P}\big(p_t^{\text{rand}}(Y_t; \Pi_t^{(M)}) \le \alpha \mid [\mathcal D_t]=[d_t], [\hat\pi^M]=[\sigma^M], S_t(\widehat{\pi}; [d_t])=1\big)\\
&\overset{(a)}{=} \frac{\mathbb{P} \left( p_t^{\text{rand}}(Y_{t}; \Pi_t^{(M)}) \le \alpha,\; S_t(\widehat{\pi}; [d_t])=1 \mid [\mathcal D_t]=[d_t], [\hat\pi^M]=[\sigma^M] \right)}{ \mathbb{P} \left( S_t(\widehat{\pi}; [d_t])=1 \mid [\mathcal D_t]=[d_t], [\hat\pi^M]=[\sigma^M] \right)} \\
&\overset{(b)}{=} \frac{\mathbb{P} \left( p_t^{\text{rand}}(\hat\pi;[d_t]) \le \alpha,\; S_t(\widehat{\pi}; [d_t])=1 \mid [\mathcal D_t]=[d_t], [\hat\pi^M]=[\sigma^M] \right)}{ \mathbb{P} \left( S_t(\widehat{\pi}; [d_t])=1 \mid [\mathcal D_t]=[d_t], [\hat\pi^M]=[\sigma^M] \right)} \\
&\overset{(c)}{=} \frac{ \sum_{\sigma^{(m)} \in R_t^M}\mathbb{P} \left\{ \frac{ \sum_{\sigma^{(n)} \in R_t^M} \mathds{1} \left\{ V_t(\sigma^{(n)};[d_t]) < V_t(\sigma^{(m)};[d_t]) \right\}+U_t \cdot \sum_{\sigma^{(n)} \in R_t^M} \mathds{1} \left\{ V_t(\sigma^{(n)};[d_t]) = V_t(\sigma^{(m)};[d_t]) \right\} }{ |R_t^M| } \le \alpha \mid [d_t], [\sigma^M]\right\} }{ |R_t^M| }\\
&\overset{(d)}{=} \alpha.
\end{align*}

Above, step (a), (b), (c) follow the same arguments as those in the proof of $\cC_{\alpha,t}(\Pi_t^{(M)})$, and step (d) follows from a standard result for randomized p-value~\citep[Proposition 2.4]{vovk2005algorithmic}.

Then similarly, by tower property, we conclude~\eqref{eq:proofc}, and thus~\eqref{eq:validity_rand}. 

\subsection{Proof of Theorem~\ref{thm: offline scc}}
\label{proof: offline scc}

For notational simplicity, in this proof we write $n=n_{\mathrm{off}}$. We define the unordered ``bag" of the complete data $[\mathcal{D}_t]=[Z_{-n+1}, \dots, Z_t]$ and its realized values $[d_t]=[z_{-n+1}, \dots, z_t]$. Recall that $\widehat{\pi}$ denote the random permutation corresponding to the observed data, so that $(Z_1, \ldots, Z_t) = (z_{\widehat{\pi}(1)}, \ldots, z_{\widehat{\pi}(t)})$. Let $[\hat{\pi}^M]=[\widehat{\pi},\; \pi^{(1)} \circ \widehat{\pi},\; \ldots,\; \pi^{(M)} \circ \widehat{\pi}]$ denote the unordered set of random permutations over indices $\{-n+1, \dots, t\}$ and $[\sigma^{(0)}, \sigma^{(1)}, \ldots, \sigma^{(M)}]$ denote its realization. We also write $S_t(\pi;[d_t])=\cS_t(z_{\pi(-n+1)}, z_{\pi(2)}, \ldots, x_{\pi(t)})$ and $V_t(\pi;[d_t])=\cV(z_{\pi(-n+1)}, z_{\pi(2)}, \ldots, z_{\pi(t)})$. The selection decision of the observed data thus obeys 
\$
S_t = \cS_t(Z_{-n+1},\dots,Z_{t-1},X_t) =S_t(\widehat{\pi}; [d_t]).
\$

To prove~\eqref{eq:thm offline dtm}, it suffices to show that
\begin{equation}\label{eq:proof off}
    \mathbb{P}\big(p_t^\text{off}(Y_t) \le \alpha \mid [\mathcal D_t]=[d_t], [\hat\pi^M]=[\sigma^M], S_t=1\big)\le \alpha.
\end{equation}

First, we define the reference set of permutations satisfying the selection rule in the unordered set $[\tau^M]$ as
\[
R_{t}^{\text{off}} =  \left\{ \sigma \in [\sigma^M] : S_t(\sigma;[d_t]) = 1 \right\}.
\] 
Here, $R_{t}^{\text{off}}$ is fully determined by $[d_t]$ and $[\sigma^M]$. Similar to the arguments in proof of (b) of Theorem 1, we have that conditional on $[\mathcal D_t]=[d_t]$ and $[\hat\pi^M]=[\sigma^M]$,
\begin{align*}
    \{V_t^{Y_t}(\pi): \pi \in \hat{R}_t^\text{off}(Y_t)\}
    &= \{V_t(\pi;[d_t]): \pi \in R_t^\text{off}\}.
\end{align*}

With this claim, we have
\begin{align}
p_t^\text{off}(Y_t)=p_t^\text{off}(\hat\pi;[d_t])\label{eq: off yt to pihat}, \quad \text{where} \quad p_t^\text{off}(\pi;[d_t])=\frac{\sum_{\pi' \in R_t^\text{off}}\mathds{1}\{V_t(\pi;[d_t]) \leq V_t(\pi' ;[d_t])\}}{\vert R_t^\text{off} \vert}.
\end{align}

Returning to the proof of~\eqref{eq:proof off}, we have
\begin{align*}
&\hspace{1.7em}\mathbb{P}\big(p_t^{\text{off}}(Y_t) \le \alpha \mid [\mathcal D_t]=[d_t], [\hat\pi^M]=[\sigma^M], S_t=1\big)\\
&=\mathbb{P}\big(p_t^{\text{off}}(Y_t) \le \alpha \mid [\mathcal D_t]=[d_t], [\hat\pi^M]=[\sigma^M], S_t(\widehat{\pi}; [d_t])=1\big)\\
&= \frac{\mathbb{P} \left( p_t^{\text{off}}(Y_{t}) \le \alpha,\; S_t(\widehat{\pi}; [d_t])=1 \mid [\mathcal D_t]=[d_t], [\hat\pi^M]=[\sigma^M] \right)}{ \mathbb{P} \left( S_t(\widehat{\pi}; [d_t])=1 \mid [\mathcal D_t]=[d_t], [\hat\pi^M]=[\sigma^M] \right)} \\
&=\frac{\mathbb{P} \left( p_t^{\text{off}}(\hat\pi;[d_t]) \le \alpha,\; S_t(\widehat{\pi}; [d_t])=1 \mid [\mathcal D_t]=[d_t], [\hat\pi^M]=[\sigma^M] \right)}{ \mathbb{P} \left( S_t(\widehat{\pi}; [d_t])=1 \mid [\mathcal D_t]=[d_t], [\hat\pi^M]=[\sigma^M] \right)} \\
&= \frac{\frac{1}{1+M} \sum_{m=0}^M \mathds{1} \left\{p_t^\text{off}(\sigma^{(m)};[d_t]) \le \alpha \right\}S_t(\sigma^{(m)};[d_t]) }{\frac{1}{1+M} \sum_{m=0}^M S_t({\sigma^{(m)}};[d_t]) } \\
&= \frac{1}{ |R_t^\text{off}|}\sum_{\sigma^{(m)} \in R_t^\text{off}} \mathds{1} \left\{ \frac{ \sum_{\sigma^{(n)} \in R_t^\text{off}} \mathds{1} \left\{ V_t(\sigma^{(n)};[d_t]) \le V_t(\sigma^{(m)};[d_t]) \right\} }{ |R_t^\text{off}| } \le \alpha \right\} \\
& \le \alpha.
\end{align*}

Above, each step follows the same idea as proof (b) in Theorem 1.

Then by tower property, we conclude~\eqref{eq:proof off}, and thus~\eqref{eq:thm offline dtm}.

\subsection{Proof of Theorem~\ref{thm: fcr}}
\label{proof: fcr}
First, we recall the construction of the prediction set for FCR control. The taxonomy is defined as $\mathfrak{S} = \{(s_1, \ldots, s_t)\}$, where $s_1 = \mathcal{S}(X_1)$, $s_2 = \mathcal{S}(Z_1, X_2)$, $\dots$, $s_t = \mathcal{S}(Z_1, \ldots, Z_{t-1}, X_t)$. Recall that $\Pi_t^{(M)}=(\pi^{(1)}, \dots, \pi^{(M)} )$ are the random permutations uniformly drawn from the full permutation set over the indices $\{1, \dots, t\}$. The reference set is thus given by
\begin{equation*}
\widehat{R}_t^\text{str}(y) = \{\pi \in \Pi_t^{(M)} : S_1^y(\pi) = s_1,\ \ldots,\ S_t^y(\pi) = s_t\}\cup\{\pi_0\}.
\end{equation*}
Substituting this reference set into our framework, the resulting prediction set is
\begin{equation}\label{eq: fcr prediction set}    \widehat{\mathcal{C}}_{\alpha,t}^\text{str} = \left\{y \in \mathcal{Y} : p_t^\text{str}(y) > \alpha\right\}, \quad \text{where}\quad p_t^\text{str}(y)=\frac{\sum_{\pi \in \widehat{R}_t^\text{str}(y)}\mathds{1}\{V_t^y(\pi_0) \leq V_t^y(\pi)\}}{\vert \widehat{R}_t^\text{str}(y) \vert}.
\end{equation}

Besides, we follow the notations of Appendix~\ref{proof: scc}.  We define the unordered set of the data $[\mathcal{D}_t]=[Z_{1}, \dots, Z_t]$ and its realized values $[d_t]=[z_{1}, \dots, z_t]$. Given any realized value $[d_t]$, we let $\widehat{\pi}$ denote the random permutation corresponding to the observed data, so that $(Z_1, \ldots, Z_t) = (z_{\widehat{\pi}(1)}, \ldots, z_{\widehat{\pi}(t)})$. The selection decision of the observed data thus obeys 
$
S_t =S_t(\widehat{\pi}; [d_t]).
$ Let $[\hat{\pi}^M]=[\widehat{\pi},\; \pi^{(1)} \circ \widehat{\pi},\; \ldots,\; \pi^{(M)} \circ \widehat{\pi}]$ denote the unordered set of the random permutations over the  indices $\{1, \dots, t\}$ and $[\sigma^{(0)}, \sigma^{(1)}, \ldots, \sigma^{(M)}]$ denote its realization. we also write $S_t(\pi;[d_t])=\cS_t(z_{\pi(1)}, z_{\pi(2)}, \ldots, x_{\pi(t)})$ and $V_t(\pi;[d_t])=\cV_t(z_{\pi(1)}, z_{\pi(2)}, \ldots, z_{\pi(t)})$. 

\begin{lemma}
    The prediction set defined in~\eqref{eq: fcr prediction set} obeys strong selection-conditional coverage: for any $t \geq1$ and any $(s_1, \dots, s_{t-1})\in \{0,1\}^{t-1}$, we have
    \begin{equation}\label{eq: strong scc}
        \mathbb{P}(Y_t \in \widehat{\mathcal C}_{\alpha, t}^\text{str}\mid S_1=s_1, \dots, S_{t-1}=s_{t-1},S_t=1)\geq 1-\alpha.
    \end{equation}
\end{lemma}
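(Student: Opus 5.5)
We follow the proof of Theorem~\ref{thm: general scc} for the case $\Pi_t^\star=\Pi_t^{(M)}$, replacing the single selection event by the whole trajectory event. Fix $(s_1,\dots,s_{t-1})$ and set $s_t=1$. Write $S_i(\pi;[d_t])=\cS_i(z_{\pi(1)},\dots,z_{\pi(i-1)},x_{\pi(i)})$ for $i\le t$; this depends only on the first $i$ coordinates of $\pi$. Define the trajectory indicator
\[
A(\pi;[d_t])=\prod_{i=1}^t \ind\{S_i(\pi;[d_t])=s_i\}.
\]
Since $S_i=S_i(\widehat\pi;[d_t])$ for every $i\le t$, the conditioning event $\{S_1=s_1,\dots,S_t=1\}$ equals $\{A(\widehat\pi;[d_t])=1\}$. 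It therefore suffices to prove
\[
\PP\big(p_t^{\text{str}}(Y_t)\le\alpha \mid [\cD_t]=[d_t],[\hat\pi^M]=[\sigma^M],A(\widehat\pi;[d_t])=1\big)\le\alpha
\]
and then apply the tower property.

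\textbf{Step 1: identify the reference set.} As in \eqref{eq: st}, for each $i\le t$ we have $S_i^{Y_t}(\pi)=S_i(\pi\circ\widehat\pi;[d_t])$, because permuting the first $i-1$ coordinates and the $i$th covariate of the observed sequence is the same as composing permutations. One point needs care. The taxonomy $\{(s_1,\dots,s_t)\}$ is built from the realized trajectory, which is a function of $\widehat\pi$. On the conditioning event, however, it equals the fixed vector we conditioned on, so the reference set is well defined given the conditioning.

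With this, the computation in the proof of Theorem~\ref{thm: general scc} carries over with $S_t$ replaced by $A$. Setting $R_t^{\text{str}}=\{\sigma\in[\sigma^M]:A(\sigma;[d_t])=1\}$, we get
\[
\{V_t^{Y_t}(\pi):\pi\in\widehat R_t^{\text{str}}(Y_t)\}=\{V_t(\sigma;[d_t]):\sigma\in R_t^{\text{str}}\}.
\]
Consequently $p_t^{\text{str}}(Y_t)=p(\widehat\pi;[d_t])$, where $p(\cdot;[d_t])$ is the rank p-value computed within $R_t^{\text{str}}$, a set that is fixed given $[d_t]$ and $[\sigma^M]$. The identity $\pi_0$ is always included, and it corresponds to $\widehat\pi\in R_t^{\text{str}}$ on the event.

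\textbf{Step 2: Bayes' rule and exchangeability.} By Lemma~\ref{lemma: perm exchangeability}, conditional on $[d_t]$ and $[\sigma^M]$, the permutation $\widehat\pi$ is uniform over the multiset $[\sigma^M]$. Applying Bayes' rule as in steps (a)--(d) shows that the conditional probability equals the fraction of $\sigma\in R_t^{\text{str}}$ whose within-set rank p-value is at most $\alpha$. This fraction is at most $\alpha$ by the standard deterministic rank argument of step (e).

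\textbf{Main obstacle.} The delicate step is justifying that conditioning on the past trajectory $(s_1,\dots,s_{t-1})$ is compatible with a taxonomy defined from the data itself. This requires checking that, on the event $A(\widehat\pi;[d_t])=1$, the data-dependent taxonomy coincides with the fixed one, so that $R_t^{\text{str}}$ depends only on $([d_t],[\sigma^M])$ and the given $s$. Once this is settled, the rest is a verbatim adaptation of the earlier proof.
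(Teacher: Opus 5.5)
Your proposal is correct and matches the paper's proof step for step. You condition on $[\mathcal D_t]=[d_t]$ and $[\hat\pi^M]=[\sigma^M]$, identify the scores over $\widehat R_t^{\text{str}}(Y_t)$ with those over the fixed set $R_t^{\text{str}}\subseteq[\sigma^M]$ of permutations reproducing the whole trajectory, reduce $p_t^{\text{str}}(Y_t)$ to the within-set rank p-value of $\widehat\pi$, and conclude via Bayes' rule, Lemma~\ref{lemma: perm exchangeability}, and the deterministic rank bound. Your remark that the data-dependent taxonomy equals the fixed vector $(s_1,\dots,s_{t-1},1)$ on the conditioning event is something the paper uses only implicitly. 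It holds by definition, so it is not a real obstacle.
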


\begin{proof}
    Following the proof before, to prove~\eqref{eq: strong scc}, it suffices to show that
\begin{equation}\label{eq: strong scc proof}
    \mathbb{P}\big(p_t^\text{str}(Y_t) \le \alpha \mid [\mathcal D_t]=[d_t], [\hat\pi^M]=[\sigma^M], S_1=s_1, \dots, S_{t-1}=s_{t-1},S_t=1\big)\le \alpha.
\end{equation}
First, we define the reference set of permutations satisfying the selection rule in the unordered set $[\sigma^M]$ as
\[
R_t^{\text{str}} = \left\{ \sigma \in [\sigma^M] : S_1(\sigma;[d_t]) = s_1, \dots,S_{t-1}(\sigma;[d_t]) = s_{t-1}, S_t(\sigma;[d_t]) = 1 \right\}.
\] 

Here, $R_t^{\text{str}}$ is fully determined by $[d_t]$ and $[\sigma^M]$. Then, conditional on $[\mathcal D_t]=[d_t]$ and $[\hat\pi^M]=[\sigma^M]$, we have:
\begin{align*}
    &\hspace{1.7em}\{V_t^{Y_t}(\pi): \pi \in \hat{R}_t^\text{str}(Y_t)\}\\
    &= \{V_t(\pi \circ \hat{\pi};[d_t]): \pi \in \hat{R}_t^\text{str}(Y_t)\}\\
    &= \{V_t(\pi \circ \hat{\pi};[d_t]): \pi \in \Pi_t^{(M)}\cup \{\pi_0\}, S_1^{Y_t}(\pi) = s_1,\ \ldots,\ S_{t-1}^{Y_t}(\pi) = s_{t-1}, S_t^{Y_t}(\pi)=1\}\\
    &= \{V_t(\pi \circ \hat{\pi};[d_t]):  \pi \in \Pi_t^{(M)}\cup \{\pi_0\},S_1(\pi \circ \hat{\pi};[d_t])=s_1,\dots, S_{t-1}(\pi \circ \hat{\pi};[d_t])=s_{t-1},S_t(\pi \circ \hat{\pi};[d_t])=1\}\\
    &= \{V_t(\sigma;[d_t]):\sigma \in [\sigma^M],S_1(\sigma;[d_t])=s_1,\dots,S_{t-1}(\sigma;[d_t])=s_{t-1}, S_t(\sigma;[d_t])=1\}\\
    &= \{V_t(\pi;[d_t]): \pi \in R_t^\text{str}\}.
\end{align*}
With this claim, following the same arguments in Proof of (b) in Theorem 1, we have that conditional on $[\mathcal D_t]=[d_t]$ and $[\hat\pi^M]=[\sigma^M]$,
\begin{align}
p_t^\text{str}(Y_t)=p_t^\text{str}(\hat\pi;[d_t])\label{eq: off yt to pihat}, \quad \text{where} \quad p_t^\text{str}(\pi;[d_t])=\frac{\sum_{\pi' \in R_t^\text{str}}\mathds{1}\{V_t(\pi;[d_t]) \leq V_t(\pi' ;[d_t])\}}{\vert R_t^\text{str} \vert}.
\end{align}

Returning to the proof of~\eqref{eq: strong scc proof}, we have
\begin{align*}
&\hspace{1.7em}\mathbb{P}\big(p_t^\text{str}(Y_t) \le \alpha \mid [\mathcal D_t]=[d_t], [\hat\pi^M]=[\sigma^M], S_1=s_1, \dots, S_{t-1}=s_{t-1},S_t=1\big)\\
&=\mathbb{P}\big(p_t^{\text{str}}(Y_t) \le \alpha \mid [\mathcal D_t]=[d_t], [\hat\pi^M]=[\sigma^M],S_1(\widehat{\pi}; [d_t])=s_1, \dots, S_{t-1}(\widehat{\pi}; [d_t])=s_{t-1},S_t(\widehat{\pi}; [d_t])=1\big)\\
&= \frac{\mathbb{P} \left( p_t^{\text{str}}(Y_{t}) \le \alpha,\; S_1(\widehat{\pi}; [d_t])=s_1, \dots,S_t(\widehat{\pi}; [d_t])=1 \mid [\mathcal D_t]=[d_t], [\hat\pi^M]=[\sigma^M] \right)}{ \mathbb{P} \left( S_1(\widehat{\pi}; [d_t])=s_1, \dots,S_t(\widehat{\pi}; [d_t])=1 \mid [\mathcal D_t]=[d_t], [\hat\pi^M]=[\sigma^M] \right)} \\
&= \frac{\mathbb{P} \left( p_t^{\text{str}}(\hat\pi;[d_t]) \le \alpha,\; S_1(\widehat{\pi}; [d_t])=s_1, \dots,S_t(\widehat{\pi}; [d_t])=1 \mid [\mathcal D_t]=[d_t], [\hat\pi^M]=[\sigma^M]  \right)}{ \mathbb{P} \left( S_1(\widehat{\pi}; [d_t])=s_1, \dots,S_t(\widehat{\pi}; [d_t])=1 \mid [\mathcal D_t]=[d_t], [\hat\pi^M]=[\sigma^M]  \right)} \\
&= \frac{\frac{1}{1+M} \sum_{m=0}^M \mathds{1} \left\{p_t^{\text{str}}(\sigma^{(m)};[d_t]) \le \alpha \right\}\ind\left\{S_1(\sigma^{(m)};[d_t]) = s_1, \dots, S_t(\sigma^{(m)};[d_t]) = 1 \right\} }{\frac{1}{1+M} \sum_{m=0}^M \ind\left\{S_1(\sigma^{(m)};[d_t]) = s_1, \dots, S_t(\sigma^{(m)};[d_t]) = 1 \right\} } \\
&= \frac{1}{ |R_t^\text{str}|}\sum_{\sigma^{(m)} \in R_t^\text{str}} \mathds{1} \left\{ \frac{ \sum_{\sigma^{(n)} \in R_t^\text{str}} \mathds{1} \left\{ V_t(\sigma^{(n)};[d_t]) \le V_t(\sigma^{(m)};[d_t]) \right\} }{ |R_t^\text{str}| } \le \alpha \right\} \\
&\le \alpha.
\end{align*}

Then by tower property, we conclude~\eqref{eq: strong scc proof}, and thus~\eqref{eq: strong scc}.
\end{proof}

\noindent \textbf{Proof of Theorem~\ref{thm: fcr}.} 
Similar to the proof of~\citet[Proposition 5.2]{sale2025online}, FCR control follows directly from the strong selection-conditional coverage established above. Suppose at time $T$, we have an arbitrary sequence $(s_1, \dots, s_T)$. Then,

\begin{align*}
&\mathbb{E}\!\left[
\frac{\sum_{t=1}^{T} S_t \mathds{1}\{Y_t \notin \widehat{\mathcal{C}}_{\alpha, t}^{\text{str}}\}}
{\sum_{j=1}^{T} S_j}
\,\middle|\, S_1 = s_1, \ldots, S_T = s_T
\right] \\[6pt]
&= \frac{1}{\sum_{j=1}^{T} s_j}
\mathbb{E}\!\left[
\sum_{t=1}^{T} S_t \mathds{1}\{Y_t \notin \widehat{\mathcal{C}}_{\alpha, t}^{\text{str}}\}
\,\middle|\, S_1 = s_1, \ldots, S_T = s_T
\right] \\[6pt]
&= \frac{1}{\sum_{j=1}^{T} s_j}
\mathbb{E}\!\left[
\sum_{t \le T : s_t = 1}
\mathds{1}\{Y_t \notin \widehat{\mathcal{C}}_{\alpha, t}^{\text{str}}\}
\,\middle|\, S_1 = s_1, \ldots, S_T = s_T
\right] \\[6pt]
&= \frac{1}{\sum_{j=1}^{T} s_j}
\sum_{t \le T : s_t = 1}
\mathbb{P}\!\left(
Y_t \notin \widehat{\mathcal{C}}_{\alpha, t}^{\text{str}}
\,\middle|\, S_1 = s_1, \ldots, S_T = s_T
\right) \\[6pt]
&\overset{(a)}{=}
\frac{1}{\sum_{j=1}^{T} s_j}
\sum_{t \le T : s_t = 1}
\mathbb{P}\!\left(
Y_t \notin \widehat{\mathcal{C}}_{\alpha, t}^{\text{str}}
\,\middle|\, S_1 = s_1, \ldots, S_{t-1} = s_{t-1}, S_t = 1
\right) \\[6pt]
&\overset{(b)}{\le}
\frac{1}{\sum_{j=1}^{T} s_j}
\sum_{t \le T : s_t = 1} \alpha \\[6pt]
&= \alpha.
\end{align*}

Above, step (a) follows from our assumption that the selection decisions after time t are independent of whether $Y_t$ is covered by $\widehat{C}_{\alpha,t}$ given $(S_1, \dots, S_{t})$ and step (b) follows from the definition of strong selection-conditional coverage.

Then by tower property, we conclude the FCR control~\eqref{eq:thm fcr}.

\subsection{Proof of Theorem~\ref{thm: most general scc}}
\label{proof: most general scc}

Let $\mathcal D_j=\mathcal D_{\text{calib}}\cup\{Z_{n+j}\}$ and $\mathcal D_j^c=\{X_{n+l}\}_{l \in [m] \backslash \{j\}}$. Similar to Appendix~\ref{proof: scc}, we denote the unordered set $[\mathcal D_j]=[Z_1, \dots, Z_n, Z_{n+j}]$ and its realized values $[ d_j]=[z_1, \dots, z_n, z_{n+j}]$. Here $z_i=(x_i, y_i)$. Given any realized value $[d_j]$, we let $\widehat{\pi}$ denote the random permutation corresponding to the observed data, so that $(Z_1, \ldots, Z_n, Z_{n+j}) = (z_{\widehat{\pi}(1)}, \ldots, z_{\widehat{\pi}(n)},z_{\widehat{\pi}(n+j)})$. The selection subset of the observed data thus obeys 
\$
\hat{S} = \mathcal{S}(\mathcal{D}_{\mathrm{calib}}, \mathcal{D}_{\mathrm{test}}) =S^{\widehat{\pi}}([d_j]).
\$
Recall that $\Pi_{n+j}^{(M)}=(\pi^{(1)}, \dots, \pi^{(M)} )$ are the random permutations uniformly drawn from the full permutation set over the indices $\{1, \dots, t\}$. Let $[\hat{\pi}^M]=[\widehat{\pi},\; \pi^{(1)} \circ \widehat{\pi},\; \ldots,\; \pi^{(M)} \circ \widehat{\pi}]$ denote the unordered set of the random permutations, and its realized permutations are $[\sigma^M]=[\sigma^{(0)}, \sigma^{(1)}, \ldots, \sigma^{(M)}]$. Then we denote $d_\text{calib}^\pi=(z_{\pi(1)}, \dots, z_{\pi(n)})$ and $d_\text{test}^\pi=(X_{n+1}, \dots, x_{\pi(n+j)}, \dots, X_{n+m})$. We also write the selection subset of permuted data as $S^\pi=\cS(d_\text{calib}^\pi,d_\text{test}^\pi )$ and $V_{n+j}(\pi;[d_j])=\cV(z_{\pi(1)}, \dots, z_{\pi(n)}, z_{\pi(n+j)})$.

To prove~\eqref{eq: thm most general dtm}, it suffices to show that
\begin{equation}\label{eq: proof most general}
    \mathbb{P}\big(p_{n+j}(Y_{n+j}) \le \alpha \mid j \in \widehat{S},[\mathcal D_j]=[d_j],\mathcal D_j^c, [\hat{\pi}^M]=[\sigma^M]\big)\le \alpha.
\end{equation}

We define the reference set of permutations satisfying the selection rule in the unordered set $[\sigma^M]$ as
\[
R_{n+j}^{{M}} = \left\{ \sigma \in [\sigma^M] : j \in S^\sigma \right\}.
\] 
Here, $R_{n+j}^{M}$ is fully determined by $[d_j]$, $\mathcal D_j^c$ and $[\sigma^M]$. Then, conditional on $[\mathcal D_t]=[d_t]$, $\mathcal D_j^c$ and $[\hat\pi^M]=[\sigma^M]$, we have:
\begin{align*}
    \{V_{n+j}^{Y_{n+j}}(\pi): \pi \in \hat{R}_{n+j}^M(Y_{n+j})\}&= \{V_{n+j}(\pi \circ \hat{\pi};[d_j]): \pi \in \hat{R}_{n+j}(Y_{n+j})\}\\
    &= \{V_{n+j}(\pi \circ \hat{\pi};[d_j]): \pi \in \Pi_{n+j}^{(M)}\cup \{\pi_0\}, \ j\in \hat S^\pi(Y_{n+j})\}\\
    &= \{V_{n+j}(\pi \circ \hat{\pi};[d_j]):  \pi \in \Pi_{n+j}^{(M)}\cup \{\pi_0\},\ j \in S^{\pi \circ \hat\pi}\}\\
    &= \{V_{n+j}(\sigma;[d_j]):\sigma \in [\sigma^M],\ j \in S^{\sigma}\}\\
    &= \{V_{n+j}(\pi;[d_j]): \pi \in R_{n+j}^M\}.
\end{align*}

With this claim, following the same arguments in Proof of (b) in Theorem 1, we have that conditional on $[\mathcal D_t]=[d_t]$, $\mathcal D_j^c$ and $[\hat\pi^M]=[\sigma^M]$
\begin{align}
p_{n+j}(Y_{n+j})=p_{n+j}(\hat\pi;[d_j])\label{eq: off yt to pihat}, \quad \text{where} \quad p_{n+j}(\pi;[d_j])=\frac{\sum_{\pi' \in R_{n+j}^M}\mathds{1}\{V_{n+j}(\pi;[d_j]) \leq V_{n+j}(\pi' ;[d_j])\}}{\vert R_{n+j}^M \vert}.
\end{align}

Returning to the proof of~\eqref{eq: proof most general}, we have that
\begin{align*}
&\hspace{1.7em}\mathbb{P}\big(p_{n+j}(Y_{n+j}) \le \alpha \mid j \in \widehat{S},[\mathcal D_j]=[d_j],\mathcal D_j^c, [\hat{\pi}^M]=[\sigma^M]\big)\\
&=\mathbb{P}\big(p_{n+j}(Y_{n+j}) \le \alpha \mid j \in S^{\hat\pi},[\mathcal D_j]=[d_j],\mathcal D_j^c, [\hat{\pi}^M]=[\sigma^M]\big)\\
&= \frac{\mathbb{P}\big(p_{n+j}(Y_{n+j}) \le \alpha , j \in S^{\hat\pi}\mid [\mathcal D_j]=[d_j],\mathcal D_j^c, [\hat{\pi}^M]=[\sigma^M]\big)}{ \mathbb{P} \left( j \in S^{\hat\pi} \mid [\mathcal D_j]=[d_j],\mathcal D_j^c, [\hat{\pi}^M]=[\sigma^M] \right)} \\
&= \frac{\mathbb{P}\big(p_{n+j}(\hat\pi;[d_j]) \le \alpha , j \in S^{\hat\pi}\mid [\mathcal D_j]=[d_j],\mathcal D_j^c, [\hat{\pi}^M]=[\sigma^M]\big)}{ \mathbb{P} \left( j \in S^{\hat\pi} \mid [\mathcal D_j]=[d_j],\mathcal D_j^c, [\hat{\pi}^M]=[\sigma^M] \right)} \\
&= \frac{\frac{1}{1+M} \sum_{m=0}^M \mathds{1} \left\{p_{n+j}(\sigma^{(m)};[d_j]) \le \alpha \right\}\cdot \mathds 1\{j \in S^{\sigma^{(m)}}  \}}{\frac{1}{1+M} \sum_{m=0}^M \mathds 1\{j \in S^{\sigma^{(m)}}  \} } \\
&= \frac{1}{|R_{n+j}^M|} \sum_{\sigma^{(m)} \in R_{n+j}^M} \mathds{1} \left\{ \frac{ \sum_{\sigma^{(n)} \in R_{n+j}^M} \mathds{1} \left\{ V_{n+j}(\sigma^{(n)};[d_j]) \le V_{n+j}(\sigma^{(m)};[d_j]) \right\} }{ |R_{n+j}^M| } \le \alpha \right\} \\
&\le \alpha.
\end{align*}

Then by tower property, we conclude~\eqref{eq: proof most general}, and thus~\eqref{eq: thm most general dtm}.

\subsection{Proof of Proposition~\ref{prop: covariate}}
\label{proof: prop covariate}

First, recall the prediction set in the general framework: $\widehat{\mathcal{C}}_{\alpha,t}(\Pi_t^{(M)})
=\{y\in\mathcal{Y}: p_t(y;\Pi_t^{(M)}) > \alpha\}.$
In the covariate-dependent setting, the definition of the p-value can be further simplified by noting that the selection rule does not depend on the candidate value $y$. Let $\widehat{R}_t=\{\pi\in\Pi_t^{(M)}:\ S_t(\pi)=1\}\cup \{\pi_0\}, \ B_t=\{\pi\in\Pi_t^{(M)}:\ S_t(\pi)=1,\ \pi(t)\neq t\},$
and write \(V_t^y(\pi)=v(X_{\pi(t)},Y_{\pi(t)})\) and $S_t^y(\pi)=S_t(\pi)$.
Then the \(p\)-value can be expanded in a more explicit form as follows:
\[
\begin{aligned}
p_t(y;\Pi_t^{(M)}) 
&= \frac{\sum_{\pi \in \widehat{R}_t}\mathds{1}\{V_t^y(\pi_0) \leq V_t^y(\pi)\}}{\vert \widehat{R}_t \vert}\\
&= \frac{1 + \sum_{\pi \in \Pi_t^{(M)}} \mathds{1}\{V_t^y(\pi_0) \leq V_t^y(\pi)\} S_t(\pi)}
        {1 + \sum_{\pi \in \Pi_t^{(M)}} S_t(\pi)} \\
&= \frac{1 + \sum_{\pi \in \Pi_t^{(M)}} \mathds{1}\{v(X_t, y) \leq v(X_{\pi(t)}, Y_{\pi(t)})\} S_t(\pi)}
        {1 + \sum_{\pi \in \Pi_t^{(M)}} S_t(\pi)} \\
&= \frac{1 + \sum_{\pi \in \Pi_t^{(M)}} \mathds{1}\{ \pi(t) = t \} S_t(\pi)}
        {1 + \sum_{\pi \in \Pi_t^{(M)}} S_t(\pi)}
   + \frac{\sum_{\pi \in \Pi_t^{(M)},\, \pi(t)\neq t}
        \mathds{1}\{ v(X_t, y) \leq v(X_{\pi(t)}, Y_{\pi(t)})\} S_t(\pi)}
        {1 + \sum_{\pi \in \Pi_t^{(M)}} S_t(\pi)} \\
&= \frac{ |\widehat{R}_t| - |B_t|}{ |\widehat{R}_t|} 
   + \frac{\sum_{\pi \in B_t} \mathds{1}\{ v(X_t, y) \leq v(X_{\pi(t)}, Y_{\pi(t)})\}}
          { |\widehat{R}_t|}.
\end{aligned}
\]

The prediction set $\widehat{\mathcal{C}}_{\alpha,t}(\Pi_t^{(M)})$ consists of all $y$ such that $p_t(y;\Pi_t^{(M)}) \geq \alpha$. Expanding the inequality, we obtain

\[
\begin{aligned}
p_t(y;\Pi_t^{(M)})  > \alpha &\Leftrightarrow \frac{\vert \widehat{R}_t \vert-\vert B_t \vert}{\vert \widehat{R}_t \vert} + \frac{\sum_{\pi \in B_t}\ind\{ v(X_t,y) \leq v(X_{\pi(t)},Y_{\pi(t)})\}}{\vert \widehat{R}_t \vert} > \alpha \\
& \Leftrightarrow \sum_{\pi \in B_t}\ind\{ v(X_t,y) \leq v(X_{\pi(t)},Y_{\pi(t)})\} > (\alpha-1)\cdot \vert \widehat{R}_t \vert+\vert B_t \vert\\
& \Leftrightarrow \sum_{\pi \in B_t}\ind\{ v(X_t,y) \leq v(X_{\pi(t)},Y_{\pi(t)})\} \geq \lfloor (\alpha-1)\cdot \vert \widehat{R}_t \vert+\vert B_t \vert \rfloor +1.\\
\end{aligned}
\]

Let $K := \lfloor (\alpha-1)\cdot \vert \widehat{R}_t \vert+\vert B_t \vert \rfloor +1$. Then the above condition requires that $v(X_t, y)$ is no larger than the $(|B_t|-K+1)$-th smallest value among $\{ v(X_{\pi(t)}, Y_{\pi(t)}) : \pi \in B_t \}$. This ordering can be further expressed in terms of a quantile. Formally, the quantile level corresponding to the order statistics is defined as
\[
\begin{aligned}
    \beta &= \frac{|B_t| - K + 1}{|B_t|}\\
    &=\frac{|B_t| - (\lfloor (\alpha-1)\cdot \vert \widehat{R}_t \vert+\vert B_t \vert \rfloor +1) + 1}{|B_t|}\\
    &=\frac{\lceil (1-\alpha) \cdot |\widehat{R}_t| \rceil}{|B_t|}.
\end{aligned}
\]

Therefore, the prediction set can be written as
\[
\widehat{\mathcal{C}}_{\alpha, t}(\Pi_t^{(M)})
   = \left\{ y \in \mathcal{Y} :
        v(X_t, y)
        \le \mathrm{Quantile}\left(\beta;\ 
        \{ v(X_{\pi(t)}, Y_{\pi(t)}) \}_{\pi \in B_t} \cup \{+\infty\}\right)
     \right\}.
\]
This completes the proof.

\textbf{Proof of (b)} Recall that in this setting, the randomized p-value is defined as
\[
\begin{aligned}
&p_t^\text{rand}(y; \Pi_t^{(M)}) \\
&=\frac{U_t \cdot \sum_{\pi \in \hat{R}_t(y; \Pi_t^{(M)})}\mathds{1}\{V_t^y(\pi_0)= V_t^y(\pi)\} +\sum_{\pi \in \hat{R}_t(y; \Pi_t^{(M)})}\mathds{1}\{V_t^y(\pi_0) < V_t^y(\pi)\}  }{|\hat{R}_t(y; \Pi_t^{(M)})|}\\
&= \frac{U_t \cdot \left(1+\sum_{\pi \in \Pi_t^{(M)}}\mathds{1}\{V_t^y(\pi_0)= V_t^y(\pi)\} S_t(\pi)\right)+\sum_{\pi \in \Pi_t^{(M)}}\mathds{1}\{V_t^y(\pi_0) < V_t^y(\pi)\} S_t(\pi)}{1+\sum_{\pi \in \Pi_t^{(M)}}S_t(\pi)} \\
&= \frac{U_t \cdot \left(1+\sum_{\pi \in \Pi_t^{(M)}}\mathds{1}\{v(X_t,y)= v(X_{\pi(t)},Y_{\pi(t)})\} S_t(\pi)\right)+\sum_{\pi \in \Pi_t^{(M)}}\mathds{1}\{v(X_t,y) < v(X_{\pi(t)},Y_{\pi(t)})\} S_t(\pi)}{1+\sum_{\pi \in \Pi_t^{(M)}}S_t(\pi)}.
\end{aligned}
\]
Here, $U_t \sim \mathrm{Unif}[0,1]$. Since that it is possible for $v(X_t,y) = v(X_{\pi(t)},Y_{\pi(t)})$ to occur even when $\pi(t)\neq t$, for notational convenience, we further define
\[
E_R(v(X_t,y)) = \left\{\pi \in \widehat{R}_t : v(X_t,y) = v(X_{\pi(t)},Y_{\pi(t)})\right\}.
\]
where $\widehat{R}_t = \{\pi\in\Pi_t^{(M)}: S_t(\pi)=1\}\cup\{\pi_0\}$, and similarly $B_t = \{\pi\in\Pi_t^{(M)}: S_t(\pi)=1,\ \pi(t)\neq t\}$ as before. The p-value can thus be rewritten as
\[
p_t^\text{rand}(y; \Pi_t^{(M)}) = \frac{U_t \cdot (|E_R(v(X_t,y))|)}{|\widehat{R}_t|} + \frac{\sum_{\pi \in B_t}\mathds{1}\{v(X_t,y) < v(X_{\pi(t)},Y_{\pi(t)})\}}{|\widehat{R}_t|}.
\]

The prediction set $\widehat{\mathcal{C}}^\text{rand}_{\alpha,t}(\Pi_t^{(M)})$ consists of all $y$ such that $p_t^\text{rand}(y; \Pi_t^{(M)}) > \alpha$. Expanding this inequality yields
\[
U_t \cdot (|E_R(v(X_t,y))|) + \sum_{\pi \in B_t}\mathds{1}\{v(X_t,y) < v(X_{\pi(t)},Y_{\pi(t)})\} > \alpha\cdot |\widehat{R}_t|.
\]

Let $v_{(1)} \leq v_{(2)} \leq \cdots \leq v_{(|B_t|)}$ denote the sorted values of $\{v(X_{\pi(t)}, Y_{\pi(t)})\}_{\pi \in B_t}$. Following the proof of (a), we can find that the quantile threshold used in the prediction set of deterministic p-value corresponds to an upper quantile among these ordered values in this setting. Here, we still define 
$K:= \lfloor (\alpha-1)\cdot \vert \widehat{R}_t \vert+\vert B_t \vert \rfloor +1$ and the upper quantile can be denoted by $v_{(|B_t|-K+1)}$. For convenience, we denote $r^* = \min\left\{i: v_{(i)} = v_{(|B_t|-K+1)} \right\}=\min\{ i : v_{(i)} = v_{(\lceil (1-\alpha)\cdot |\widehat{R}_t| \rceil)} \},$
which is the minimum index among all scores equal to the upper quantile. We also set
\[
e_R^* = \#\{\pi \in \widehat{R}_t: v(X_{\pi(t)},Y_{\pi(t)}) = v_{(r^*)}\}, \quad e_B^* = \#\{\pi \in B_t: v(X_{\pi(t)},Y_{\pi(t)}) = v_{(r^*)}\}.
\]

Based on the definition of $v_{(r^*)}$, we can find from the inequality above that if $v(X_t, y) > v_{(r^*)}$, then $p_t(y) < \alpha$ for all $U_t$; if $v(X_t, y) < v_{(r^*)}$, then $p_t(y) > \alpha$ for all $U_t$; and if $v(X_t, y) = v_{(r^*)}$, whether $p_t(y) > \alpha$ depends on the value of $U_t$. In other words, the threshold for constructing the prediction set is given by the upper quantile with a certain probability, and by the lower quantile otherwise. Next, we can compute the exact probability p at which the upper quantile is selected by directly solving the inequality for $U_t$.

Plugging $v(X_t, y) = v_{(r^*)}$ into the threshold yields
\[
p_t^\text{rand}(y; \Pi_t^{(M)}) > \alpha 
\iff 
U_t > \frac{\alpha\cdot |\widehat{R}_t| - (|B_t| - r^* + 1 - e_B^*)}{e_R^*} \equiv p,
\]
where $p \in [0,1]$ is the probability with which the upper quantile threshold is taken. Thus, the randomization mechanism determines whether the threshold is the lower or upper quantile. Specifically,
\[
q =
\begin{cases}
\mathrm{Quantile}\left(\frac{r^*-1}{|B_t|};\ \{v(X_{\pi(t)}, Y_{\pi(t)})\}_{\pi \in B_t} \cup \{+\infty\}\right), & \text{if } U_t \leq p\\
\mathrm{Quantile}\left(\frac{r^*}{|B_t|};\ \{v(X_{\pi(t)}, Y_{\pi(t)})\}_{\pi \in B_t} \cup \{+\infty\}\right), & \text{if } U_t > p
\end{cases}
\]
or, equivalently,
\[
q \sim
\begin{cases}
\mathrm{Quantile}\left(\frac{r^*-1}{|B_t|};\ \{v(X_{\pi(t)}, Y_{\pi(t)})\}_{\pi \in B_t} \cup \{+\infty\}\right), & \text{with probability } p \\
\mathrm{Quantile}\left(\frac{r^*}{|B_t|};\ \{v(X_{\pi(t)}, Y_{\pi(t)})\}_{\pi \in B_t} \cup \{+\infty\}\right), & \text{with probability } 1-p
\end{cases}
\]

Therefore, the final prediction set is given by
\[
\widehat{\mathcal{C}}^\text{rand}_{\alpha,t}(\Pi_t^{(M)})
   = \left\{ y \in \mathcal{Y} :
        v(X_t, y)
        \le q
     \right\}.
\]
This completes the proof.

\subsection{Proof of Proposition~\ref{prop: pvalue threshold}}
\label{proof: prop pvalue}

We consider the cases $y \leq c_t$ and $y > c_t$ separately.  
The key step is to show that for any $y \le c_t$, the corresponding reference set 
$\widehat{R}_t(y)$ equals $\widehat{R}_t^{1}$, and for any $y > c_t$, 
$\widehat{R}_t(y)$ equals $\widehat{R}_t^{0}$. 
In this way, we can invert each case separately into the form of a prediction 
interval and then merge the two results to obtain the final prediction set.

%Recall that $j$ be the unique integer such that $\pi(j)=t$, so that $\widehat F_{\pi(j)} = \widehat F_{t}$.

\textbf{Case 1: $y\leq c_t$}. In this case, for any $\pi\in \Pi_t^{(M)}$, it holds that 
\begin{align}
    p_{t,\pi}^w(y) &= \frac{w_t + \sum_{i=1}^{t-1} w_i \cdot \ind\{\widehat{F}_{\pi(i)} \geq \widehat{F}_{\pi(t)} , Y_{\pi(i)}\leq c_{\pi(i)}\}}{\sum_{i=1}^t w_i}\\
    &=\frac{w_t+ \sum_{1\le i \le t-1, i\neq \pi^{-1}(t)}
    w_i\cdot\mathds{1}\bigl\{\widehat F_{\pi(i)}\ge\widehat F_{\pi(t)},\;Y_{\pi(i)}\le c_{\pi(i)}\bigr\}+  w_{\pi^{-1}(t)}\cdot \mathds{1}\bigl\{\widehat F_{t}\ge\widehat F_{\pi(t)}, Y_{t}\le c_{t}\bigr\}}{\sum_{i=1}^t w_i}\\
    %&=\frac{w_t+ \sum_{\substack{i=1\\i\neq j}}^{t-1}w_i\cdot\mathds{1}\bigl\{\widehat F_{\pi(i)}\ge\widehat F_{\pi(t)},\;Y_{\pi(i)}\le c_{\pi(i)}\bigr\}+  w_j\cdot \mathds{1}\bigl\{\widehat F_{\pi(j)}\ge\widehat F_{\pi(t)}, Y_{\pi(j)}\le c_{\pi(j)}\bigr\}}{\sum_{i=1}^t w_i}\\
    % &=\frac{w_t+ \sum_{\substack{i=1\\i\neq j}}^{t-1}
    % w_i\cdot\mathds{1}\bigl\{\widehat F_{\pi(i)}\ge\widehat F_{\pi(t)},\;Y_{\pi(i)}\le c_{\pi(i)}\bigr\}+  w_j\cdot \mathds{1}\bigl\{\widehat F_{t}\ge\widehat F_{\pi(t)}, y\le c_t\bigr\}}{\sum_{i=1}^t w_i}\\
    &=\frac{w_t+ \sum_{1\le i \le t-1, i\neq \pi^{-1}(t)}
    w_i\cdot\mathds{1}\bigl\{\widehat F_{\pi(i)}\ge\widehat F_{\pi(t)},\;Y_{\pi(i)}\le c_{\pi(i)}\bigr\}+  w_{\pi^{-1}(t)}\cdot \mathds{1}\bigl\{\widehat F_{t}\ge\widehat F_{\pi(t)}, y\le c_{t}\bigr\}}{\sum_{i=1}^t w_i}\\
    % &=\frac{w_t+ \sum_{\substack{i=1\\i\neq j}}^{t-1}
    % w_i\cdot\mathds{1}\bigl\{\widehat F_{\pi(i)}\ge\widehat F_{\pi(t)},\;Y_{\pi(i)}\le c_{\pi(i)}\bigr\}+  w_j\cdot \mathds{1}\bigl\{\widehat F_{t}\ge\widehat F_{\pi(t)}\bigr\}}{\sum_{i=1}^t w_i}\\
    &=\frac{w_t+ \sum_{1\le i \le t-1, i\neq \pi^{-1}(t)}
    w_i\cdot\mathds{1}\bigl\{\widehat F_{\pi(i)}\ge\widehat F_{\pi(t)},\;Y_{\pi(i)}\le c_{\pi(i)}\bigr\}+  w_{\pi^{-1}(t)}\cdot \mathds{1}\bigl\{\widehat F_{t}\ge\widehat F_{\pi(t)}\bigr\}}{\sum_{i=1}^t w_i}\\
    &=p_{t,\pi}^{w,1}. 
\end{align}

Since existing methods such as \textsc{LOND}~\citep{javanmard2015onlinecontrolfalsediscovery}, \textsc{SAFFRON}~\citep{ramdas2019saffronadaptivealgorithmonline}, and \textsc{ADDIS}~\citep{tian2019addisadaptivediscardingalgorithm} only utilize previous p-values and thresholds in the historical information when computing the adaptive threshold, it follows from the above p-value formulation that 
$\alpha_{t,\pi}(y) = \alpha_{t,\pi}^1$ when $y \le c_t$.

Therefore, for any $y \leq c_t$, we have 
\$
    \widehat{R}_t(y)& =\{\pi \in \Pi_t^{(M)}: S_t^y(\pi)=1\}\cup \{\pi_0\}=\{\pi \in \Pi_t^{(M)}: p_{t,\pi}^w(y)\leq \alpha_{t,\pi}(y)\}\cup \{\pi_0\} \\ 
    &=\{\pi \in \Pi_t^{(M)}: p_{t,\pi}^{w,1}\leq \alpha_{t,\pi}^1\}\cup \{\pi_0\}=\widehat{R}_t^1.
\$

\textbf{Case 2: $y > c_t$ }. In this case, for any $\pi\in \Pi_t^{(M)}$, it holds that 
\begin{align}
    p_{t,\pi}^w(y) &= \frac{w_t + \sum_{i=1}^{t-1} w_i \cdot \ind\{\widehat{F}_{\pi(i)} \geq \widehat{F}_{\pi(t)} , Y_{\pi(i)}\leq c_{\pi(i)}\}}{\sum_{i=1}^t w_i}\\
    % &=\frac{w_t+ \sum_{\substack{i=1\\i\neq j}}^{t-1}
    % w_i\cdot\mathds{1}\bigl\{\widehat F_{\pi(i)}\ge\widehat F_{\pi(t)},\;Y_{\pi(i)}\le c_{\pi(i)}\bigr\}+ w_j\cdot \mathds{1}\bigl\{\widehat F_{\pi(j)}\ge\widehat F_{\pi(t)}, Y_{\pi(j)}\le c_{\pi(j)}\bigr\}}{\sum_{i=1}^t w_i}\\
    &=\frac{w_t+ \sum_{1\le i \le t-1, i\neq \pi^{-1}(t)}
    w_i\cdot\mathds{1}\bigl\{\widehat F_{\pi(i)}\ge\widehat F_{\pi(t)},\;Y_{\pi(i)}\le c_{\pi(i)}\bigr\}+  w_{\pi^{-1}(t)}\cdot \mathds{1}\bigl\{\widehat F_{t}\ge\widehat F_{\pi(t)}, Y_{t}\le c_{t}\bigr\}}{\sum_{i=1}^t w_i}\\
    % &=\frac{w_t+ \sum_{\substack{i=1\\i\neq j}}^{t-1}
    % w_i\cdot\mathds{1}\bigl\{\widehat F_{\pi(i)}\ge\widehat F_{\pi(t)},\;Y_{\pi(i)}\le c_{\pi(i)}\bigr\}+ w_j\cdot \mathds{1}\bigl\{\widehat F_{t}\ge\widehat F_{\pi(t)}, y\le c_t\bigr\}}{\sum_{i=1}^t w_i}\\
    &=\frac{w_t+ \sum_{1\le i \le t-1, i\neq \pi^{-1}(t)}
    w_i\cdot\mathds{1}\bigl\{\widehat F_{\pi(i)}\ge\widehat F_{\pi(t)},\;Y_{\pi(i)}\le c_{\pi(i)}\bigr\}+  w_{\pi^{-1}(t)}\cdot \mathds{1}\bigl\{\widehat F_{t}\ge\widehat F_{\pi(t)}, y\le c_{t}\bigr\}}{\sum_{i=1}^t w_i}\\
    % &=\frac{w_t+ \sum_{\substack{i=1\\i\neq j}}^{t-1}
    % w_i\cdot\mathds{1}\bigl\{\widehat F_{\pi(i)}\ge\widehat F_{\pi(t)},\;Y_{\pi(i)}\le c_{\pi(i)}\bigr\}}{\sum_{i=1}^t w_i}\\
    &=\frac{w_t+ \sum_{1\le i \le t-1, i\neq \pi^{-1}(t)}
    w_i\cdot\mathds{1}\bigl\{\widehat F_{\pi(i)}\ge\widehat F_{\pi(t)},\;Y_{\pi(i)}\le c_{\pi(i)}\bigr\}}{\sum_{i=1}^t w_i}\\
    &=p_{t,\pi}^{w,0}. 
\end{align}

Similarly, we have $\alpha_{t,\pi}(y) = \alpha_{t,\pi}^0$ when $y > c_t$.
Therefore, for any $y > c_t$, we have 
\$
    \widehat{R}_t(y)&=\{\pi \in \Pi_t^{(M)}: S_t^y(\pi)=1\}\cup \{\pi_0\}=\{\pi \in \Pi_t^{(M)}: p_{t,\pi}^w(y)\leq \alpha_{t,\pi}(y)\}\cup \{\pi_0\} \\ 
    &=\{\pi \in \Pi_t^{(M)}: p_{t,\pi}^{w,0}\leq \alpha_{t,\pi}^0\}\cup \{\pi_0\}=\widehat{R}_t^0.
\$

As a result, for all $y \leq c_t$, the reference set $\widehat{R}_t(y)$ is identical and equals $\widehat{R}_t^1$, and for all $y > c_t$, the reference set is also identical and equals $\widehat{R}_t^0$. Following the argument in Proposition~\ref{prop: covariate}, we can thus invert the procedure and obtain two prediction subsets corresponding to $y \leq c_t$ and $y > c_t$ respectively, and merge them to form the final prediction set~\eqref{eq: pvalue prediction set}. This completes the proof.

\subsection{Proof of Proposition~\ref{prop: elond threshold}}
\label{proof: prop evalue}

Similar to Appendix~\ref{proof: prop pvalue}, we separately consider the cases $y \leq c_t$ and 
$y > c_t$. We will show that for any $y \leq c_t$, the reference set satisfies 
$\widehat{R}_t(y) = \widehat{R}_t^1$, and for any $y > c_t$, the reference set 
satisfies $\widehat{R}_t(y) = \widehat{R}_t^0$.

%Recall that $j$ be the unique integer such that $\pi(j)=t$, so that $\widehat F_{\pi(j)} = \widehat F_{t}$.

\textbf{Case 1: $y\leq c_t$}. In this case, for any $\pi\in \Pi_t^{(M)}$, it holds that 
\begin{align}
    &\hspace{1.7em}p_{t,\pi}(y)\\
    &=\frac{1+\sum_{i=-n+1}^{0} \ind\{\widehat F_{\pi(i)} \geq\widehat F_{\pi(t)}, Y_{\pi(i)}\leq c_{\pi(i)}\}}
        {n+1}\\
    &=\frac{
      1
      + \sum_{-n+1\leq i\leq 0,i\neq \pi^{-1}(t)}
        \mathds{1}\bigl\{\widehat F_{\pi(i)}\ge\widehat F_{\pi(t)},\;Y_{\pi(i)}\le c_{\pi(i)}\bigr\}
      + 
        \mathds{1}\bigl\{\widehat F_{t}\ge\widehat F_{\pi(t)},\;Y_{t}\le c_{t}\bigr\} \cdot \ind{\bigl\{\pi^{-1}(t) \leq 0\bigr\}}
    }{
      n+1
    }\\
    % &=\frac{
    %   1
    %   + \sum_{\substack{i=-n+1\\i\neq j}}^{0}
    %     \mathds{1}\bigl\{\widehat F_{\pi(i)}\ge\widehat F_{\pi(t)},\;Y_{\pi(i)}\le c_{\pi(i)}\bigr\}
    %   + 
    %     \mathds{1}\bigl\{\widehat F_{\pi(j)}\ge\widehat F_{\pi(t)},\;Y_{\pi(j)}\le c_{\pi(j)}\bigr\} \cdot \ind{\bigl\{j \leq 0\bigr\}}
    % }{
    %   n+1
    % }\\
    % &=\frac{
    %   1
    %   + \sum_{\substack{i=-n+1\\i\neq j}}^{0}
    %     \mathds{1}\bigl\{\widehat F_{\pi(i)}\ge\widehat F_{\pi(t)},\;Y_{\pi(i)}\le c_{\pi(i)}\bigr\}
    %   + 
    %     \mathds{1}\bigl\{\widehat F_{\pi(j)}\ge\widehat F_{\pi(t)},\;y\le c_{t}\bigr\} \cdot \ind{\bigl\{j \leq 0\bigr\}}
    % }{
    %   n+1
    % }\\
    &=\frac{
      1
      + \sum_{-n+1\leq i\leq 0,i\neq \pi^{-1}(t)}
        \mathds{1}\bigl\{\widehat F_{\pi(i)}\ge\widehat F_{\pi(t)},\;Y_{\pi(i)}\le c_{\pi(i)}\bigr\}
      + 
        \mathds{1}\bigl\{\widehat F_{t}\ge\widehat F_{\pi(t)},\;y\le c_{t}\bigr\} \cdot \ind{\bigl\{\pi^{-1}(t) \leq 0\bigr\}}
    }{
      n+1
    }\\ 
    &=\frac{
      1
      + \sum_{-n+1\leq i\leq 0,i\neq \pi^{-1}(t)}
        \mathds{1}\bigl\{\widehat F_{\pi(i)}\ge\widehat F_{\pi(t)},\;Y_{\pi(i)}\le c_{\pi(i)}\bigr\}
      + 
        \mathds{1}\bigl\{\widehat F_{t}\ge\widehat F_{\pi(t)}\bigr\} \cdot \ind{\bigl\{\pi^{-1}(t) \leq 0\bigr\}}
    }{
      n+1
    }\\
    &=p_{t,\pi}^1.
\end{align}

Following the same arguments, we can express $p_{t,\pi}^+(y)$ and $p_{t,\pi}^-(y)$ as 
$p_{t,\pi}^{1,+}$ and $p_{t,\pi}^{1,-}$ when $y \le c_t$. 
Furthermore, by applying these p-values within the \textsc{LOND} algorithm, 
we can compute the corresponding adaptive thresholds 
$\widehat{\alpha}_{t,\pi}^{1}$, $\widehat{\alpha}_{t,\pi}^{1,-}$, 
and $\widehat{\alpha}_{t,\pi}^{1,+}$.

Therefore, for any $y\le c_t$, the e-value and the reference set can be defined by
    \[
    E_{t,\pi}(y)
    =\frac{\ind\!\left\{p_{t,\pi}(y)\le \widehat{\alpha}_t^{\text{LOND},+}(y)\right\}}{\widehat{\alpha}_t^{\text{LOND},-}(y)}=\frac{\ind\!\left\{p_{t,\pi}^1\le \widehat{\alpha}_t^{1,+}\right\}}
        {\widehat{\alpha}_t^{1,-}}=E_{t,\pi}^{1},
    \]
    \[
     \widehat{R}_t(y)=\bigl\{\pi\in\Pi_t^{(M)} : E_{t,\pi}(y) \geq \frac{1}{\widehat{\alpha}_{t,\pi}(y)} \bigr\}\cup\{\pi_0\}=\bigl\{\pi\in\Pi_t^{(M)} : E_{t,\pi}^{1} \geq \frac{1}{\widehat{\alpha}_{t,\pi}^{1}} \bigr\}\cup\{\pi_0\}=\widehat{R}_t^{1}.
    \]

\textbf{Case 2: $y> c_t$}. In this case, for any $\pi\in \Pi_t^{(M)}$, it holds that 
\begin{align}
    &\hspace{1.7em}p_{t,\pi}(y)\\
&=\frac{1+\sum_{i=-n+1}^{0} \ind\{\widehat F_{\pi(i)} \geq\widehat F_{\pi(t)}, Y_{\pi(i)}\leq c_{\pi(i)}\}}
        {n+1}\\
    % &=\frac{
    %   1
    %   + \sum_{\substack{i=-n+1\\i\neq j}}^{0}
    %     \mathds{1}\bigl\{\widehat F_{\pi(i)}\ge\widehat F_{\pi(t)},\;Y_{\pi(i)}\le c_{\pi(i)}\bigr\}
    %   + 
    %     \mathds{1}\bigl\{\widehat F_{\pi(j)}\ge\widehat F_{\pi(t)},\;Y_{\pi(j)}\le c_{\pi(j)}\bigr\} \cdot \ind{\bigl\{j \leq 0\bigr\}}
    % }{
    %   n+1
    % }\\
    &=\frac{
      1
      + \sum_{-n+1\leq i\leq 0,i\neq \pi^{-1}(t)}
        \mathds{1}\bigl\{\widehat F_{\pi(i)}\ge\widehat F_{\pi(t)},\;Y_{\pi(i)}\le c_{\pi(i)}\bigr\}
      + 
        \mathds{1}\bigl\{\widehat F_{t}\ge\widehat F_{\pi(t)},\;Y_{t}\le c_{t}\bigr\} \cdot \ind{\bigl\{\pi^{-1}(t) \leq 0\bigr\}}
    }{
      n+1
    }\\ 
    &=\frac{
      1
      + \sum_{-n+1\leq i\leq 0,i\neq \pi^{-1}(t)}
        \mathds{1}\bigl\{\widehat F_{\pi(i)}\ge\widehat F_{\pi(t)},\;Y_{\pi(i)}\le c_{\pi(i)}\bigr\}
      + 
        \mathds{1}\bigl\{\widehat F_{t}\ge\widehat F_{\pi(t)},\;y\le c_{t}\bigr\} \cdot \ind{\bigl\{\pi^{-1}(t) \leq 0\bigr\}}
    }{
      n+1
    }\\
    % &=\frac{
    %   1
    %   + \sum_{\substack{i=-n+1\\i\neq j}}^{0}
    %     \mathds{1}\bigl\{\widehat F_{\pi(i)}\ge\widehat F_{\pi(t)},\;Y_{\pi(i)}\le c_{\pi(i)}\bigr\}
    % }{
    %   n+1
    % }\\
    &=\frac{
      1
      + \sum_{-n+1\leq i\leq 0,i\neq \pi^{-1}(t)}
        \mathds{1}\bigl\{\widehat F_{\pi(i)}\ge\widehat F_{\pi(t)},\;Y_{\pi(i)}\le c_{\pi(i)}\bigr\}
    }{
      n+1
    }\\
    &=p_{t,\pi}^0.
\end{align}

Following the same arguments, we can express $p_{t,\pi}^+(y)$ and $p_{t,\pi}^-(y)$ as 
$p_{t,\pi}^{0,+}$ and $p_{t,\pi}^{0,-}$ when $y > c_t$. 
Furthermore, by applying these p-values within the \textsc{LOND} algorithm, we can compute the corresponding adaptive thresholds 
$\widehat{\alpha}_{t,\pi}^{0}$, $\widehat{\alpha}_{t,\pi}^{0,-}$, 
and $\widehat{\alpha}_{t,\pi}^{0,+}$.

Therefore, for any $y> c_t$, the e-value and the reference set can be defined by
    \[
    E_{t,\pi}(y)
    =\frac{\ind\!\left\{p_{t,\pi}(y)\le \widehat{\alpha}_t^{\text{LOND},+}(y)\right\}}{\widehat{\alpha}_t^{\text{LOND},-}(y)}=\frac{\ind\!\left\{p_{t,\pi}^0\le \widehat{\alpha}_t^{0,+}\right\}}
        {\widehat{\alpha}_t^{0,-}}=E_{t,\pi}^{0},
    \]
    \[
     \widehat{R}_t(y)=\bigl\{\pi\in\Pi_t^{(M)} : E_{t,\pi}(y) \geq \frac{1}{\widehat{\alpha}_{t,\pi}(y)} \bigr\}\cup\{\pi_0\}=\bigl\{\pi\in\Pi_t^{(M)} : E_{t,\pi}^{0} \geq \frac{1}{\widehat{\alpha}_{t,\pi}^{0}} \bigr\}\cup\{\pi_0\}=\widehat{R}_t^{0}.
    \]

As a result, for all $y \leq c_t$, the reference set $\widehat{R}_t(y)$ is identical and equals $\widehat{R}_t^1$, and for all $y > c_t$, the reference set is also identical and equals $\widehat{R}_t^0$. Following the argument in Proposition~\ref{prop: covariate}, we can thus invert the procedure and obtain two prediction subsets corresponding to $y \leq c_t$ and $y > c_t$ respectively, and merge them to form the final prediction set~\eqref{eq: evalue prediction set}. This completes the proof.

\subsection{Proof of Proposition~\ref{prop: earlier outcome partition}}
\label{proof: prop earlier outcomes}

Recall that our selection rule based on earlier outcomes is defined as $\widehat{S}_t=\ind\{\widehat{\mu}(X_t) \geq \text{Weighted quantile}(1-\alpha, \{Y_i\}_{i=1}^{t-1})\}$. For convenience, we rewrite this selection rule to a p-value form:
\[
p_t=\frac{\sum_{i=1}^{t-1}w_i\ind\{Y_{i}\geq\widehat{\mu}(X_t)\}}{\sum_{i=1}^{t-1}w_i}, \quad \widehat{S}_t=\ind\{p_t \le \alpha\}.
\]
Suppose that $y \in I_j=(\mu_{(j-1)}, \mu_{(j)})$. Let $l = \pi^{-1}(t)$ be the unique integer such that $\pi(l)=t$. Then for each permutation $\pi \in \Pi_t^{(M)}$, it holds that
\begin{align}
    p_{\pi}(y)&=\frac{\sum_{i=1}^{t-1}w_i\ind\{Y_{\pi(i)}\geq\widehat{\mu}(X_{\pi(t)})\}}{\sum_{i=1}^{t-1}w_i}\\
    &=\frac{\sum_{\substack{i=1\\i\neq l}}^{t-1}w_i\ind\{Y_{\pi(i)}\geq\widehat{\mu}({X_{\pi(t)}})\}+w_l \cdot \ind\{y\geq\widehat{\mu}(X_{\pi(t)})\}}{\sum_{i=1}^{t-1}w_i}.
\end{align}

Since $\{\widehat{\mu}_{(i)}\}_{i=1}^{t-1}$ is the sorted sequence of predictions, $\widehat{\mu}({X_{\pi(t)}})$ cannot belong to the open interval $(\mu_{(j-1)}, \mu_{(j)})$. Therefore, we will consider two separate cases: $\widehat{\mu}({X_{\pi(t)}})\le\widehat{\mu}_{(j-1)}$ and $\widehat{\mu}({X_{\pi(t)}})\ge\widehat{\mu}_{(j)}$.

\textbf{Case 1: $\widehat{\mu}({X_{\pi(t)}})\le\widehat{\mu}_{(j-1)}$}. Since we have $y \in (\mu_{(j-1)}, \mu_{(j)})$, $y$ is always greater than $\widehat{\mu}({X_{\pi(t)}})$ in this case. Then for any permutation $\pi \in \Pi_t^{(M)}$, it holds that
\[
p_{\pi}(y)=\frac{\sum_{\substack{i=1\\i\neq l}}^{t-1}w_i\ind\{Y_{\pi(i)}\geq\widehat{\mu}({X_{\pi(t)}})\}+w_l \cdot \ind\{y\geq\widehat{\mu}(X_{\pi(t)})\}}{\sum_{i=1}^{t-1}w_i}=\frac{\sum_{\substack{i=1\\i\neq l}}^{t-1}w_i\ind\{Y_{\pi(i)}\geq\widehat{\mu}({X_{\pi(t)}})\}+w_l }{\sum_{i=1}^{t-1}w_i}=p_{\pi}^1
\]

\textbf{Case 2: $\widehat{\mu}({X_{\pi(t)}})\ge \widehat{\mu}_{(j)}$}. Since we have $y \in (\mu_{(j-1)}, \mu_{(j)})$, $y$ is always smaller than $\widehat{\mu}({X_{\pi(t)}})$ in this case. Then for any permutation $\pi \in \Pi_t^{(M)}$, it holds that
\[
p_{\pi}(y)=\frac{\sum_{\substack{i=1\\i\neq l}}^{t-1}w_i\ind\{Y_{\pi(i)}\geq\widehat{\mu}({X_{\pi(t)}})\}+w_l \cdot \ind\{y\geq\widehat{\mu}(X_{\pi(t)})\}}{\sum_{i=1}^{t-1}w_i}=\frac{\sum_{\substack{i=1\\i\neq l}}^{t-1}w_i\ind\{Y_{\pi(i)}\geq\widehat{\mu}({X_{\pi(t)}})\} }{\sum_{i=1}^{t-1}w_i}=p_{\pi}^0
\]

Therefore, for any $y \in (\mu_{(j-1)}, \mu_{(j)})$, we have \begin{align*}
    \widehat{R}_t(y)&=\{\pi \in \Pi_t^{(M)}: S_t^y(\pi)=1\}\cup\{\pi_0\}\\
    &=\{\pi \in \Pi_t^{(M)}: p_{\pi}(y)\leq \alpha\}\cup\{\pi_0\}\\
    &= \left\{\, \pi \in \Pi_t^{(M)} : \widehat{\mu}({X_{\pi(t)}}) \le\widehat{\mu}_{(j-1)},p_\pi^1 \le \alpha \right\}\cup\left\{\, \pi \in \Pi_t^{(M)} : \widehat{\mu}({X_{\pi(t)}}) \ge\widehat{\mu}_{(j)},p_\pi^0 \le \alpha \right\}\cup\{\pi_0\}\\
    &=\widehat{R}_t^{(j)}
\end{align*}

Similarly, for all the open interval $I_j=(\mu_{(j-1)}, \mu_{(j)}),j=1, \dots, t$, we can construct the reference sets $\widehat{R}_t^{(j)}, , j=1, \dots, t$ respectively. Following the argument in Proposition~\ref{prop: covariate}, we can thus invert the procedure and obtain prediction subsets corresponding to $y \in I_j,j=1, \dots, t$. Then, for each boundary point $y_k = \widehat{\mu}_{(k)}$ $(k=1,\dots,t-1)$, we directly impute $y_k$ and include it in the prediction set if $p_t^{\text{rand}}(y_k;\Pi_t^{(M)}) > \alpha$. Finally, after considering all the partitions of $\mathcal Y$, we can merge them all into a final prediction set as~\eqref{eq: final set for earlier outcomes}. This completes the proof.
\end{document}